
\documentclass{article}

\usepackage{microtype}
\usepackage{graphicx}
\usepackage{subcaption}
\usepackage{booktabs}
\usepackage{multirow}
\usepackage{amsthm}
\usepackage[table]{xcolor}
\usepackage{placeins}
\usepackage{seqsplit}
\usepackage[percent]{overpic}
\usepackage{xcolor} 
\usepackage{tikz}
\usetikzlibrary{calc}

\usepackage{tabularx,booktabs}
\newcolumntype{Y}{>{\centering\arraybackslash}X} 
\usepackage{booktabs} 
\usepackage{hyperref}
\captionsetup[sub]{font=small, labelfont=it, textfont=it, skip=1pt}


\usepackage[preprint]{icml2026}



\usepackage{amsmath}
\usepackage{amssymb}
\usepackage{mathtools}
\usepackage{amsthm}

\usepackage[capitalize,noabbrev]{cleveref}

\theoremstyle{plain}
\newtheorem{theorem}{Theorem}[section]

\newtheorem{lemma}[theorem]{Lemma}
\newtheorem{corollary}[theorem]{Corollary}
\theoremstyle{definition}
\newtheorem{definition}[theorem]{Definition}

\theoremstyle{remark}
\newtheorem{remark}[theorem]{Remark}

\usepackage[textsize=tiny]{todonotes}

\usepackage{multirow}
\usepackage{pifont}
\usepackage{makecell}
\usepackage{caption}
\usepackage{subcaption}
\usepackage[T1]{fontenc}
\usepackage{amssymb}
\usepackage{rotating}
\usepackage{tabularx}
\usepackage{enumitem}
\usepackage[skins,breakable]{tcolorbox}
\usepackage[table]{xcolor}

\usepackage{lipsum} 
\tcbset{
    userstyle/.style={
        enhanced,
        breakable,           
        before upper=\setlength{\parindent}{0pt}, 
        colback=white,
        colframe=black,
        colbacktitle=gray!20,
        coltitle=black,
        rounded corners,
        sharp corners=north,
        boxrule=0.5pt,
        drop shadow=black!50!white,
        attach boxed title to top left={
            xshift=0mm,
            yshift=-0.5mm
        },
        boxed title style={
            rounded corners,
            size=small,
            colback=gray!20
        }
    },
    userstyle_formal/.style={
        enhanced,
        breakable,           
        before upper=\setlength{\parindent}{0pt}, 
        colback=white,
        colframe=black,
        colbacktitle=gray!20,
        coltitle=black,
        rounded corners,
        sharp corners=north,
        boxrule=0.5pt,
        drop shadow=black!50!white,
        attach boxed title to top left={
            xshift=-2mm,
            yshift=-2mm
        },
        boxed title style={
            rounded corners,
            size=small,
            colback=gray!20
        }
    },
    replystyler/.style={
        enhanced,
        breakable,           
        before upper=\setlength{\parindent}{0pt}, 
        colback=red!15,
        colframe=black,
        colbacktitle=red!40,
        coltitle=black,
        boxrule=0.5pt,
        drop shadow=black!50!white,
        rounded corners,
        sharp corners=north,
        attach boxed title to top right={
            xshift=0mm,
            yshift=-0.5mm
        },
        boxed title style={
            rounded corners,
            size=small,
            colback=red!40
        }
    },
    replystyleg/.style={
        enhanced,
        breakable,           
        before upper=\setlength{\parindent}{0pt}, 
        colback=green!15,
        colframe=black,
        colbacktitle=green!30,
        coltitle=black,
        boxrule=0.5pt,
        drop shadow=black!50!white,
        rounded corners,
        sharp corners=north,
        attach boxed title to top right={
            xshift=0mm,
            yshift=-0.5mm
        },
        boxed title style={
            rounded corners,
            size=small,
            colback=green!40
        }
    },
}
\newtcolorbox{prompt_helpfulness}[1][]{
    userstyle,
    title=Prompt for Helpfulness Reward Model,
    #1
}
\newtcolorbox{prompt_overrefusal}[1][]{
    userstyle,
    title=Prompt for Over-Refusal Judgement,
    #1
}
\newtcolorbox{prompt_safety_reasoning}[1][]{
    userstyle,
    title=Prompt for Safety Reasoning Trace,
    #1
}
\newtcolorbox{prompt_math_500}[1][]{
    userstyle,
    title=MATH500,
    #1
}
\newtcolorbox{prompt_gpqa_mmlu}[1][]{
    userstyle,
    title=GPQA and MMLU,
    #1
}
\newtcolorbox{prompt_human_eval}[1][]{
    userstyle,
    title=HumanEval,
    #1
}
\newtcolorbox{prompt_mbpp}[1][]{
    userstyle,
    title=MBPP,
    #1
}
\newtcolorbox{prompt_jailbreak_pair}[1][]{
    userstyle,
    title=Jailbreak Prompt,
    #1
}
\newtcolorbox{response_original}[1][]{
    replystyler,
    title=Original Response,
    #1
}
\newtcolorbox{response_ps_ant}[1][]{
    replystyleg,
    title=PS-ANT Response,
    #1
}
\newtcolorbox{prompt_repeat}[1][]{
    userstyle,
    title=Prompt for Last Round Thinking Content,
    #1
}
\newtcolorbox{response_repeat}[1][]{
    replystyleg,
    title=PS-ANT's Last Round Thinking Content,
    #1
}
%

\icmltitlerunning{SemBind: Binding Diffusion Watermarks to Semantics Against Black-Box Forgery Attacks}

\begin{document}

\twocolumn[
  \icmltitle{SemBind: Binding Diffusion Watermarks to Semantics \\ Against Black-Box Forgery Attacks}



  \icmlsetsymbol{equal}{*}

  \begin{icmlauthorlist}
    \icmlauthor{Xin Zhang}{equal,ustc,akl}
    \icmlauthor{Zijin Yang}{equal,ustc,akl}
    \icmlauthor{Kejiang Chen}{ustc,akl}
    \icmlauthor{Linfeng Ma}{ustc,akl}
    \icmlauthor{Weiming Zhang}{ustc,akl}
    \icmlauthor{Nenghai Yu}{ustc,akl}

  \end{icmlauthorlist}

  \icmlaffiliation{ustc}{School of Cyber Science and Technology, University of Science and Technology of China, Anhui, China}
  \icmlaffiliation{akl}{Anhui Province Key Laboratory of Digital Security, Anhui, China}
  
  \icmlcorrespondingauthor{Xin Zhang}{XinZhang1999@mail.ustc.edu.cn}
  \icmlcorrespondingauthor{Zijin Yang}{bsmhmmlf@mail.ustc.edu.cn}
  \icmlcorrespondingauthor{Kejiang Chen}{chenkj@ustc.edu.cn}
  \icmlcorrespondingauthor{Linfeng Ma}{linfengma@mail.ustc.edu.cn}
  \icmlcorrespondingauthor{Weiming Zhang}{zhangwm@ustc.edu.cn}
  \icmlcorrespondingauthor{Nenghai Yu}{ynh@ustc.edu.cn}

  \icmlkeywords{Machine Learning, ICML}
  \vskip 0.3in
]



\printAffiliationsAndNotice{\icmlEqualContribution}

\begin{abstract}
Latent-based watermarks, integrated into the generation process of latent diffusion models (LDMs), simplify detection and attribution of generated images. However, recent black-box forgery attacks, where an attacker needs at least one watermarked image and black-box access to the provider’s model, can embed the provider’s watermark into images not produced by the provider, posing outsized risk to provenance and trust. We propose SemBind, the first defense framework for latent-based watermarks that resists black-box forgery by binding latent signals to image semantics via a learned semantic masker. Trained with contrastive learning, the masker yields near-invariant codes for the same prompt and near-orthogonal codes across prompts; these codes are reshaped and permuted to modulate the target latent before any standard latent-based watermark. SemBind is generally compatible with existing latent-based watermarking schemes and keeps image quality essentially unchanged, while a simple mask-ratio parameter offers a tunable trade-off between anti-forgery strength and robustness. Across four mainstream latent-based watermark methods, our SemBind-enabled anti-forgery variants markedly reduce false acceptance under black-box forgery while providing a controllable robustness--security balance.

\end{abstract}

\section{Introduction}
\label{sec:intro}
Latent diffusion models (LDMs)~\cite{ho2020denoising,sohl2015deep,song2019generative} now generate images that are virtually indistinguishable from real photographs, enabling a wide range of creative and assistive applications. At the same time, such realism raises acute concerns about misleading content and deepfakes~\cite{europol2022facing,goldstein2021disinformation}, which can be used to deceive individuals, sway public opinion, and facilitate fraud.

\begin{figure}[t]
  \centering
  \includegraphics[width=0.98\linewidth]{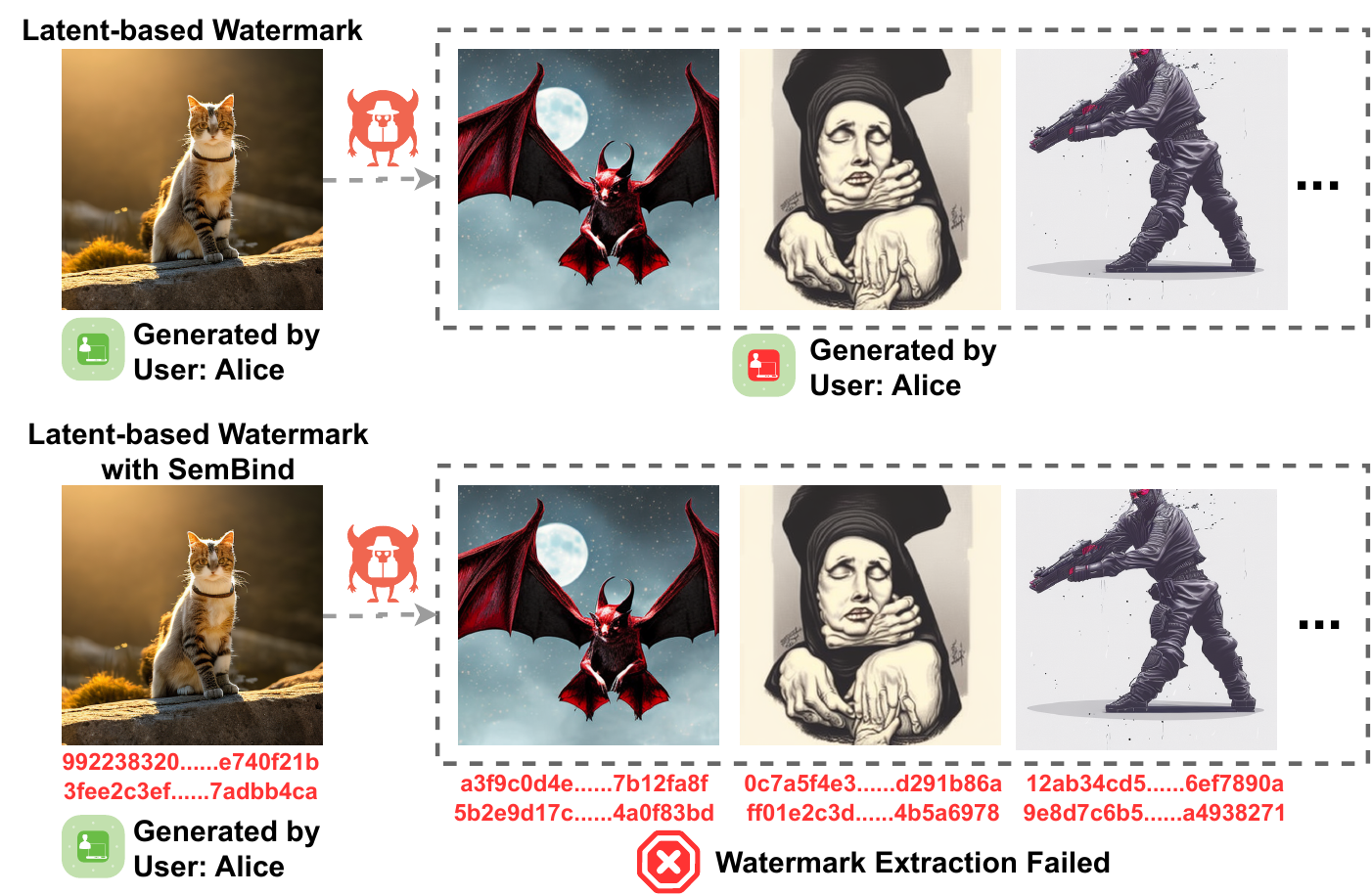}
\caption{Black-box forgery attack and SemBind overview.
Latent-based watermarking embeds a pattern in the initial latent noise, which a black-box attacker can transfer to forged images from at least one watermarked example.
SemBind additionally binds the latent watermark to a semantic bitstring, causing verification to fail when the forged semantics deviate from the original.}
\vspace{-14pt}
  \label{fig:blackbox_forgery_and_sembind}
\end{figure}

Watermarking aims to mitigate these risks by embedding information into generated images for later copyright authentication and provenance tracking. It is already being piloted by governments~\cite{biden2023executive,act2024regulation} and major AI providers~\cite{bartz2023openai,clegg2024labeling} as a key mechanism for responsible deployment.

Watermarking for diffusion models falls into three broad categories. Post-processing methods~\cite{cox2008digital,zhang2019robust} modify robust image features after generation, while fine-tuning methods~\cite{cui2023diffusionshield,fernandez2023stable,xiong2023flexible,zhao2023recipe} amalgamate the watermark embedding process with the image generation process. Latent-based schemes~\cite{wen2023tree,yang2024gaussian,gunnundetectable,yang2025gaussian} instead encode a pattern in the initial noise and recover it by inverting the denoising process. This design keeps the diffusion model unchanged, is typically more robust to image transformations, and, more importantly, has provable undetectability~\cite{christ2024undetectable}. In particular, undetectability implies that the watermarked outputs are statistically indistinguishable from non-watermarked ones, and thus the watermark does not introduce a systematic degradation in generation quality.

However, latent-based watermarks are highly susceptible to black-box forgery attacks~\cite{muller2025black,jain2025forging}. As illustrated in \cref{fig:blackbox_forgery_and_sembind}, an attacker with only black-box access and at least one single watermarked image can transfer the watermark to large volumes of illicit content, undermining both watermark owners and AI service providers.

Within this latent-based paradigm, there is no effective defense against black-box forgery. Naïve countermeasures such as tightening detection thresholds~\cite{muller2025black} are largely ineffective: since the watermark lives purely in the initial latent, both natural perturbations and forgery operations only manifest as modifications to this latent, making them hard to distinguish at verification time. As robustness is improved to tolerate more natural perturbations, the scheme simultaneously becomes more tolerant to forged latents and thus less resistant to black-box forgery. \emph{A key motivation of this work is to strengthen latent-based watermarks against black-box forgery without sacrificing their key advantage: for schemes that admit provable undetectability, we preserve the same guarantee under the same setting.}

In this work, we propose \emph{SemBind}, the first defense framework for latent-based watermarks that resists black-box forgery by binding latent watermark signals to image semantics via a learned semantic masker. The masker is trained contrastively to produce near-invariant codes for images from the same prompt and near-orthogonal codes across prompts. During watermarking, SemBind generates an auxiliary clean image for the target prompt, extracts a semantic code, expands and permutes it under a secret key, and uses the resulting mask to multiplicatively modulate the watermarked latent produced by any standard scheme. This design preserves image quality, while a single \emph{mask-ratio} parameter controls the trade-off between anti-forgery strength and robustness to natural distortions.

We validate SemBind on four representative latent-based schemes—Tree-Ring~\cite{wen2023tree}, Gaussian Shading~\cite{yang2024gaussian}, PRC~\cite{gunnundetectable}, and Gaussian Shading++~\cite{yang2025gaussian}, by instantiating SemBind-enabled variants for each. Our evaluation covers robustness to common perturbations, resistance to imprinting and reprompting attacks, and image quality and semantic alignment measured by FID~\cite{heusel2017gans} and CLIP scores~\cite{radford2021learning}. Across all four schemes, SemBind substantially reduces false acceptance under black-box forgery while preserving watermark robustness and keeping FID and CLIP essentially unchanged, yielding a controllable robustness--security trade-off via the mask ratio.
Moreover, for base schemes that admit provable undetectability, we theoretically prove that SemBind preserves the same undetectability guarantee under the same setting.

In summary, we make the following contributions:
\begin{itemize}
  \item We propose SemBind, the first defense for latent-based diffusion watermarks against black-box forgery. By learning a semantic masker via contrastive learning and introducing a mask-ratio parameter, SemBind binds latent signals to image semantics, providing strong resistance to forgery while enabling a controllable trade-off with watermark robustness.

  \item We instantiate SemBind on four representative latent-based schemes.
For schemes that admit provable undetectability, We theoretically prove that the SemBind-enabled variants preserve the same undetectability guarantee under the same setting, and empirically confirm that FID and CLIP remain on par with the original baselines.

  \item We evaluate SemBind under common perturbations and two canonical black-box forgery strategies (imprinting and reprompting), showing substantially reduced false acceptance while preserving robustness and enabling a tunable robustness–security trade-off.
\end{itemize}

\section{Related Work}
\label{sec:Related Work}
\subsection{Diffusion Models and Inverse DDIM}
\label{subsec:Diffusion Models and Inverse DDIM}

Diffusion models synthesize images by iteratively denoising a latent variable that is initially drawn from a Gaussian prior. In latent diffusion models (LDMs)~\cite{rombach2022high}, the diffusion process operates in a latent space $\mathcal{Z}$. An encoder $\mathcal{E}$ maps an image $x \in \mathbb{R}^{H\times W\times 3}$ to its latent representation $z_0 = \mathcal{E}(x) \in \mathbb{R}^{h\times w\times c}$, and a decoder $\mathcal{D}$ reconstructs the image as $x = \mathcal{D}(z_0)$. A pretrained LDM therefore consists of the tuple $\Theta = (\mathcal{E}, u, \mathcal{D})$, where $u$ denotes the noise-prediction network (UNet).

Starting from an initial latent $z_T \sim \mathcal{N}(0, I)$, DDIM sampling~\cite{ho2020denoising} runs a deterministic denoising trajectory: at each step $t$ it uses the trained noise predictor $u(z_t, t, C)$ and the noise schedule $\{\alpha_t\}$ to update $z_t$ until reaching a clean latent $z_0$. We denote the full forward denoising process that maps $z_T$ to $z_0$ by $z_0 = \mathcal{G}_{T\to 0}(z_T; u)$.

Conversely, inverse DDIM~\cite{mokady2023null} approximately retraces this trajectory in reverse: given an image latent $z_0$, it iteratively adds noise using the same predictor $u$ and schedules to obtain an estimate $\hat{z}_T$ of the initial noise, which we write compactly as $\hat{z}_T = \mathcal{I}_{0\to T}(z_0; u)$.

\begin{figure}[t]
  \centering
  \includegraphics[width=\linewidth]{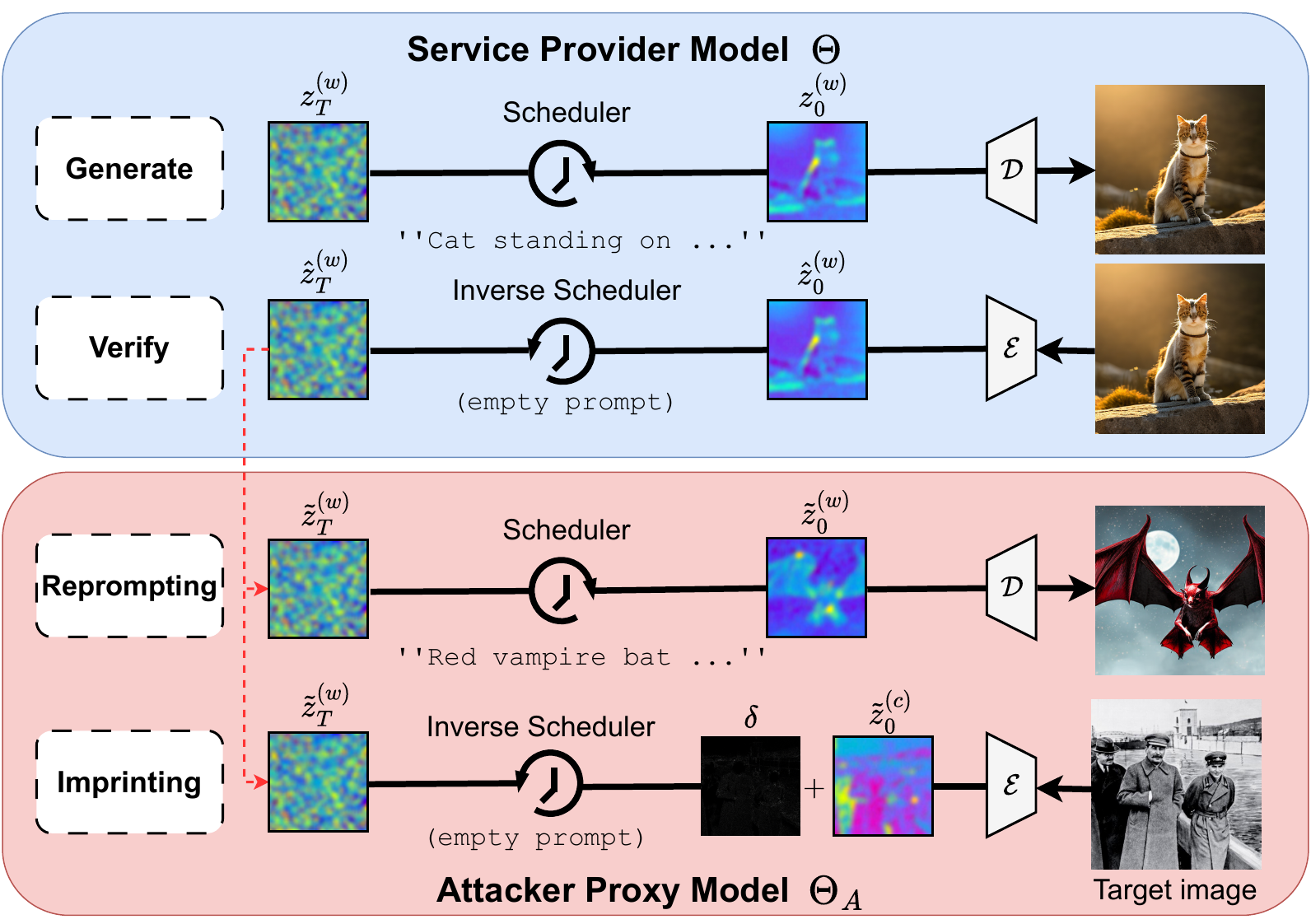}
   \caption{Latent-based watermarking and black-box forgery attacks.}
\vspace{-6pt}
  \label{fig:in processing watermark and forgery attack}
\end{figure}

\subsection{Latent-based Watermark}
\label{subsec:Latent-based Watermark}
In this work we focus on \emph{latent-based} watermarking schemes for diffusion models~\cite{wen2023tree,yang2024gaussian,gunnundetectable,yang2025gaussian,christ2024undetectable}. As illustrated in \cref{fig:in processing watermark and forgery attack}, these schemes encode a message into the initial noise $z_T^{(w)}$ during \emph{Generate}, then approximately recover it by running an inversion scheduler from the generated image during \emph{Verify}. This latent-space design keeps the diffusion backbone unchanged and is compatible with provably undetectable constructions.

We focus on four representative latent-based schemes: Tree-Ring~\cite{wen2023tree}, Gaussian Shading~\cite{yang2024gaussian}, PRC~\cite{gunnundetectable}, and Gaussian Shading++~\cite{yang2025gaussian}.
Tree-Ring embeds a \emph{zero-bit} watermark by imposing a characteristic pattern in the initial latent, and does not provide a provable undetectability guarantee.
Gaussian Shading embeds \emph{multi-bit} watermarks by constraining the initial latent with a secret key and verifying whether the inverted latent remains consistent with this constraint; it admits provable undetectability in the \emph{single-sample} setting.
PRC watermarking encodes messages into key-dependent pseudorandom code patterns in the latent and decodes them after inversion, providing \emph{multi-bit} watermarking with provable undetectability in the \emph{multi-sample} setting.
Gaussian Shading++ combines PRC-protected seed with GS-style payload, providing \emph{multi-bit} watermarking with provable undetectability in the \emph{multi-sample} setting.

\subsection{Black-Box Forgery Attack and Threat Model}
\label{subsec:Watermark Forgery Attack}
Black-box forgery attacks~\cite{muller2025black} are substantially more powerful than earlier ``average'' template attacks~\cite{yang2024can}, which estimate a fixed watermark pattern by aggregating many watermarked samples.
In practice, average attacks are largely ineffective against most latent-based watermarking schemes, whose watermark signals are instance-dependent and cannot be reliably recovered by simple averaging.

Jain \emph{et al.}~\cite{jain2025forging} further improve the computational efficiency of this attack by simplifying the optimization procedure.
However, this efficiency gain comes at the cost of reduced image quality and lower forgery success rates, making it less effective as a universal threat model.
Relatively speaking, the imprinting and reprompting attacks of Müller \emph{et al.} are more powerful and more general; therefore, we adopt them as representative black-box forgery attacks in this work.

\textbf{Threat model.}
In the black-box forgery attack setting, the attacker targets the service provider's watermarked model $\Theta$ to generate watermarked images under unauthorized (potentially malicious) semantics.

In this setting, the service provider holds a private watermarked image generation model $\Theta$, which internally uses a \textit{Generate} procedure to produce watermarked images and a \textit{Verify} procedure to extract the watermark.

The attacker (i) has black-box query access to $\Theta$, i.e., it can submit arbitrary prompts and obtain the generated watermarked images, but has no access to model parameters, gradients, or any intermediate latents during watermark embedding;
(ii) is given at least one watermarked image generated by $\Theta$;
(iii) knows the watermarking algorithm and hyperparameters (Kerckhoffs' principle), but does \emph{not} know the secret keys held by the provider;

The attacker can also holds a proxy diffusion model $\Theta_A$, which in practice is instantiated either as the same backbone as $\Theta$ (the ``match'' case) or as a slightly weaker, publicly available model (the ``mismatch'' case).

The detailed attack procedure is shown in~\cref{fig:in processing watermark and forgery attack}.
The attacker first uses the proxy model $\Theta_A$ to invert the given watermarked image and obtain an estimate $\tilde z_T^{(w)}$ of the provider’s watermarked initial latent $z_T^{(w)}$. Two canonical strategies then arise. In the \emph{imprinting} attack, the attacker additionally supplies a target cover image, typically a semantically unrelated natural image, with latent $\tilde z_0^{(c)}$, and optimizes a small perturbation $\delta$ so that the adversarial latent $\tilde z_0^{(c)} + \delta$ inverts (under $\Theta_A$) to the stolen watermark latent, i.e., $\mathcal{I}_{0\to T}(\tilde z_0^{(c)} + \delta; u_A) \approx \tilde z_T^{(w)}$. In the \emph{reprompting} attack, the attacker simply reuses $\tilde z_T^{(w)}$ as the initial noise and runs the proxy model forward with a different, potentially malicious prompt, generating new images that still carry the provider’s watermark.

\vspace{-6pt}
\section{Method}
\label{sec:Method}

\begin{figure*}[t]
  \centering
  \includegraphics[width=0.8\linewidth]{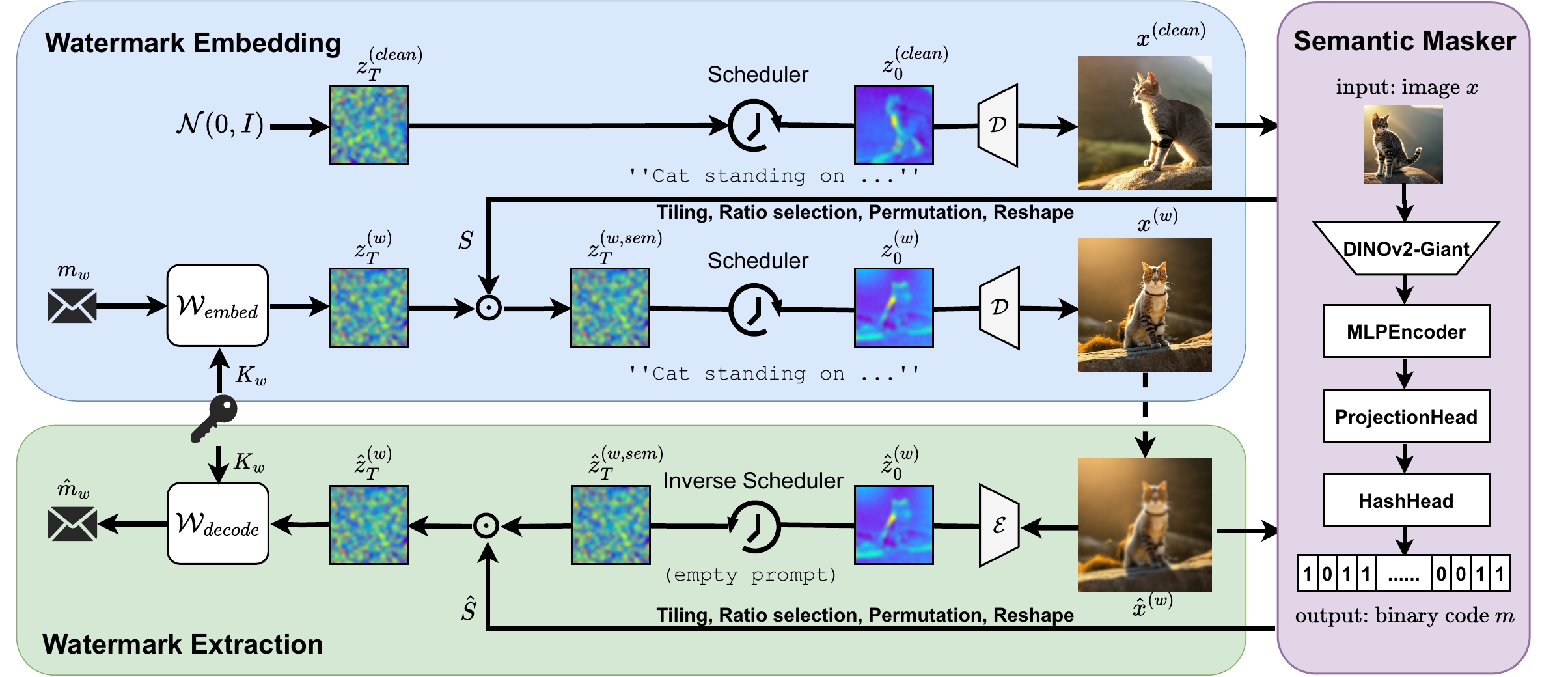}
  \caption{The framework of SemBind, including three
components: semantic masker, embedding procedure, and extraction procedure.}
\vspace{-4pt}
  \label{fig:framework-sembind}
\end{figure*}

\cref{fig:framework-sembind} gives an overview of SemBind, consisting of three components: the semantic masker, the watermark embedding procedure, and the extraction procedure, which we describe in detail below.

Our central idea is to bind latent-based watermarks to image semantics. During embedding, we map an image to a compact binary semantic code, expand it into a latent mask, and use this mask to modulate the watermarked initial latent. During verification, we recompute the code from the watermarked image and rebuild the mask: if the semantics are preserved, then the masks align, and watermark extraction is only slightly perturbed; if the semantics change, the mismatch induces a strong perturbation that makes forged images fail the watermark check.

\vspace{-2pt}
\subsection{Semantic Masker}
\label{sec:semantic-masker}
We introduce a semantic masker $f_\theta$ that maps an image $x$ to a binary code: $m = f_\theta(x) \in \{0,1\}^B$. This semantic masker is trained and kept private by the service provider.

Two properties are satisfied: (i) images generated from the \emph{same} prompt yield near-invariant codes (small Hamming distance), and (ii) images from \emph{different} prompts yield codes that are approximately orthogonal in expectation (Hamming distance $\approx B/2$).  

\vspace{-8pt}
\paragraph{Architecture.}
The masker $f_\theta$ couples a frozen semantic image encoder with a lightweight MLP-based hashing network.
Given an image $x$, a pretrained vision encoder produces a global embedding $e \in \mathbb{R}^{d_e}$ (e.g., the CLS token of a ViT-style backbone).  
This embedding is then processed by three MLP modules: (i) an encoder $\mathrm{Enc}:\mathbb{R}^{d_e}\!\to\!\mathbb{R}^{H}$ composed of residual MLP blocks with batch normalization and GELU; (ii) a projection head $\mathrm{Proj}:\mathbb{R}^{H}\!\to\!\mathbb{R}^{D}$ that outputs an $\ell_2$-normalized representation for contrastive learning; and (iii) a hash head $\mathrm{Hash}:\mathbb{R}^{D}\!\to\!\mathbb{R}^{B}$ implemented as residual fully connected blocks followed by a linear layer that outputs $B$ logits.

The semantic masker $f_\theta$ operates in two modes: a \emph{logit mode} used during training and a \emph{binary mode} used at inference time.
In \emph{logit mode}, given a global embedding $e$, the network outputs a hash logit:
 $\ell = \mathrm{Hash}(\mathrm{Proj}(\mathrm{Enc}(e))) \in \mathbb{R}^B.$

In \emph{binary mode}, we further map $\ell$ to a soft binary code: $b = \tanh(s\,\ell) \in [-1,1]^B$,
and then binarize it into a binary code: $m = \frac{\mathrm{sign}(b)+1}{2} \in \{0,1\}^B$,
where $s>0$ controls the sharpness of the soft sign.

\vspace{-8pt}
\paragraph{Training.}
\label{train details}
We train $f_\theta$ on a large prompt-conditioned corpus of semantic image embeddings.
During training, $f_\theta$ operates in \emph{logit mode}, outputting hash logits for each embedding.
Inspired by the cluster-then-quantize paradigm widely adopted in image hashing~\cite{wang2023deep,shen2024contrastive,wei2024exploring},
we also follow a two-stage routine: first learning a compact and clusterable representation space, and then quantizing it into binary codes.

In \textbf{stage-1}, we optimize $\mathrm{Enc}$ and $\mathrm{Proj}$ using a supervised contrastive loss on the normalized features $z_i \in \mathbb{R}^D$.
The motivation of this stage is to cluster the raw embedding $e$ produced by a pretrained vision encoder, which often exhibits insufficient compact regularization for downstream binary coding.
Although these features are high-dimensional, their \emph{intrinsic} information dimension can be much lower, and the representation space is not explicitly constrained to be compact, leaving substantial redundancy~\cite{zhang2025both}.
Concretely, $\mathrm{Enc}$ and $\mathrm{Proj}$ map $e_i$ into a compact hyperspherical space and encourage prompt-level clustering.

Given a mini-batch $\{(x_i,y_i)\}_{i=1}^N$ with prompt labels $y_i$ and features
$z_i = \mathrm{Proj}(\mathrm{Enc}(e_i))$, we define for each anchor $i$ the set of positives
$P(i) = \{\,p \neq i \mid y_p = y_i\,\}$.  
The \textbf{stage-1} objective is a standard supervised contrastive loss~\cite{khosla2020supervised}:
\begin{equation}
\label{eq:supervised contrastive loss}
  \mathcal{L}_{\mathrm{sup}}
  = - \frac{1}{N} \sum_{i=1}^N\frac{1}{|P(i)|}
    \sum_{p \in P(i)}
    \log
    \frac{\exp\bigl(z_i^\top z_p / \tau\bigr)}
         {\sum\limits_{a \neq i} \exp\bigl(z_i^\top z_a / \tau\bigr)},
\end{equation}
where $\tau>0$ is a temperature.  
This loss encourages features from the same prompt to cluster on the unit sphere while separating different prompts.

In \textbf{stage-2}, we freeze the semantic image encoder and $\mathrm{Enc}$+$\mathrm{Proj}$, and train the hash head $\mathrm{Hash}$
to \emph{quantize} the learned spherical features into binary codes with desirable bit-space properties.
For each sample we compute logits $\ell_i$, codes $b_i=\tanh(s\,\ell_i)$, and use a supervised contrastive loss in the code space,
\begin{equation}
\label{eq:supervised contrastive loss binary}
  \mathcal{L}_{\mathrm{hash}}
  = - \frac{1}{N} \sum_{i=1}^N\frac{1}{|P(i)|}
    \sum_{p \in P(i)}
    \log
    \frac{\exp\bigl( (b_i^\top b_p / B)/\tau_h \bigr)}
         {\sum\limits_{a \neq i} \exp\bigl( (b_i^\top b_a / B)/\tau_h \bigr)},
\end{equation}
where $\tau_h>0$ is a temperature in the hash space.

In addition, we add three regularizers on $\{b_i\}$:
(i) a \emph{quantization} term $\mathcal{L}_{\mathrm{q}} = \mathbb{E}[\,1 - |b_i|\,]$ encouraging $|b_{i,k}| \rightarrow 1$,
(ii) a \emph{bit-balance} term
$\mathcal{L}_{\mathrm{bal}} = \frac{1}{B} \sum_{k=1}^B (\frac{1}{N}\sum_i b_{i,k})^2$ pushing each bit to have zero mean across the batch, and
(iii) a \emph{decorrelation} term
$\mathcal{L}_{\mathrm{dcr}} = \lVert C - I \rVert_F^2$, where $C$ is the sample covariance of $\{b_i\}$.
To further stabilize the codes, we generate two jittered views of each feature and penalize their discrepancy via a bit-consistency loss
$\mathcal{L}_{\mathrm{cons}} = \mathbb{E}[\lVert b_i^{(1)} - b_i^{(2)} \rVert_1]$.

The overall \textbf{stage-2} objective is as follows:
\begin{equation}
  \begin{aligned}
    \mathcal{L}
      = \mathcal{L}_{\mathrm{hash}}
       + \lambda_{\mathrm{q}} \mathcal{L}_{\mathrm{q}}
       + \lambda_{\mathrm{bal}} \mathcal{L}_{\mathrm{bal}} 
       + \lambda_{\mathrm{dcr}} \mathcal{L}_{\mathrm{dcr}}
       + \lambda_{\mathrm{cons}} \mathcal{L}_{\mathrm{cons}}.
  \end{aligned}
\end{equation}

After training, we fix $f_\theta$ and use it as a semantic hashing module within the SemBind framework.

\subsection{Watermark Embedding}
\label{sec:Watermark Embedding}
\begin{figure}[t]
  \centering
  {\captionsetup[sub]{skip=1pt}
  \setlength{\tabcolsep}{2pt}
  \renewcommand{\arraystretch}{1.0}
  \begin{tabular}{c c c c c}
    \multirow{2}{*}{%
      \subcaptionbox{\label{fig:orig}}{%
        \includegraphics[width=0.18\columnwidth]{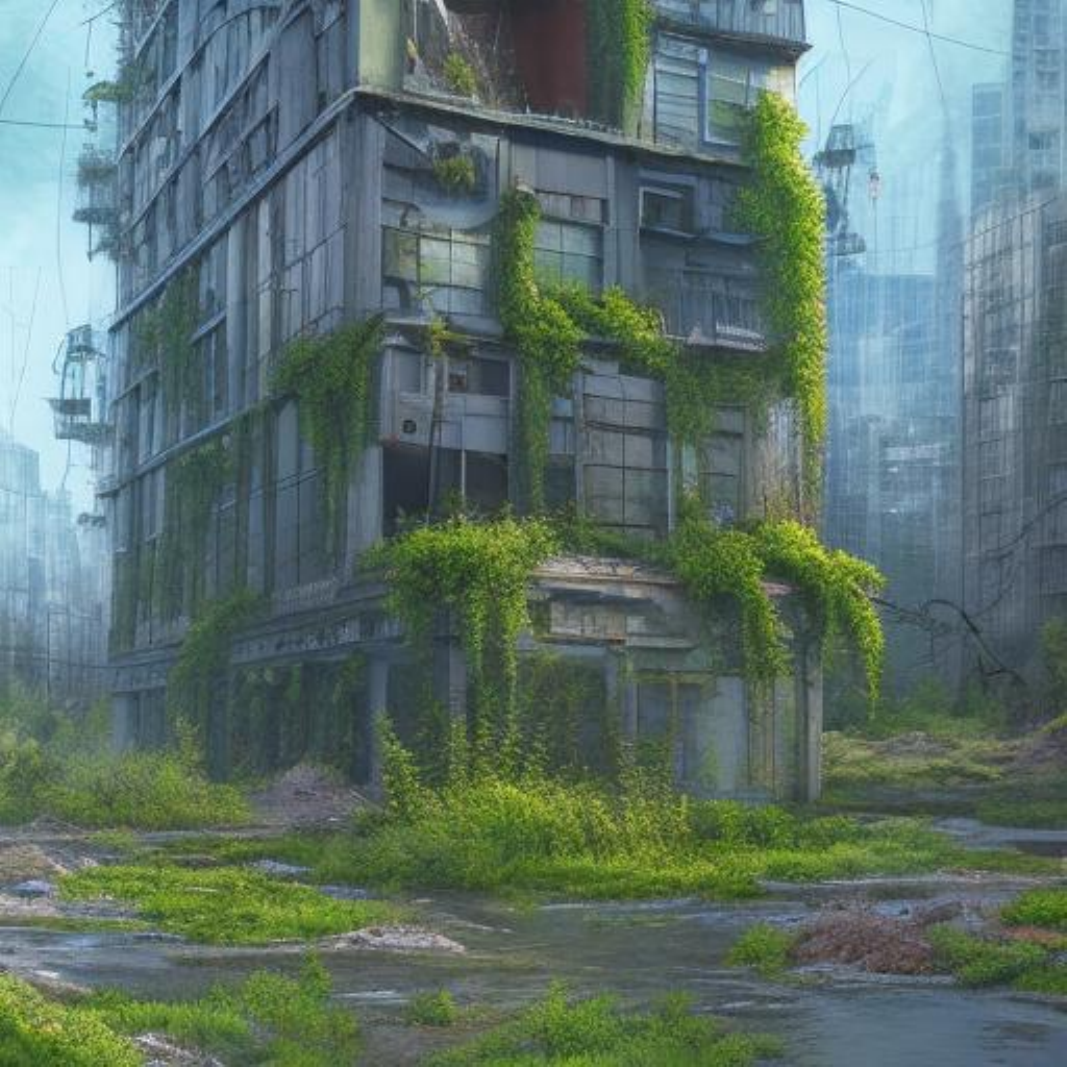}}%
    }
    &
    \subcaptionbox{\label{fig:tr-orig}}{%
      \includegraphics[width=0.18\columnwidth]{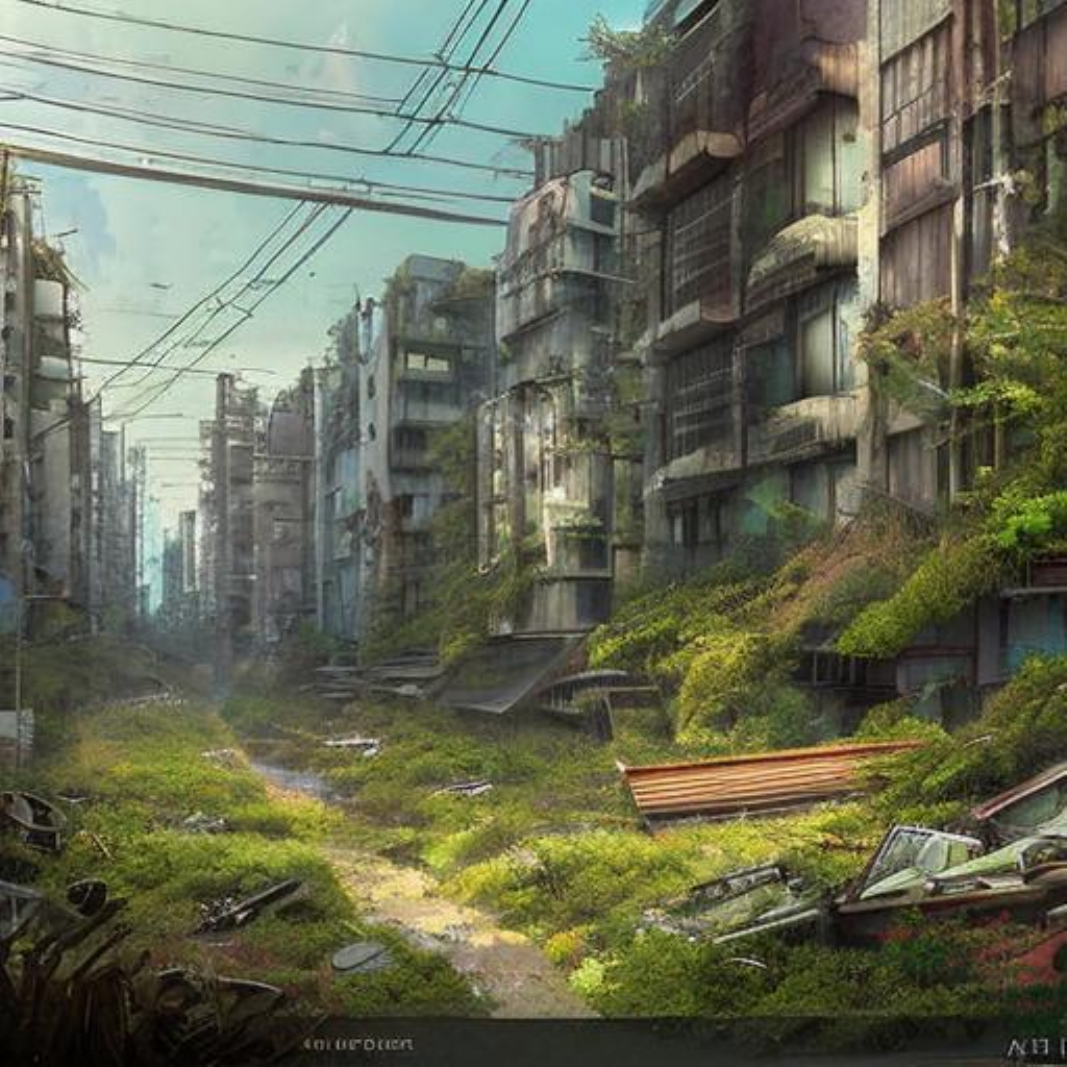}}%
    &
    \subcaptionbox{\label{fig:gs-orig}}{%
      \includegraphics[width=0.18\columnwidth]{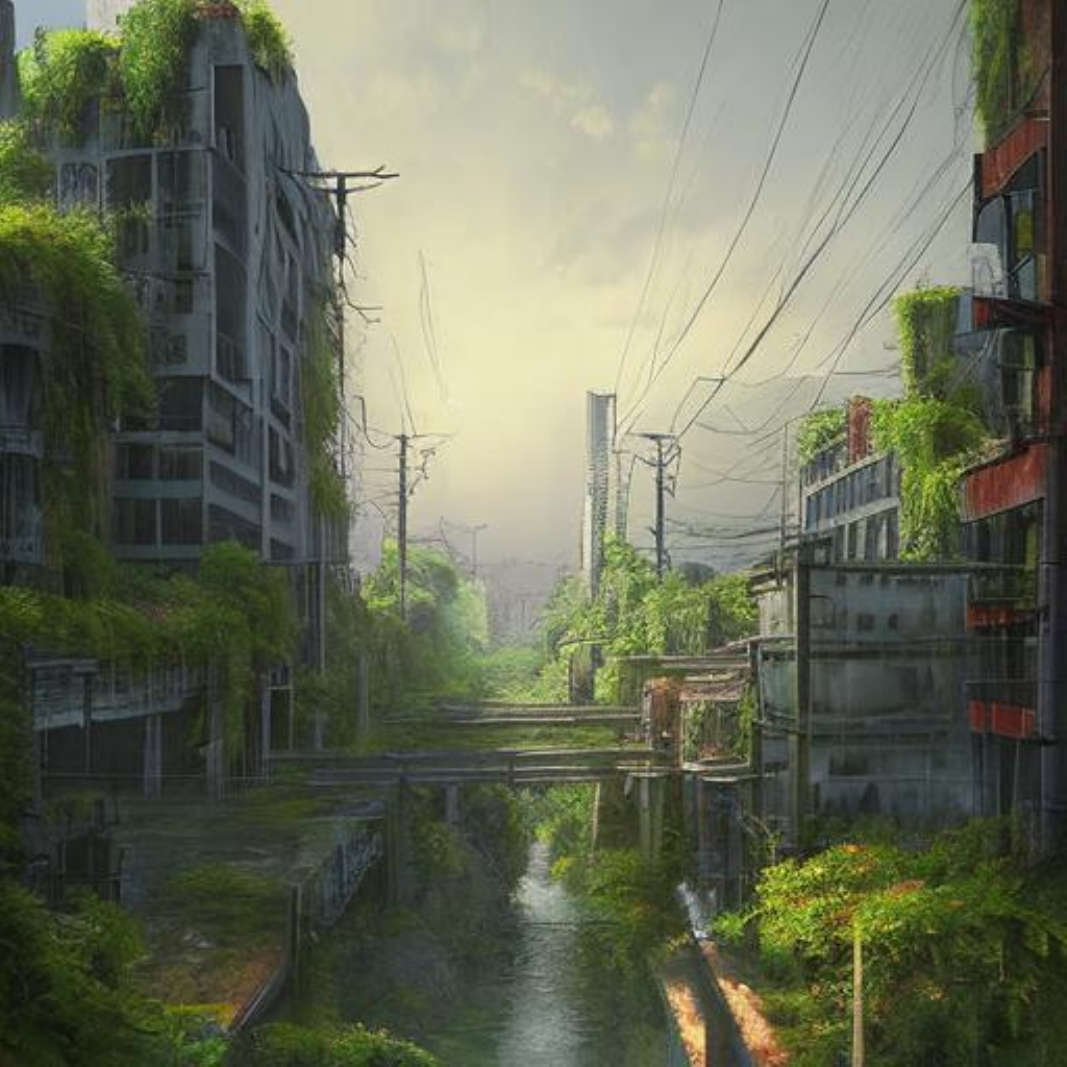}}%
    &
    \subcaptionbox{\label{fig:prc-orig}}{%
      \includegraphics[width=0.18\columnwidth]{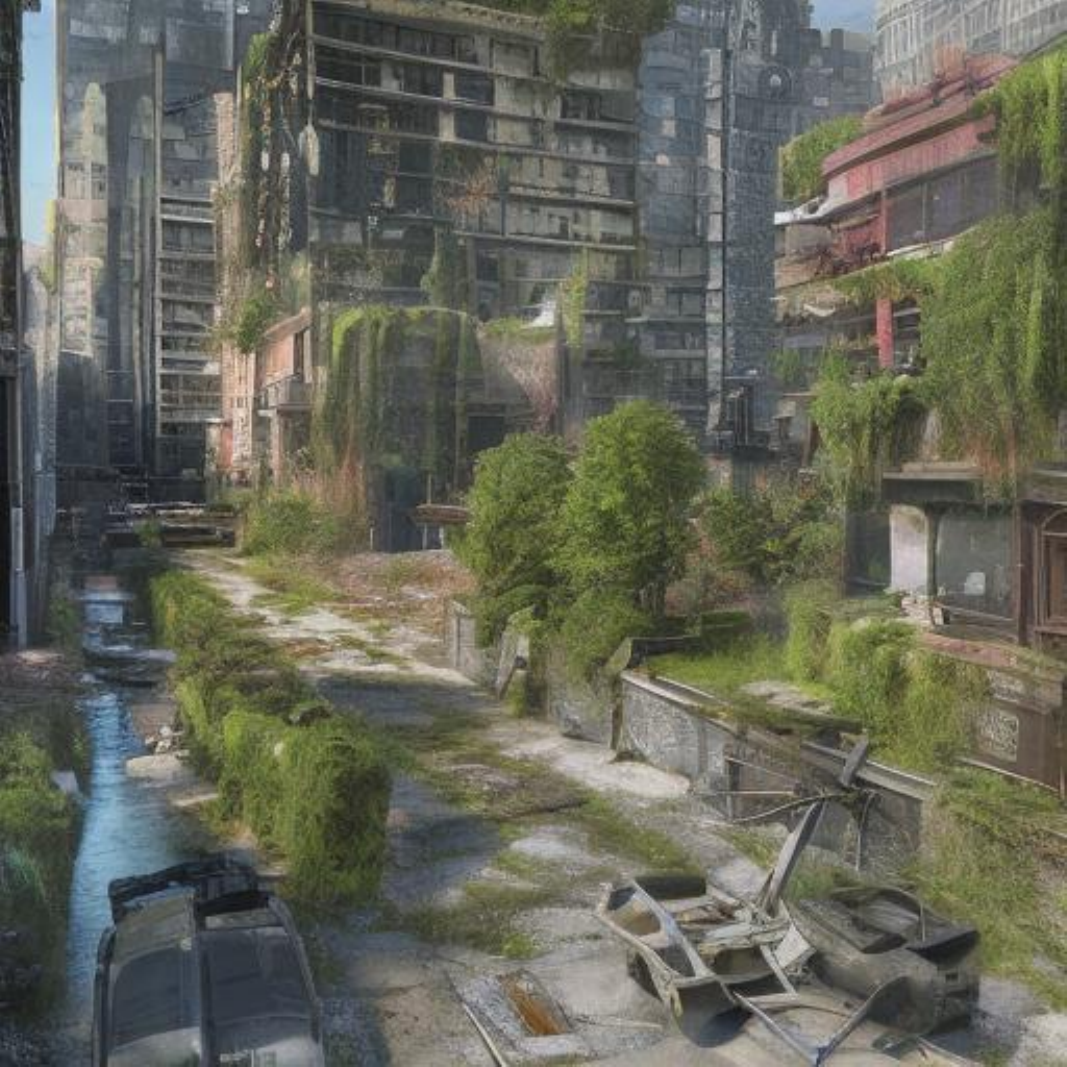}}%
    &
    \subcaptionbox{\label{fig:gspp-orig}}{%
      \includegraphics[width=0.18\columnwidth]{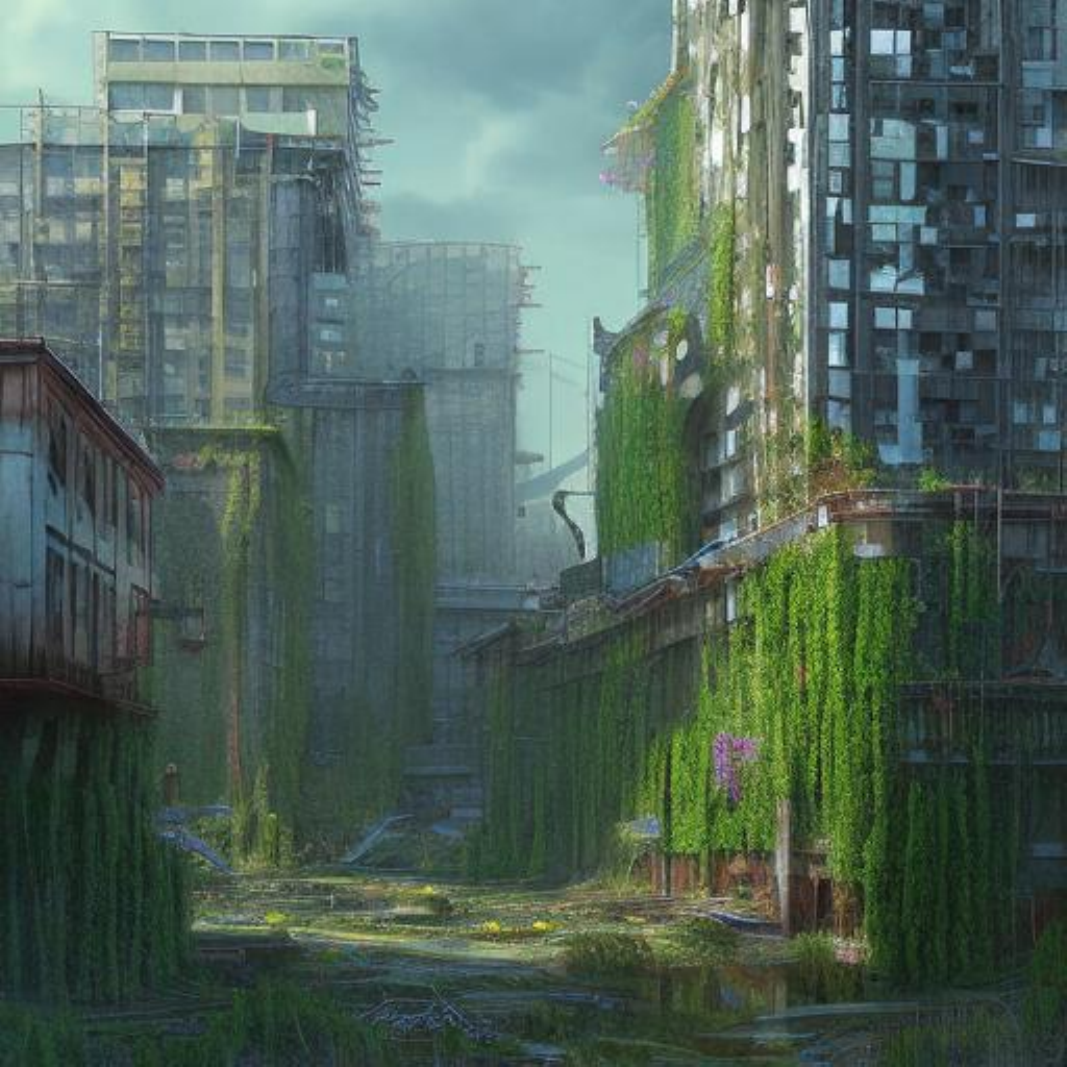}}%
    \\[2pt] 
    &
    \subcaptionbox{\label{fig:tr-sem}}{%
      \includegraphics[width=0.18\columnwidth]{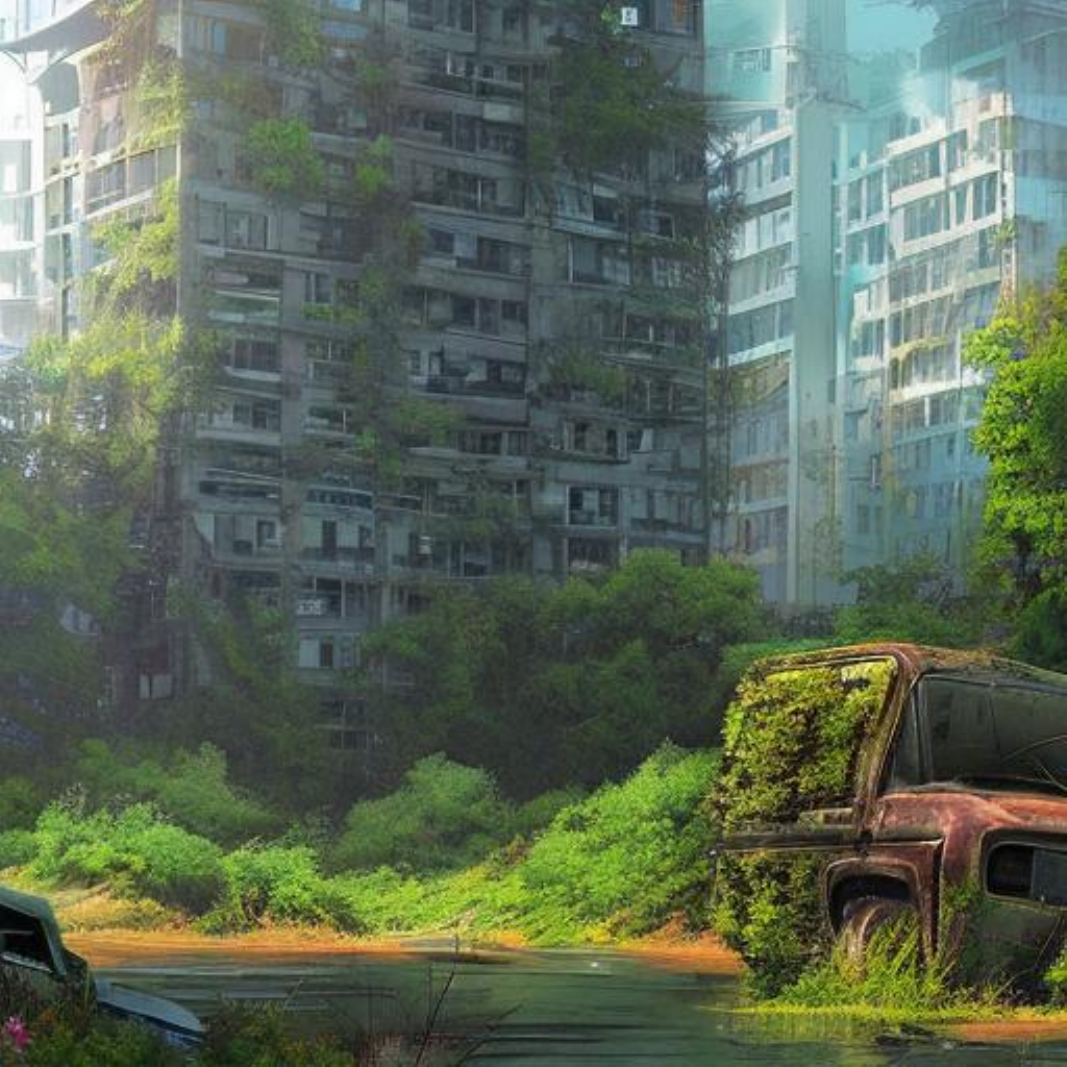}}%
    &
    \subcaptionbox{\label{fig:gs-sem}}{%
      \includegraphics[width=0.18\columnwidth]{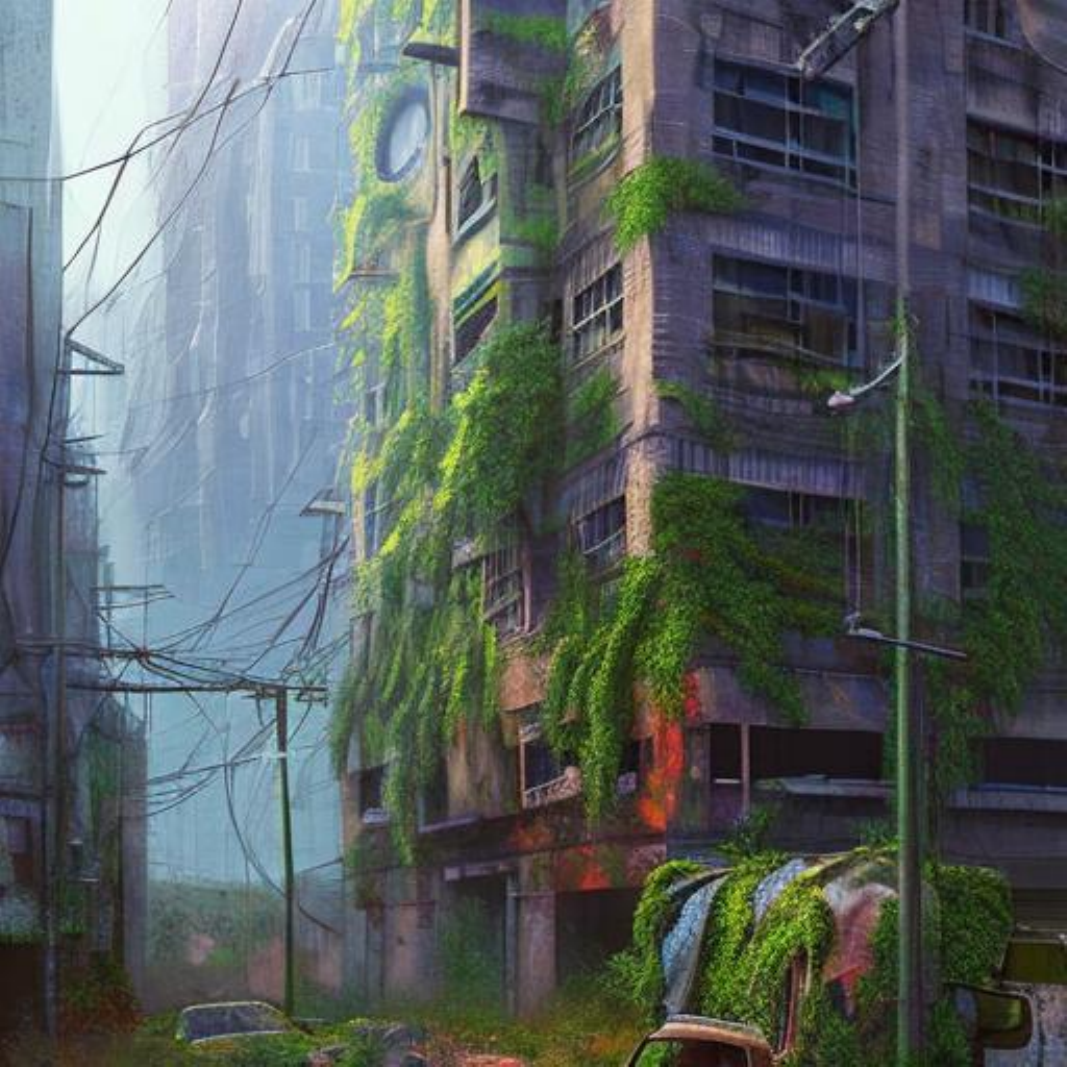}}%
    &
    \subcaptionbox{\label{fig:prc-sem}}{%
      \includegraphics[width=0.18\columnwidth]{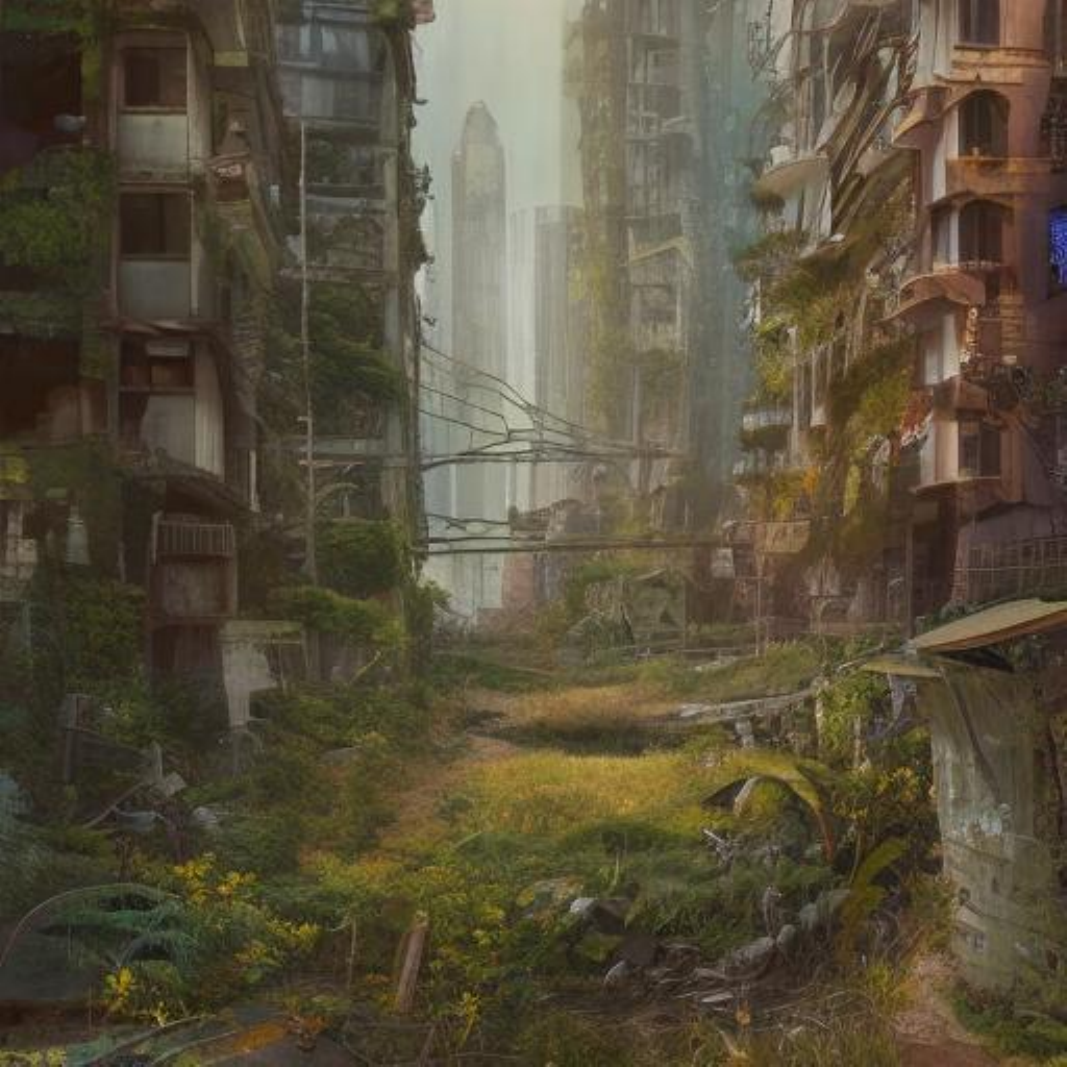}}%
    &
    \subcaptionbox{\label{fig:gspp-sem}}{%
      \includegraphics[width=0.18\columnwidth]{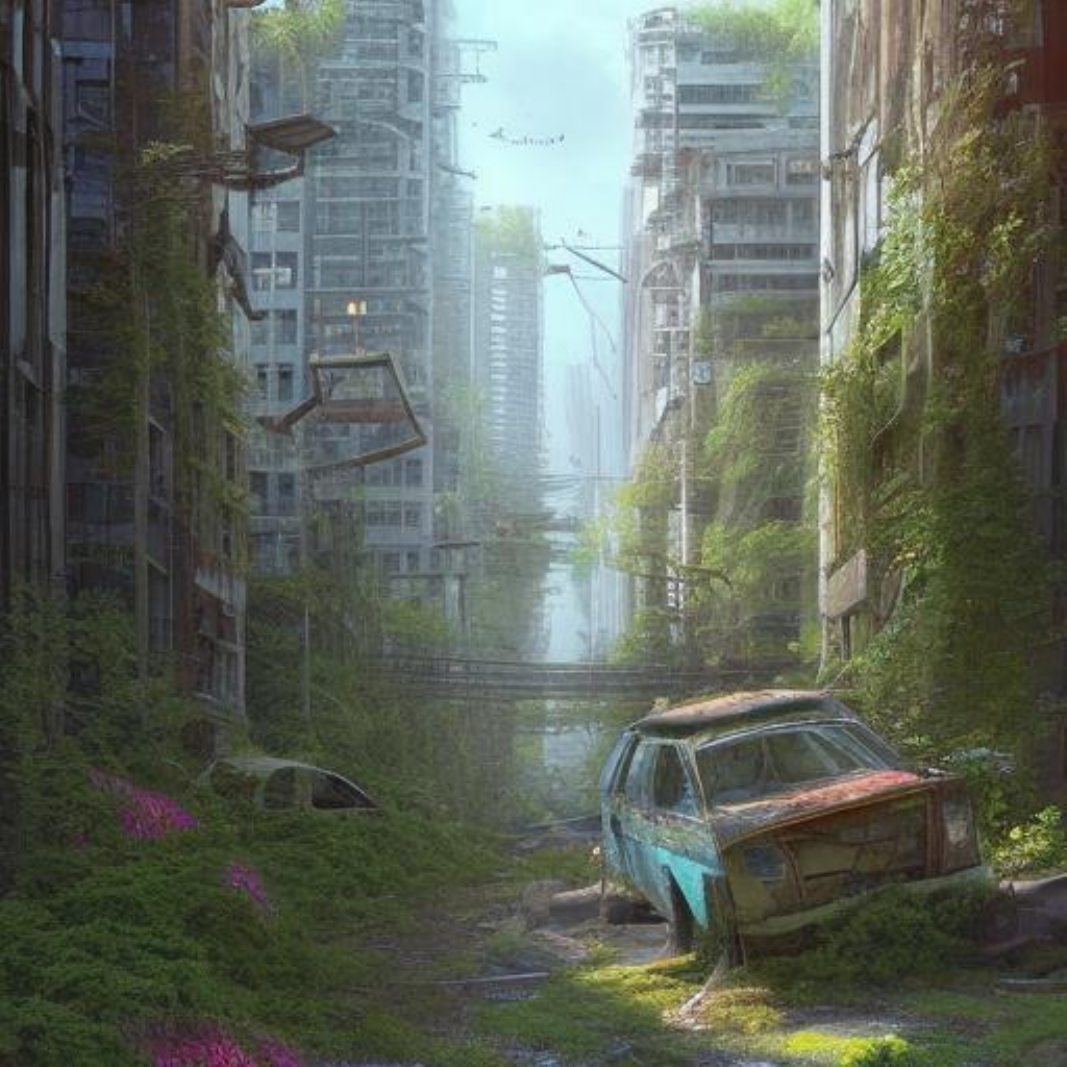}}%
  \end{tabular}
  } 

  \caption{
  Visual comparison of different latent-based watermarking methods and their SemBind-enhanced variants.
  (a) original unwatermarked image;
  (b) Tree-Ring;
  (c) Gaussian Shading;
  (d) PRC;
  (e) Gaussian Shading++;
  (f) Tree-Ring (SemBind);
  (g) Gaussian Shading (SemBind);
  (h) PRC (SemBind);
  (i) Gaussian Shading++ (SemBind).
  All images are generated from the prompt:
  \emph{``Post apocalyptic city overgrown abandoned city, highly detailed, art by Range Murata, highly detailed, 3d, octane render, bright colors, digital painting, trending on artstation, sharp focus.''}
  }
  \vspace{-6pt}
  \label{fig:wm-visual}
\end{figure}

\paragraph{Mask expansion and permutation.}
To bind a watermarked initial latent to the semantic binary code
\(m = f_\theta(x) \in \{0,1\}^B\),
we expand \(m\) into a spatial mask aligned with the shape as the diffusion initial latent via tiling.
A mask ratio \(\sigma\) controls the fraction of latent coordinates being modulated:
larger \(\sigma\) strengthens semantic binding but may reduce robustness to benign perturbations.
Since tiling introduces periodic structure, we further shuffle the mask with a secret permutation key so that the modulation is spatially dispersed and less structured.

Specifically, let the initial latent of the diffusion model be 
\(z_T \in \mathbb{R}^{C\times H\times W}\) with \(L = CHW\) coordinates.
Given a semantic code \(m = f_\theta(x) \in \{0,1\}^B\), a mask ratio
\(\sigma \in [0,1]\), and a \emph{secret} permutation key
\(K_{\text{perm}}\) shared between embedder and verifier, we construct
a spatial binary mask by:
(i) Tiling: form \(\tilde m \in \{0,1\}^L\) by repeating the bits of \(m\) and truncating to length \(L\);
(ii) Ratio selection: let \(L_\sigma=\lfloor \sigma L \rfloor\) and obtain \(\tilde m_\sigma\) by keeping the first \(L_\sigma\) entries of \(\tilde m\) and setting the rest to \(0\);
(iii) Permutation: use \(K_{\text{perm}}\) to define a fixed permutation \(\pi\) over \(\{1,\dots,L\}\) and set \(\tilde m'_\sigma[i]=\tilde m_\sigma[\pi(i)]\) for \(i=1,\dots,L\);
(iv) Reshape: reshape \(\tilde m'_\sigma\) back to \(\{0,1\}^{C\times H\times W}\).

We then map this binary mask to a bipolar \emph{sign mask}: $S(x;\sigma,K_{\text{perm}}) = 1 - 2\,\tilde m'_\sigma \in \{-1,+1\}^{C\times H\times W}$.

\paragraph{SemBind embedding.}
Let \(\mathcal{W}_{\text{embed}}\) denote a latent-based watermarking
scheme that, given a message \(m_w\) and watermark key \(K_w\), produces
a watermarked initial latent $z_T^{(w)} = \mathcal{W}_{\text{embed}}(m_w, K_w)$.

For a given prompt \(p\), the
service provider first randomly generates an auxiliary clean image \(x\) using the same diffusion model \(\Theta\).  
The semantic masker in \emph{binary mode} produces a code \(m = f_\theta(x)\), which is
expanded into a sign mask \(S(x;\sigma,K_{\text{perm}})\).  
We then define the \emph{semantically bound} initial latent by
element-wise multiplication: $z_T^{(w,\text{sem})} = S(x;\sigma,K_{\text{perm}}) \odot z_T^{(w)}$.

The final watermarked image is obtained by running the denoising scheduler from
\(z_T^{(w,\text{sem})}\) down to \(z_0^{(w,\text{sem})}\) and decoding
\(z_0^{(w,\text{sem})}\) with the VAE decoder \(\mathcal{D}\).

\subsection{Watermark Extraction}
\label{sec:Watermark Extraction}

Given a watermarked image $\hat x$, SemBind performs watermark extraction in three steps: we first run the semantic masker to obtain a binary code and its latent mask $\hat S = S(\hat x;\sigma,K_{\text{perm}}) \in \{-1,+1\}^{C\times H\times W}$, where $K_{\text{perm}}$ is the secret permutation key shared with the embedding process; then the image $\hat x$ is passed through the VAE encoder $\mathcal{E}$ to obtain
a latent representation $\hat z_0 = \mathcal{E}(\hat x)$, and inverted with the inversion scheduler to estimate the (semantically bound)
initial noise $\hat z_T^{(w,\mathrm{sem})} = \mathcal{I}_{0\to T}(\hat z_0;u)$; finally, we unbind the semantic modulation by computing $\hat z_T^{(w)} = \hat S \odot \hat z_T^{(w,\mathrm{sem})}$ and feed $\hat z_T^{(w)}$ into the decoder of the underlying watermark scheme to obtain the extracted watermark $\hat m_w = \mathcal{W}_{\text{decode}}\!\left(\hat z_T^{(w)}\right)$.

\vspace{-6pt}
\section{Experiments}
\label{sec:Experiments}
\subsection{Experimental Setup}
\label{sec:Experimental Setup}

In our primary experiments, we focus on text-to-image latent diffusion models and adopt \emph{Stable Diffusion v2.1}\footnote{\url{https://huggingface.co/stabilityai/stable-diffusion-2-1-base}}. All images are generated at a resolution of \(512 \times 512\) with a latent space of \(4 \times 64 \times 64\). During sampling, we use classifier-free guidance with a scale of \(7.5\) and run \(50\) denoising steps using DPMSolver~\cite{lu2022dpm}. For watermark extraction and forgery attack, we perform diffusion inversion scheduler using the exact inversion method of Hong \textit{et al.}~\cite{hong2024exact} to obtain the latent \(z_T\). Unless otherwise specified, we use \(50\) inversion steps and an inverse order of \(0\). 

\vspace{-10pt}
\paragraph{Training details.}
We instantiate the semantic image encoder backbone as DINOv2-Giant~\cite{oquab2023dinov2} and train the Semantic Masker $f_\theta$ on our large prompt-conditioned DINO embedding corpus (SemCon-3M, in Appendix~\ref{sec:dataset_semcon3m}). Full training details are provided in Appendix~\ref{sec:model_arch_train_details}.

\vspace{-10pt}
\paragraph{Watermarking methods and datasets for evaluation.}
We evaluate SemBind on four representative latent-based watermarking schemes:
Tree-Ring~\cite{wen2023tree}, Gaussian Shading~\cite{yang2024gaussian},
PRC watermark~\cite{gunnundetectable}, and Gaussian Shading++~\cite{yang2025gaussian}.
For each scheme, we compare the original baseline with its SemBind-enhanced variant (denoted ``-S'').
Unless stated otherwise, we follow the default settings of the respective papers for both embedding and detection.
For detection, we set the decision threshold to achieve a (theoretical) false positive rate (FPR) of $10^{-6}$ and report the corresponding true positive rate (TPR).
For SemBind, we set the mask ratio to $\sigma=1$ for Tree-Ring and Gaussian Shading, and $\sigma=0.5$ for PRC and Gaussian Shading++.

All the evaluations are performed on two prompt sets: the MS-COCO~\cite{lin2014microsoft} and the Stable
Diffusion Prompts (SDP) dataset\footnote{\url{https://huggingface.co/datasets/Gustavosta/Stable-Diffusion-Prompts}}.  
All experiments are conducted on NVIDIA RTX A6000 GPUs.

\begin{table}[t]
  \centering
  \footnotesize            
  \setlength{\tabcolsep}{3pt} 
  \caption{Quality of watermarked images.}
  \label{tab:fid_clip_main}
  \begin{tabular}{lcccc}
    \toprule
    \multirow{2}{*}{Method} 
      & \multicolumn{2}{c}{FID$\downarrow$} 
      & \multicolumn{2}{c}{CLIP score$\uparrow$} \\
    \cmidrule(lr){2-3}\cmidrule(lr){4-5}
      & value & $t$-value & value & $t$-value \\
    \midrule
    SD 2.1           & $25.23\pm.18$ &   --    & $0.3629\pm.0006$ &   --    \\
    \midrule
    TR       & $25.43\pm.13$ & 2.581   & $0.3632\pm.0006$ & 0.8278  \\
    \rowcolor{gray!30}
    TR-S     & $25.27\pm.20$ & 0.5229  & $0.3633\pm.0010$ & 0.9526  \\
    GS       & $25.20\pm.22$ & 0.3567  & $0.3631\pm.0005$ & 0.6870  \\
    \rowcolor{gray!30}
    GS-S     & $25.28\pm.20$ & 0.5714  & $0.3634\pm.0011$ & 1.055   \\
    PRC      & $25.23\pm.14$ & 0.1772  & $0.3624\pm.0006$ & 0.8176  \\
    \rowcolor{gray!30}
    PRC-S    & $25.24\pm.10$ & 0.2412  & $0.3625\pm.0006$ & 1.136   \\
    GS++     & $25.22\pm.10$ & 0.1597  & $0.3626\pm.0011$ & 0.6228  \\
    \rowcolor{gray!30}
    GS++-S   & $25.23\pm.12$ & 0.3855  & $0.3627\pm.0008$ & 0.8646  \\
    \bottomrule
  \end{tabular}
\end{table}

\vspace{-4pt}
\subsection{Quality Comparison}
\label{sec:Quality}
 
We assess watermarked image quality using FID~\cite{heusel2017gans} and CLIP score~\cite{radford2021learning}.
Following Tree-Ring~\cite{wen2023tree}, we compute FID using $5{,}000$ MS-COCO prompts and paired real images as ground truth.
For CLIP, we use $1{,}000$ prompts from the SDP dataset and their generated images.
For each method, we run 10 trials and report the mean and variance.

To further verify that our method maintains the provable undetectability property of the underlying method, we also perform a $t$-test using the ten sets of data from each method. Specifically, the goal is to confirm that the metrics of the watermarking method and those of SD 2.1 are statistically indistinguishable, where a smaller $t$-value indicates better results.

The experimental results are shown in~\cref{tab:fid_clip_main}. For the non-provable undetectability method Tree-Ring~\cite{wen2023tree}, the addition of SemBind actually \emph{improves} FID performance. This is because the semantic mask is generated independently of the Tree-Ring perturbation, so their multiplicative interaction acts as a mild, semantics–conditioned randomization of the initial latent, enriching its distribution and slightly improving sample quality.

For provable undetectability methods~\cite{yang2024gaussian,gunnundetectable,yang2025gaussian}, the small fluctuations in the $t$-values indicate that SemBind does not introduce any measurable degradation in FID or CLIP, and are fully consistent with the fact that these schemes remain provable undetectability under semantic masking.

We theoretically prove that SemBind preserves the \emph{provable undetectability} of the underlying latent-based watermark, as formalized in the following theorem.

\begin{theorem}[\textbf{Semantic masking preserves provable undetectability (\textit{informal})}]
For latent-based watermarking scheme $\mathcal{W}$ that is provably undetectable in the single-sample (resp.\ multi-sample) setting, its SemBind variant $\mathcal{W}^{\mathrm{sem}}$ remains provably undetectable in the same setting. 
\end{theorem}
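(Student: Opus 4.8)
The plan is to observe that $\mathcal{W}^{\mathrm{sem}}$ differs from the base scheme $\mathcal{W}$ only by a coordinate-wise sign flip of the watermarked initial latent, and that such a sign flip is a distribution-preserving bijection of the isotropic Gaussian. Any distinguishing advantage against $\mathcal{W}^{\mathrm{sem}}$ can therefore be transferred without loss to a distinguisher against $\mathcal{W}$, so undetectability is inherited verbatim in both the single-sample and multi-sample regimes.

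First I would isolate the sole distributional difference. The non-watermarked model, the base scheme, and its SemBind variant all share the identical denoising-and-decoding map $\Phi$ (the diffusion backbone is unchanged); they differ only in the law of the initial latent $z_T \in \mathbb{R}^{L}$ with $L=CHW$. It thus suffices to argue indistinguishability at the latent level and lift it to images by the trivial post-processing reduction---run $\Phi$ on the received latent and invoke the image distinguisher---which is efficient since $\Phi$ is exactly the sampler. Recall that the sign mask $S=S(x;\sigma,K_{\text{perm}})\in\{-1,+1\}^{L}$ is built from a freshly generated auxiliary image $x$, the public ratio $\sigma$, and the secret permutation key; crucially, its randomness comes only from the auxiliary image and is independent of the watermark randomness ($m_w$, $K_w$) and of the base latent $z_T^{(w)}$.

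The technical core is the elementary invariance lemma: for any fixed $s\in\{-1,+1\}^{L}$, if $Z\sim\mathcal{N}(0,I)$ then $s\odot Z\sim\mathcal{N}(0,I)$, because each $s_i Z_i$ is again standard normal by symmetry and independence across coordinates is untouched. Since this holds for every sign pattern and $S\perp z_T^{(w)}$, the pushforward of $\mathcal{N}(0,I)$ under $z\mapsto S\odot z$ is exactly $\mathcal{N}(0,I)$, uniformly in $\sigma$ (coordinates with $S_i=+1$ are simply fixed). I would then run the reduction: given a purported distinguisher $\mathcal{A}$ against $\mathcal{W}^{\mathrm{sem}}$, construct $\mathcal{B}$ that receives samples drawn either from $\mathcal{W}$ or from the reference $\mathcal{N}(0,I)$, independently samples the masks $\{S_j\}$ exactly as the embedder would, forms $\{S_j\odot z_{T,j}\}$, and feeds them to $\mathcal{A}$. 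On base-watermarked inputs the outputs are distributed exactly as $\mathcal{W}^{\mathrm{sem}}$ latents; on reference Gaussians the lemma makes the outputs exactly reference Gaussians. Hence $\mathcal{B}$ has the same advantage as $\mathcal{A}$, which the undetectability of $\mathcal{W}$ forces to be negligible---statistically in the single-sample Gaussian Shading case where the base law is already exactly $\mathcal{N}(0,I)$, and computationally in the multi-sample PRC and GS++ cases.

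I expect the main obstacle to be definitional bookkeeping rather than any hard inequality. I must (i) pin down the undetectability game so that its ``non-watermarked'' reference latent law coincides with the $\mathcal{N}(0,I)$ that $S\odot(\cdot)$ fixes; (ii) justify that the per-sample masks $\{S_j\}$---which may reuse the secret permutation key across samples but draw fresh auxiliary images---are jointly independent of the base latents, so the product reference law is preserved in the multi-sample setting; and (iii) check that the reduction stays efficient and sample-faithful with shared keys. The sign-flip invariance itself is immediate; the care lies entirely in aligning the two settings' definitions and in the independence argument for the auxiliary-image-derived masks.
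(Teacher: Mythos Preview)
Your proposal is correct and follows essentially the same route as the paper: Gaussian invariance of the initial latent under independent $\{\pm1\}$ sign masking, combined with a data-processing/reduction step that transfers any distinguishing advantage against $\mathcal{W}^{\mathrm{sem}}$ back to one against $\mathcal{W}$. The paper packages the reduction as a generic ``closure under independent post-processing'' lemma (and writes the transformation as $F_{S,\pi}(u)=P_\pi\mathrm{Diag}(S)u$, folding in the permutation), but this is exactly your explicit $\mathcal{B}$-from-$\mathcal{A}$ construction stated abstractly.
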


Intuitively, these schemes produce an initial watermarked latent that is (computationally) indistinguishable from a standard Gaussian, and SemBind only multiplies this latent by an independent $\{\pm1\}$ sign mask, which leaves the Gaussian distribution invariant.
A formal statement of the security game and the full proof are provided in Appendix~\ref{sec:proof_theorem1}.

\begin{table*}[t]
\centering
\setlength{\tabcolsep}{4pt}
\caption{Imprint forgery attack on the SDP dataset for four latent-based watermarking schemes and their SemBind-enhanced variants.}
\label{tab:imprint_sdp}
\resizebox{\textwidth}{!}{%
\begin{tabular}{l c ccc ccc}
\toprule
 & & \multicolumn{3}{c}{Attacker Model: SD~2.1} & \multicolumn{3}{c}{Attacker Model: SD~1.5} \\
\cmidrule(lr){3-5} \cmidrule(lr){6-8}
Method & Step & Det.$\downarrow$ & Bit Acc.$\downarrow$ & PSNR & Det.$\downarrow$ & Bit Acc.$\downarrow$ & PSNR \\
\midrule
TR      & 50/100/150 & 1.00/1.00/1.00 & ---                  & 23.32/22.12/21.34 & 1.00/1.00/1.00 & ---                  & 22.88/21.85/21.17 \\
\rowcolor{gray!30}
TR-S    & 50/100/150 & 0.01/0.02/0.02 & ---                  & 23.33/22.14/21.38 & 0.01/0.01/0.01 & ---                  & 22.88/21.83/21.16 \\
GS      & 50/100/150 & 1.00/1.00/1.00 & 0.9998/1.0000/1.0000 & 23.36/22.16/21.40 & 1.00/1.00/1.00 & 0.9993/0.9998/0.9998 & 22.89/21.86/21.19 \\
\rowcolor{gray!30}
GS-S    & 50/100/150 & 0.04/0.06/0.07 & 0.4818/0.4843/0.4925 & 23.34/22.14/21.37 & 0.03/0.04/0.05 & 0.4893/0.4918/0.4858 & 22.90/21.85/21.17 \\
PRC     & 50/100/150 & 1.00/1.00/1.00 & 1.0000/1.0000/1.0000 & 23.37/22.17/21.40 & 0.99/1.00/1.00 & 0.9949/1.0000/1.0000 & 22.93/21.89/21.22 \\
\rowcolor{gray!30}
PRC-S   & 50/100/150 & 0.06/0.18/0.25 & 0.5266/0.5559/0.5064 & 23.39/22.19/21.42 & 0.00/0.01/0.02 & 0.5015/0.5067/0.5067 & 22.93/21.90/21.21 \\
GS++    & 50/100/150 & 1.00/1.00/1.00 & 0.9995/0.9999/1.0000 & 23.37/22.18/21.42 & 0.98/0.99/1.00 & 0.9803/0.9881/0.9944 & 22.91/21.88/21.19 \\
\rowcolor{gray!30}
GS++-S  & 50/100/150 & 0.15/0.36/0.49 & 0.5699/0.6720/0.7291 & 23.37/22.17/21.41 & 0.02/0.05/0.09 & 0.5063/0.5222/0.5383 & 22.91/21.88/21.20 \\
\bottomrule
\end{tabular}%
}
\end{table*}

\begin{table*}[t]
\centering
\caption{Imprint forgery attack on the COCO dataset for four latent-based watermarking schemes and their SemBind-enhanced variants.}
\label{tab:imprint_coco}
\resizebox{\textwidth}{!}{%
\begin{tabular}{l c ccc ccc}
\toprule
 & & \multicolumn{3}{c}{Attacker Model: SD~2.1} & \multicolumn{3}{c}{Attacker Model: SD~1.5} \\
\cmidrule(lr){3-5} \cmidrule(lr){6-8}
Method & Step & Det.$\downarrow$ & Bit Acc.$\downarrow$ & PSNR & Det.$\downarrow$ & Bit Acc.$\downarrow$ & PSNR \\
\midrule
TR      & 50/100/150 & 0.99/1.00/1.00 & ---                  & 23.34/22.15/21.38 & 0.99/1.00/1.00 & ---                  & 22.88/21.84/21.15 \\
\rowcolor{gray!30}
TR-S    & 50/100/150 & 0.08/0.08/0.08 & ---                  & 23.33/22.13/21.36 & 0.07/0.07/0.07 & ---                  & 22.86/21.82/21.14 \\
GS      & 50/100/150 & 1.00/1.00/1.00 & 0.9998/0.9999/1.0000 & 23.34/22.15/21.37 & 1.00/1.00/1.00 & 0.9960/1.0000/1.0000 & 22.87/21.83/21.15 \\
\rowcolor{gray!30}
GS-S    & 50/100/150 & 0.09/0.10/0.10 & 0.5084/0.5060/0.5077 & 23.32/22.11/21.34 & 0.08/0.09/0.10 & 0.5078/0.5122/0.5122 & 22.87/21.82/21.13 \\
PRC     & 50/100/150 & 1.00/1.00/1.00 & 1.0000/1.0000/1.0000 & 23.37/22.15/21.38 & 0.99/1.00/1.00 & 0.9941/1.0000/1.0000 & 22.92/21.87/21.18 \\
\rowcolor{gray!30}
PRC-S   & 50/100/150 & 0.08/0.14/0.27 & 0.5444/0.5547/0.6211 & 23.37/22.17/21.40 & 0.05/0.05/0.06 & 0.5241/0.5241/0.5295 & 22.92/21.89/21.19 \\
GS++    & 50/100/150 & 1.00/1.00/1.00 & 0.9994/0.9999/0.9999 & 23.36/22.15/21.37 & 0.99/1.00/1.00 & 0.9845/0.9951/0.9964 & 22.92/21.87/21.17 \\
\rowcolor{gray!30}
GS++-S  & 50/100/150 & 0.15/0.43/0.59 & 0.5050/0.7096/0.7769 & 23.36/22.16/21.39 & 0.05/0.07/0.13 & 0.5166/0.5323/0.5586 & 22.92/21.88/21.19 \\
\bottomrule
\end{tabular}%
}
\end{table*}

\begin{table}[t]
  \centering
  \setlength{\tabcolsep}{3pt}
  \caption{Reprompt forgery attack for four latent-based watermarking schemes and their SemBind-enhanced variants.}
  \label{tab:reprompt}
  \scriptsize
  \resizebox{1.0\linewidth}{!}{%
  \begin{tabular}{lcccccccc}
    \toprule
    & \multicolumn{4}{c}{SD 2.1 attacker} & \multicolumn{4}{c}{SD 1.5 attacker} \\
    \cmidrule(lr){2-5} \cmidrule(lr){6-9}
    & \multicolumn{2}{c}{SDP} & \multicolumn{2}{c}{COCO} 
    & \multicolumn{2}{c}{SDP} & \multicolumn{2}{c}{COCO} \\
    Method 
      & Det.$\downarrow$ & Bit Acc.$\downarrow$
      & Det.$\downarrow$ & Bit Acc.$\downarrow$
      & Det.$\downarrow$ & Bit Acc.$\downarrow$
      & Det.$\downarrow$ & Bit Acc.$\downarrow$ \\
    \midrule
    TR     & 0.99 & ---    & 1.00 & ---    & 1.00 & ---    & 1.00 & ---    \\
    \rowcolor{gray!30}
    TR-S   & 0.54 & ---    & 0.14 & ---    & 0.48 & ---    & 0.18 & ---    \\
    GS     & 1.00 & 0.9887 & 0.99 & 0.9788 & 1.00 & 0.9895 & 1.00 & 0.9823 \\
    \rowcolor{gray!30}
    GS-S   & 0.60 & 0.7054 & 0.06 & 0.4894 & 0.56 & 0.6867 & 0.10 & 0.5025 \\
    PRC    & 0.95 & 0.9745 & 0.93 & 0.9662 & 0.94 & 0.9711 & 0.93 & 0.9656 \\
    \rowcolor{gray!30}
    PRC-S  & 0.51 & 0.7338 & 0.50 & 0.7287 & 0.12 & 0.5622 & 0.11 & 0.5571 \\
    GS++   & 0.91 & 0.9527 & 0.84 & 0.9132 & 0.69 & 0.8377 & 0.57 & 0.7732 \\
    \rowcolor{gray!30}
    GS++-S & 0.43 & 0.7044 & 0.25 & 0.6150 & 0.15 & 0.5679 & 0.03 & 0.5107 \\
    \bottomrule
  \end{tabular}%
  }
\end{table}

\label{sec:ROBUSTNESS}
\begin{table*}[t]
  \centering
  \small
  \setlength{\tabcolsep}{4pt}
  \caption{Robustness test on \textsc{sdp}. ``Average (Distortion)" represents the mean value across all distortion types.}
  \label{tab:robust_sdp}
  \resizebox{\textwidth}{!}{%
  \begin{tabular}{lcccccccc}
    \toprule
    Method & None (Det./Acc.) & JPEG (QF=70) & Brightness (×1.0) & GauBlur ($r$=3) & GauNoise ($\sigma$=0.01) & MedFilter ($k$=7) & Resize (×0.5) & Average (Distortion)\\
    \midrule
    TR     & 1.00 & 0.98 & 1.00 & 1.00 & 1.00 & 1.00 & 1.00 & 0.997\\
    \rowcolor{gray!30}
    TR-S   & 1.00 & 0.98 & 1.00 & 0.99 & 0.99 & 0.99 & 1.00 & 0.992 \\
    GS     & 1.00/1.0000 & 0.99/0.9996 & 1.00/0.9966 & 1.00/0.9966 & 1.00/0.9989 & 1.00/0.9984 & 1.00/0.9999 & 0.998/0.99833\\
    \rowcolor{gray!30}
    GS-S   & 1.00/0.9982 & 0.99/0.9871 & 0.99/0.9888 & 0.99/0.9796 & 0.99/0.9894 & 1.00/0.9884 & 1.00/0.9958 &0.993/0.98818\\
    PRC    & 1.00/1.0000 & 1.00/1.0000 & 1.00/1.0000 & 1.00/1.0000 & 1.00/1.0000 & 0.97/0.9807 & 1.00/1.0000 &0.995/0.99678\\
    \rowcolor{gray!30}
    PRC-S  & 1.00/1.0000 & 0.99/0.9902 & 0.98/0.9850 & 0.86/0.9271 & 0.98/0.9896 & 0.83/0.9250 & 1.00/1.0000 &0.940/0.96948\\
    GS++   & 1.00/1.0000 & 1.00/0.9973 & 0.97/0.9841 & 0.98/0.9765 & 0.93/0.9599 & 0.96/0.9707 & 1.00/0.9988 &0.973/0.98122\\
    \rowcolor{gray!30}
    GS++-S & 1.00/0.9998 & 0.96/0.9752 & 0.96/0.9785 & 0.94/0.9466 & 0.89/0.9402 & 0.85/0.9138 & 0.99/0.9932 &0.932/0.95792\\
    \bottomrule
  \end{tabular}%
  }
\end{table*}

\begin{table*}[t]
  \centering
  \small
  \setlength{\tabcolsep}{4pt}
  \caption{Robustness test on \textsc{coco}. ``Average (Distortion)" represents the mean value across all distortion types.}
  \label{tab:robust_coco}
  \resizebox{\textwidth}{!}{%
  \begin{tabular}{lcccccccc}
    \toprule
    Method & None (Det./Acc.) & JPEG (QF=70) & Brightness (×1.0) & GauBlur ($r$=3) & GauNoise ($\sigma$=0.01) & MedFilter ($k$=7) & Resize (×0.5) & Average (Distortion)\\
    \midrule
    TR     & 1.00 & 1.00 & 1.00 & 1.00 & 1.00 & 1.00 & 1.00 &1.000\\
    \rowcolor{gray!30}
    TR-S   & 1.00 & 1.00 & 0.99 & 1.00 & 1.00 & 1.00 & 1.00 &0.998\\
    GS     & 1.00/1.0000 & 1.00/0.9980 & 1.00/0.9984 & 1.00/0.9967 & 1.00/0.9988 & 1.00/0.9980 & 1.00/1.0000 &1.000/0.99832\\
    \rowcolor{gray!30}
    GS-S   & 1.00/0.9996 & 1.00/0.9934 & 1.00/0.9915 & 1.00/0.9830 & 0.99/0.9896 & 1.00/0.9871 & 1.00/0.9964 &0.998/0.99017\\
    PRC    & 1.00/1.0000 & 0.99/0.9948 & 0.95/0.9751 & 0.96/0.9854 & 1.00/1.0000 & 0.91/0.9643 & 1.00/1.0000 &0.968/0.98660\\
    \rowcolor{gray!30}
    PRC-S  & 1.00/1.0000 & 0.97/0.9855 & 0.91/0.9593 & 0.84/0.9261 & 0.99/0.9945 & 0.80/0.8910 & 1.00/1.0000 &0.918/0.95940\\
    GS++   & 1.00/1.0000 & 0.97/0.9796 & 0.98/0.9877 & 0.97/0.9700 & 0.87/0.9259 & 0.95/0.9598 & 1.00/0.9980 &0.957/0.97017\\
    \rowcolor{gray!30}
    GS++-S & 1.00/0.9998 & 0.96/0.9752 & 0.96/0.9757 & 0.92/0.9388 & 0.84/0.9146 & 0.89/0.9279 & 0.98/0.9864 &0.925/0.95310\\
    \bottomrule
  \end{tabular}%
  }
\end{table*}

\vspace{-8pt}
\subsection{Defense Against Forgery Attack}
\label{sec:Defense Against Forgery Attack}

We evaluate SemBind under the two canonical black-box forgery attacks introduced by M\"uller et al.~\cite{muller2025black}: the \emph{imprinting attack} and the \emph{reprompting attack}.

\vspace{-10pt}
\paragraph{Imprinting attack.}
We consider two attacker models: \emph{Stable Diffusion v2.1} (the ``match'' case) and \emph{Stable Diffusion v1.5} (the ``mismatch'' case). For each watermarking scheme and attacker model, we generate 100 watermarked images on both SDP and COCO prompt validation sets. As target images for the attacker, we randomly sample 100 natural photographs from the COCO natural image validation set. Following~\cite{muller2025black}, the attacker optimizes the target latents for 150 gradient steps with a learning rate of $0.01$, and we probe watermark detection every 50 steps.

\cref{tab:imprint_sdp} and~\cref{tab:imprint_coco} report the imprinting results. When the attacker uses SD~2.1 (the ``match'' case), SemBind already results in substantial gains: for Tree-Ring and Gaussian Shading, it reduces the detection rate (Det.) from essentially $100\%$ to below $10\%$ on both SDP and COCO, while the bit accuracy for GS decreases from nearly perfect ($\approx 1.0$) to around $0.5$, i.e., close to random guessing. For PRC and Gaussian Shading++, SemBind again cuts Det.\ by more than half and lowers Bit Acc.\ by over $20\%$. 

When the attacker instead uses SD~1.5 (the ``mismatch'' case), SemBind becomes even more effective: across all four watermarking schemes, Det.\ is driven to very low values and Bit Acc.\ is pushed much closer to $0.5$, corresponding to near-random decoding. This indicates that SemBind’s defensive advantage increases as the attacker’s model becomes more mismatched to the defended generator, yielding near-perfect protection against imprinting in this setting. This scenario is also closer to practical real-world deployments, where attackers typically have access only to the mismatched diffusion models rather than the exact one.

\vspace{-12pt}
\paragraph{Reprompting attack.}
For each of the four latent-based watermarking schemes and their SemBind-enhanced variants, we use the same SDP and COCO \emph{prompt-evaluation} sets, each containing 100 prompts.
We use mismatched text prompts sampled from the Inappropriate Image Prompts (I2P) dataset\footnote{\url{https://huggingface.co/datasets/AIML-TUDA/i2p}}, with 100 mismatched prompts in total, for attack.
As attacker models, we again consider both \emph{Stable Diffusion v2.1} and \emph{Stable Diffusion v1.5}.

\cref{tab:reprompt} summarizes the reprompting results.
Across all watermarking schemes, SemBind consistently and substantially attenuates forgery: for every setting, the SemBind variants reduce the detection rate (Det.) by more than $40\%$ and lower bit accuracy (Bit Acc.) by at least $25\%$ compared to the original watermark baselines.
The gains are particularly pronounced on COCO, where Det.\ often decreases to $20\%$ or less and Bit Acc.\ approaches $0.5$, corresponding to near-random message recovery.
As in the imprinting case, SemBind is even more effective against the mismatch-case attacker (SD~1.5), where forged images are rarely accepted and recovered payloads are close to random.

\vspace{-6pt}
\subsection{Robustness}
\label{robustness main paper}
\vspace{-2pt}
To evaluate the robustness of the method, we selected several common image manipulations: (a) JPEG Compression, $QF=70$ (JPEG). (b) Brightness, $factor=1$. (c) Gaussian Blur, $radius= 3$ (GauBlur). (d) Gaussian Noise, $\mu = 0$, $\sigma = 0.01$ (GauNoise). (e) Median Filtering, $kernel\_size=7$ (MedFilter). (f) 50\% Resize and restore (Resize). We conducted experiments on the MS-COCO and SDP datasets respectively, testing 100 watermarked images under each type of distortion. For each method, we report the metrics under each distortion type and compute the average across all distortions. The experimental results are presented in \cref{tab:robust_sdp} and \cref{tab:robust_coco}.

Overall, SemBind causes only a mild robustness drop across schemes. The impact on Tree-Ringand Gaussian Shading is particularly small, and their detection performance remains close to the original baselines under all tested distortions. For PRC and Gaussian Shading++, we observe a moderate but still acceptable degradation: the true positive rate decreases by about 3--5\% on average and bit accuracy drops by about 2--3\%, with the most visible gaps under  Gaussian blur and median filtering. This trend is consistent with the underlying robustness of the base watermarks: schemes that are robust to common perturbations remain robust after adding SemBind, while schemes with weaker baseline robustness exhibit larger drops in their SemBind-enabled variants.




\vspace{-6pt}
\section{Discussion}
\label{sec:Limitations}
SemBind introduces additional training cost to learn the semantic masker.
In typical platform settings, however, this is a one-time investment: since it interfaces with the system only through a semantic code and a masking operation in the initial latent, upgrading the underlying latent watermark scheme does not require retraining the masker.

SemBind requires generating an auxiliary clean image at deployment time to extract the semantic code used for masking. But this additional generation step is necessary to preserve provable undetectability: maintaining undetectability typically requires embedding to occur entirely in the initial latent space, yet at that point no image is available to infer semantics from. A possible solution is to start from the prompt, but this faces the challenge of aligning the prompt with the generated image, and currently there is no satisfactory method that achieves this reliably.

Moreover, we defer further discussion of adaptive attackers to Appendix~\ref{app:adaptive_attack}, and additional generalization experiments to Appendix~\ref{app:generalization}.
Our experiments suggest that SemBind generalizes well across different models and prompt distributions, and that adaptive attackers cannot easily train a surrogate semantic masker or reliably spoof it in the pixel domain.

\vspace{-4pt}
\section{Conclusion}
\label{sec:Conclusion and Future Work}
We propose \emph{SemBind}, the first defense for latent-based diffusion watermarks against black-box forgery by binding latent watermark signals to image semantics via a learned semantic masker. SemBind applies broadly to existing latent-based schemes. It reduces false acceptance under imprinting and reprompting, and preserves provable undetectability whenever the underlying scheme has it.

\section*{Impact Statement}
This paper studies defenses for latent-based watermarking in diffusion models, with the goal of improving provenance tracking and copyright authentication in the presence of black-box forgery attacks. If deployed responsibly, our method can help AI service providers and content platforms better identify and deter the unauthorized generation of misleading or harmful content by preventing attackers from transferring a valid watermark to illicit semantics. We believe the primary impact of this work is positive: it advances the reliability of watermark-based provenance for generative models while making explicit the assumptions and limitations under which such defenses are effective.


\bibliography{example_paper}
\bibliographystyle{icml2026}

\newpage
\appendix
\onecolumn

\clearpage
\setcounter{page}{1}

\section{Dataset Construction (SemCon-3M)}
\label{sec:dataset_semcon3m}
We next describe how \textbf{SemCon-3M} is built. Constrained by computational resources, we first mine and cluster prompts from the MS-COCO-2017~\cite{lin2014microsoft} and the Stable
Diffusion Prompts (SDP) dataset\footnote{\url{https://huggingface.co/datasets/Gustavosta/Stable-Diffusion-Prompts}} to form a compact and representative prompt set. To train a semantic masker that yields similar outputs for images generated from the same prompt (and their semantics-preserving transforms), we then generate multiple images per prompt with \emph{Stable Diffusion v2.1}\footnote{\url{https://huggingface.co/stabilityai/stable-diffusion-2-1-base}} and apply semantics-preserving augmentations. Finally, we summarize the resulting dataset composition and discuss how the COCO/SDP imbalance influences downstream results.

\paragraph{Prompt Mining and Clustering.}
We collect caption prompts from the training split of MS-COCO-2017~\cite{lin2014microsoft} and from the \texttt{train.parquet} file of the Stable Diffusion Prompts (SDP) dataset\footnote{\url{https://huggingface.co/datasets/Gustavosta/Stable-Diffusion-Prompts}}, normalize whitespace, and deduplicate at the string level. Each prompt is embedded with a sentence-transformer \textit{BGE-large-en-v1.5\footnote{\url{https://huggingface.co/BAAI/bge-large-en-v1.5}}
} and \(\ell_2\)-normalized so that Euclidean distance rankings coincide with cosine similarity, making \(k\)-means effectively operate in the spherical setting. To scale to millions of prompts, we run Mini-Batch \(k\)-means~\cite{sculley2010web} with \(K=64{,}000\) clusters. Within each cluster we rank candidates by proximity to the centroid and select \(M_i\!\in\!\{1,2\}\) prompts in proportion to the cluster’s size, enforcing intra-cluster diversity by rejecting any candidate whose cosine similarity to already selected ones exceeds \(0.8\). We do this to preferentially select prompts that are semantically typical. This produces \(\approx 96{,}041\) representative prompts that cover the prompt distribution while avoiding near duplicates.

\paragraph{Data Generation and Augmentation.}
For each selected prompt, we synthesize \textbf{16} independent images with \emph{Stable Diffusion v2.1} (guidance scale \(=7.5\), \(50\) denoising steps with DPMSolver~\cite{lu2022dpm}, output \(512{\times}512\), latent \(4{\times}64{\times}64\)). We then create \textbf{16} additional, semantics–preserving views by applying one augmentation per original, sampled from a fixed operator pool and executed in a stable order (composition \(\rightarrow\) geometry \(\rightarrow\) resolution/compression \(\rightarrow\) color/blur/noise \(\rightarrow\) flip), which avoids black borders and preserves content semantics. The pool includes:

\begin{itemize}
  \item \textbf{RandomResizedCrop} (scale \(0.4\!-\!1.0\), aspect \(3{:}4\!-\!4{:}3\), antialias).
  \item \textbf{RandomPerspective} (distortion\_scale \(=0.25\)).
  \item \textbf{RandomAffine} (degrees \(\pm 15^\circ\), translate \(\le 10\%\), scale \(0.9\!-\!1.1\), shear \(\pm(5^\circ,3^\circ)\), bicubic).
  \item \textbf{ResizeDownUp} (downscale ratio \(0.5\!-\!0.85\), then upscale back).
  \item \textbf{JPEG Compression} (quality \(55\!-\!85\)).
  \item \textbf{ColorJitter} (brightness/contrast/saturation \(0.3\)).
  \item \textbf{RandomGrayscale}.
  \item \textbf{Gaussian Blur} (Gaussian radius \(0.8\!-\!1.5\)).
  \item \textbf{Additive Gaussian noise} (std \(0.005\!-\!0.03\)).
  \item \textbf{Light salt\&pepper noise} (prob \(0.002\!-\!0.01\)).
  \item \textbf{Deterministic horizontal flip}.
\end{itemize}

For each generated image, we randomly select one or two operators from the above pool and apply them to that image to obtain one augmented view. This yields \(32\) views per prompt (16 generations + 16 augmented).

\paragraph{Final Dataset Composition.}
From the curated pool of $96{,}041$ prompts (selected from COCO and SDP), we generate $32$ views per prompt (16 independent generations and 16 semantics-preserving augmentations).
We then extract DINOv2-giant~\cite{oquab2023dinov2} CLS embeddings (dimension $1536$) for all images, yielding \textbf{SemCon-3M}: a corpus of $96{,}041 \times 32 = 3{,}073{,}312$ embeddings.
By source, COCO contributes $84{,}675$ prompts ($88.2\%$; $2{,}709{,}600$ embeddings) and SDP contributes $11{,}366$ prompts ($11.8\%$; $363{,}712$ embeddings), as summarized in Table~\ref{tab:semcon3m_composition}.
Because COCO dominates the training distribution, we consistently observe stronger performance on COCO than on SDP, which we attribute to this dataset imbalance rather than a method-specific bias.

\begin{table}[t]
  \centering
  \caption{SemCon-3M composition by source. Each prompt contributes 32 views (images/embeddings).}
  \label{tab:semcon3m_composition}
  \begin{tabular}{lrr}
    \toprule
    Source &  \# Embeddings & \% Embeddings \\
    \midrule
    COCO &  \textbf{2{,}709{,}600} & \textbf{88.2}\% \\
    SDP 
         &  \textbf{363{,}712} & \textbf{11.8}\% \\
    \midrule
    Total & \textbf{3{,}073{,}312} & 100\% \\
    \bottomrule
  \end{tabular}
\end{table}

Because COCO comprises the vast majority of \textbf{SemCon-3M}, the semantic masker is trained more fully on COCO-style prompts and imagery. As a result, we consistently observe stronger anti-forgery resistance (lower false-accept rates under imprinting/reprompting) and higher robustness to common perturbations on COCO than on SDP. We attribute this gap to data imbalance rather than a method-specific bias. The experimental result is analyzed in ~\cref{sec:addition_exp}. 

We will publicly release \textbf{SemCon-3M} to the research community in the future.

\section{Model Architecture \& Training Details}
\label{sec:model_arch_train_details}

\paragraph{Backbone \& Embeddings.} We adopt a frozen DINOv2-giant~\cite{oquab2023dinov2} vision encoder to obtain a single global CLS embedding $e\!\in\!\mathbb{R}^{1536}$ per image. All images are preprocessed using the standard DINOv2 pipeline (resize/crop and normalization to the encoder’s default statistics). Unless otherwise stated, CLS embeddings are $\ell_2$–normalized before being fed to our hashing network.

\paragraph{Network Architecture Details.} The semantic masker $f_\theta$ is a lightweight MLP-style hashing network with three modules: \begin{itemize} \item \textbf{Encoder} $\mathrm{Enc}:\mathbb{R}^{1536}\!\to\!\mathbb{R}^{H}$: a stack of residual fully-connected blocks (depth $2$ by default) with BatchNorm and GELU; we set hidden width $H\!=\!2048$. \item \textbf{Projection head} $\mathrm{Proj}:\mathbb{R}^{H}\!\to\!\mathbb{R}^{D}$: a small MLP that maps to $D$-dimensional features, followed by $\ell_2$ normalization; we use $D=8192$ in experiments. \item \textbf{Hash head} $\mathrm{Hash}:\mathbb{R}^{D}\!\to\!\mathbb{R}^{B}$: residual fully-connected blocks ending in a linear layer that outputs $B$ hash logits; we use $B\!=\!1024$. \end{itemize} At inference, given CLS embedding $e$, the network first produces a \emph{hash logit} $\ell=\mathrm{Hash}(\mathrm{Proj}(\mathrm{Enc}(e)))\in\mathbb{R}^B$, then a \emph{soft binary code} $b=\tanh(s\,\ell)\in[-1,1]^B$, and finally a \emph{binary code} $m=\bigl(\mathrm{sign}(b)+1\bigr)/2\in\{0,1\}^B$, where $s\!>\!0$ controls the soft sign’s sharpness.

\paragraph{Training Schedule: losses, temperatures, and hardness.} We train $f_\theta$ in two stages, operating in \emph{logit mode} (i.e., losses consume logits or their $\tanh$ transforms).

\textbf{Stage-1 (feature contrast).} We optimize $\mathrm{Enc}$+$\mathrm{Proj}$ with supervised contrastive loss on normalized features $z_i\!\in\!\mathbb{R}^D$: \[ \mathcal{L}_{\mathrm{sup}} = -\frac{1}{N}\sum_{i=1}^N \frac{1}{|P(i)|} \sum_{p\in P(i)} \log \frac{\exp(z_i^\top z_p/\tau)}{\sum_{a\neq i}\exp(z_i^\top z_a/\tau)}. \] We use Adam with $\text{lr}\!=\!1\!\times\!10^{-4}$, gradient clipping $1.0$, temperature $\tau\!=\!0.07$ for the first half of training and $0.10$ for the second half. Default epochs: $\texttt{epoch1}=180$. \textbf{Stage-2 (hash contrast + regularizers).} We freeze the backbone and Stage-1 modules, and train $\mathrm{Hash}$ to produce near-binary, balanced, decorrelated codes. Let $\ell_i$ be logits and $b_i=\tanh(s\,\ell_i)$ the soft binary codes. We apply a supervised contrastive loss in code space with temperature $\tau_h$, \[ \mathcal{L}_{\mathrm{hash}} = -\frac{1}{N}\sum_{i=1}^N \frac{1}{|P(i)|} \sum_{p\in P(i)} \log \frac{\exp\bigl((b_i^\top b_p/B)/\tau_h\bigr)}{\sum_{a\neq i}\exp\bigl((b_i^\top b_a/B)/\tau_h\bigr)}, \] and add three standard regularizers on $\{b_i\}$: quantization $\mathcal{L}_{\mathrm{q}}=\mathbb{E}[\,1-|b_i|\,]$, bit-balance $\mathcal{L}_{\mathrm{bal}}=\frac{1}{B}\sum_{k=1}^B \bigl(\frac{1}{N}\sum_i b_{i,k}\bigr)^2$, decorrelation $\mathcal{L}_{\mathrm{dcr}}=\lVert C-I\rVert_F^2$ (where $C$ is the batch covariance). To stabilize codes, we jitter features twice and penalize their discrepancy $\mathcal{L}_{\mathrm{cons}}=\mathbb{E}[\lVert b_i^{(1)}-b_i^{(2)}\rVert_1]$. The stage objective is \[ \mathcal{L} = \mathcal{L}_{\mathrm{hash}} + \lambda_{\mathrm{q}}\mathcal{L}_{\mathrm{q}} + \lambda_{\mathrm{bal}}\mathcal{L}_{\mathrm{bal}} + \lambda_{\mathrm{dcr}}\mathcal{L}_{\mathrm{dcr}} + \lambda_{\mathrm{cons}}\mathcal{L}_{\mathrm{cons}}. \] We use Adam with $\text{lr}\!=\!1\!\times\!10^{-4}$, jitter $\sigma\!=\!0.01$ (no feature dropout), and a \emph{hardness} schedule for $s$: $s\_{\text{init}}\!=\!1.0$ and at epochs $\{20,40,70\}$ we multiply $s$ by $\gamma\!=\!2.5$. Default epochs: $\texttt{epoch2}=160$. (At evaluation, we use a fixed inference sharpness, e.g., $s\!=\!12$ to binarize.)

\emph{Note on ablations.} The effect of \texttt{epoch1}/\texttt{epoch2} is analyzed in Sec.~\ref{sec:addition_exp}; here we keep the schedule fixed and report the default settings above.

Tables~\ref{tab:train_hparams} and \ref{tab:masker_arch} summarize the model architecture and training hyperparameters; we will publicly release our code to the research community.

\begin{table}[t]
  \centering
  \caption{Training objectives and hyperparameters.}
  \label{tab:train_hparams}
  \footnotesize
  \setlength{\tabcolsep}{4.5pt}
  \renewcommand{\arraystretch}{1.15}
  \begin{tabular}{ll}
    \toprule
    \textbf{Setting} & \textbf{Value} \\
    \midrule
    Widths $(H,D,B)$ & $2048,\;8192,\;1024$ \\
    Norm/Act & BN (eps $10^{-5}$, mom $0.1$), GELU \\
    \midrule
    Stage-1 loss & SupCon; $\tau: 0.07 \!\to\! 0.10$ (mid-training) \\
    Stage-1 opt & Adam, lr $1{\times}10^{-4}$, grad-clip $1.0$ \\
    Stage-1 epochs & $180$ \\
    \midrule
    Stage-2 loss & Hash SupCon ($\tau_h{=}0.10$) \\
    Regularizers & $\lambda_q{=}0.10,\;\lambda_{\text{bal}}{=}0.10,$ \\
                 & $\lambda_{\text{dcr}}{=}0.10,\;\lambda_{\text{cons}}{=}0.20$ \\
    Hardness $s$ & $1.0 \xrightarrow{\times\,2.5}$ at epochs $20/40/70$ \\
    Stage-2 opt & Adam, lr $1{\times}10^{-4}$; jitter $\sigma{=}0.01$ \\
    Stage-2 epochs & $160$ \\
    \midrule
    PK sampling & $P{=}64,\;K{=}16$ (batch $=PK$) \\
    Inference $s$ & $12$ (for $\tanh$ at test) \\
    \bottomrule
  \end{tabular}
\end{table}

\begin{table}[t]
  \centering
  \caption{Semantic masker architecture. DINOv2-giant is frozen and we use its CLS embedding as input.}
  \label{tab:masker_arch}
  \small
  \setlength{\tabcolsep}{4pt}
  \renewcommand{\arraystretch}{1.12}
  \begin{tabularx}{\columnwidth}{@{} l X c @{}}
    \toprule
    \textbf{Module} & \textbf{Layer (settings)} & \textbf{Output} \\
    \midrule
    Backbone & DINOv2-giant (ViT-G/14), CLS only;  & $\mathbb{R}^{1536}$ \\
    \midrule
    Enc  & Linear $1536\!\to\!2048$ + BN + GELU & $\mathbb{R}^{2048}$ \\
         & Residual FC: [Linear $2048\!\to\!2048$ + BN + GELU] $\times 1$ & $\mathbb{R}^{2048}$ \\
    \midrule
    Proj & Linear $2048\!\to\!8192$ + BN + GELU & $\mathbb{R}^{8192}$ \\
         & $\ell_2$-normalize feature $z$ & $\mathbb{R}^{8192}$ \\
    \midrule
    Hash & Residual FC: [Linear $D\!\to\!D$ + BN + GELU] $\times 2$ & $\mathbb{R}^{D}$ \\
         & Linear $D\!\to\!B$ (hash logits $\ell$) & $\mathbb{R}^{B}$ \\
    \bottomrule
  \end{tabularx}
\end{table}

\paragraph{Batch Construction via PK Sampling.}
Our dataset contains $\sim\!10^5$ prompt classes; uniform random batching would yield almost-all-negative batches, weakening the supervised contrast.
We therefore adopt a PK-sampling strategy per mini-batch: sample $P$ distinct prompts and $K$ views per prompt to form a batch of size $P\!\times\!K$ with many in-class positives.
Concretely, we maintain per-class index lists, reshuffle them each epoch, and draw contiguous $K$-sized chunks per selected label; if any class runs out of $K$ samples, we finish the epoch to avoid label imbalance in incomplete batches.
Unless specified, we use $P\!=\!64$, $K\!=\!16$ (batch size $1024$), which balanced GPU throughput and contrastive signal quality in our setup.

\begin{table*}[t]
  \centering
  \caption{Semantic masker invariance/separability on COCO vs.\ SDP (SD-2.1, Guidance Scale 7.5, \(B{=}1024\); averages over \(1{,}000\) prompts). }
  \label{tab:sem_masker_coco_sdp}
  \setlength{\tabcolsep}{5pt}
  \renewcommand{\arraystretch}{1.1}
  \resizebox{\linewidth}{!}{
  \begin{tabular}{lccccccccc}
    \toprule
    Dataset & Intra-Orig\(\downarrow\) & Ref-vs-Dist\(\downarrow\) & All-20\(\downarrow\) & Bit Ent. & Max dist\(\uparrow\) & Min dist\(\uparrow\) & Mean dist\(\uparrow\) & Entropy (diff-prompt)\(\uparrow\) \\
    \midrule
    COCO & 73.47 & 22.99 & 61.50 & 0.131 & 573.02 & 319.07 & 508.77 & 0.995 \\
    SDP & 102.77 & 35.92 & 86.59 & 0.203 & 552.84 & 237.82 & 428.68 & 0.810 \\
    \bottomrule
  \end{tabular}}
\end{table*}

\begin{table*}[t]
  \centering
  \caption{Epoch ablation. Results averaged over 500 prompts. Lower is better for the first three columns.}
  \label{tab:epoch_ablation}
  \footnotesize
  \renewcommand{\arraystretch}{1.08}
  \setlength{\tabcolsep}{6pt}
  \begin{tabular}{@{} c c c c c c @{}}
    \toprule
    Ep1 & Ep2 & Intra-Orig$\downarrow$ & Ref-vs-Dist$\downarrow$ & All-20$\downarrow$ & Bit Ent. \\
    \midrule
    160 & 160 & 71.98 & 23.23 & 60.43 & 0.129 \\
    160 & 180 & 71.03 & 24.79 & 61.05 & 0.130 \\
    160 & 200 & 71.66 & 22.12 & 60.00 & 0.128 \\
    170 & 160 & 73.54 & 22.60 & 61.53 & 0.131 \\
    170 & 190 & 72.22 & 22.70 & 59.67 & 0.128 \\
    \textbf{180} & \textbf{160} & \textbf{66.92} & \textbf{21.83} & \textbf{56.61} & \textbf{0.121} \\
    180 & 180 & 76.64 & 24.34 & 63.92 & 0.136 \\
    180 & 200 & 69.72 & 25.20 & 60.92 & 0.130 \\
    200 & 250 & 69.13 & 22.95 & 59.07 & 0.126 \\
    \bottomrule
  \end{tabular}
\end{table*}

\section{Experiments on Semantic Masker and Ablation Study}
\label{sec:addition_exp}

\paragraph{Semantic masker invariance and separability.}
We assess (i) within-prompt tightness and (ii) across-prompt separability.  
For each of \(1{,}000\) random prompts (COCO or SDP), we sample \(N_o{=}10\) originals with SD-2.1 (512\(\times\)512, 50 steps, guidance 7.5), create \(N_d{=}10\) semantics-preserving distortions of the first original image, and compute \(B{=}1024\)-bit codes (inference \(s{=}12\)). Let \(\mathrm{Ham}(\mathbf{u},\mathbf{v})=\lVert \mathbf{u}\oplus\mathbf{v}\rVert_0\). With originals \(\{\mathbf{m}^{\mathrm{orig}}_i\}_{i=1}^{N_o}\subset\{0,1\}^B\) and distortions \(\{\mathbf{m}^{\mathrm{dist}}_j\}_{j=1}^{N_d}\), we report:
\begin{equation}
\label{eq:intra-ref}
\begin{aligned}
\mathrm{Intra\text{-}Orig}
&= \frac{2}{N_o(N_o\!-\!1)}
   \sum_{1\le i<j\le N_o}
   \mathrm{Ham}\!\bigl(\mathbf{m}^{\mathrm{orig}}_i,\mathbf{m}^{\mathrm{orig}}_j\bigr),\\[2pt]
\mathrm{Ref\text{-}vs\text{-}Dist}
&= \frac{1}{N_d}\sum_{j=1}^{N_d}
   \mathrm{Ham}\!\bigl(\mathbf{m}^{\mathrm{orig}}_1,\mathbf{m}^{\mathrm{dist}}_j\bigr).
\end{aligned}
\end{equation}
Let \(\mathcal{M}=\{\mathbf{m}^{\mathrm{orig}}_i\}_{i=1}^{N_o}\cup\{\mathbf{m}^{\mathrm{dist}}_j\}_{j=1}^{N_d}\) and enumerate \(\mathcal{M}=\{\mathbf{m}_a\}_{a=1}^{N}\) with \(N{=}N_o\!+\!N_d{=}20\). Then
\begin{equation}
\label{eq:all20}
\mathrm{All\text{-}20}
= \frac{2}{N(N\!-\!1)}
  \sum_{1\le a<b\le N}
  \mathrm{Ham}\!\bigl(\mathbf{m}_a,\mathbf{m}_b\bigr).
\end{equation}
Bit entropy uses per-bit frequency \(p_k=\tfrac{1}{N}\sum_{a=1}^{N} m_{a,k}\):
\begin{equation}
\label{eq:entropy}
\mathrm{Entropy}
= \frac{1}{B}\sum_{k=1}^{B}
  \Bigl[-p_k\log_2 p_k - (1-p_k)\log_2(1-p_k)\Bigr].
\end{equation}
For cross-prompt statistics, draw codes \(\{\mathbf{c}_t\}_{t=1}^{M}\) from different prompts and set
\(D_{it}=\mathrm{Ham}(\mathbf{m}^{\mathrm{orig}}_i,\mathbf{c}_t)\); we report
\(\min_{i,t} D_{it}\), \(\frac{1}{N_o M}\sum_{i,t} D_{it}\), and \(\max_{i,t} D_{it}\).
Table~\ref{tab:sem_masker_coco_sdp} shows strong within-prompt invariance and near-random across-prompt behavior on COCO (mean cross-prompt \(\approx 509\)), while SDP is weaker (mean \(\approx 409\)), matching our COCO-skewed training mix.

\paragraph{Ablation on two-stage epochs.}
We vary the number of epochs in Stage~1 (\texttt{epoch1}) and Stage~2 (\texttt{epoch2}) while keeping all other settings fixed. Metrics are averaged over 500 prompts; lower is better for \emph{Intra-Orig}, \emph{Ref-vs-Dist}, and \emph{All-20}. The result is shown in ~\cref{tab:epoch_ablation}.

The best setting is \(\mathbf{epoch1{=}180,\,epoch2{=}160}\), which achieves the lowest values across all three distance metrics and the lowest bit entropy. Increasing Stage~2 beyond \(\sim\!160\) epochs tends to degrade both same-prompt consistency and distortion robustness (e.g., \(180/180\), \(180/200\)). 


\section{Adaptive Attack}
\label{app:adaptive_attack}
\subsection{Training a surrogate semantic masker.}
SemBind assumes the semantic masker is kept private by the service provider.
Nevertheless, an adaptive adversary may attempt to approximate this component by collecting multiple watermarked images generated under the same prompt and training a surrogate masker.

\textbf{Threat model.}
We consider a deliberately \emph{stronger-than-realistic} adaptive attacker.
The adversary is assumed to possess completely \textbf{the same} training dataset(Appendix~\ref{sec:dataset_semcon3m}), network architecture, loss functions, and training hyperparameters (Sec.~\ref{sec:semantic-masker} and Appendix~\ref{sec:model_arch_train_details}) as the defender.
The only difference is the randomness in training (e.g., random seed used in training).
The adversary's goal is to train a surrogate semantic masker whose outputs match the provider's codes for images generated from the same prompt.

\textbf{Experimental setup.}
Following the same training recipe as in the main paper, we independently train five semantic maskers (M1--M5) using different random seeds.
Each masker outputs a $B{=}1024$-bit semantic code.
We evaluate code consistency on both COCO and SDP by randomly sampling $1000$ prompts, generating one image per prompt, and extracting semantic codes using each trained masker.
For each pair of maskers, we compute the Hamming distance between their codes over the same set of generated images.

\textbf{Results and discussion.} Tables~\ref{tab:adaptive_coco} and~\ref{tab:adaptive_sdp} report the resulting pairwise distance matrices. We can see that, even under this stringent setting, independently trained maskers exhibit near-random bitwise mismatch: pairwise mean Hamming distances concentrate around $\approx 512$ bits (i.e., $\approx 50\%$ of $1024$ bits) on both COCO and SDP.
This suggests that training a surrogate masker does not reliably reproduce the provider's bit-level code mapping, making it difficult to directly clone the semantics-to-mask binding used by SemBind.
A key reason is that our objective is primarily relational---it enforces that same-prompt samples are close and different-prompt samples are separated---and is non-identifiable up to symmetry transformations (e.g., bit permutations/flips or rotations of intermediate features), under which the loss remains (nearly) unchanged.
Consequently, different random seeds can converge to different but equally valid solutions that preserve relative structure while inducing different bitwise representations.

\subsection{Pixel-level spoofing of the semantic masker}
\label{subsec:pixel_spoofing_masker}

Beyond training a surrogate masker, an adaptive adversary may attempt to \emph{spoof} the semantic masker directly in the pixel domain.
For example, given a benign \emph{source} image (e.g., a cat photo), the adversary may take an arbitrary \emph{target} image (e.g., containing policy-violating or otherwise illicit content) and overlay the source image onto the target at some scale.
This raises a natural question: does the resulting semantic code become entirely different, partially overlap with the source code, or approach the source code as the overlay grows?
Equivalently, how much pixel-level overlap is required to cause the semantic masking process to be misled?

\textbf{Threat model.}
We consider a spoofing attacker who modifies images in the pixel space.
In this setting, the adversary is given a benign reference image and attempts to overlay it onto a chosen target image with a controllable size ratio, with the goal that the composite image yields a semantic code close to that of the benign image.

\textbf{Experimental setup.}
We evaluate this spoofing attack on \emph{both} COCO and SDP prompt sets, sampling $100$ prompts from each dataset (one generated image per prompt).
For the benign source image, we use a fixed ``cat'' image generated by Stable Diffusion.
For target images, we randomly sample prompts from the COCO and SDP prompt-evaluation sets and generate one image per prompt, then extract their semantic codes using the semantic masker.
We then perform pixel-space overlay by resizing the cat image to a scale ratio $r\in\{0.0,0.1,\ldots,1.0\}$ relative to the target image resolution and pasting it onto the target under two placements: (i) a bottom-right overlay, and (ii) a centered overlay.
For each $r$, we compute the semantic code of the resulting composite image and measure its Hamming distance to the reference code at $r=1.0$ (i.e., when the image is fully replaced by the cat image), reporting distances for $r=0.0,\ldots,0.9$. 

\textbf{Results and discussion.}
Results are summarized in Figs.~\ref{fig:spoofing_curve_center} and~\ref{fig:spoofing_curve_bottomright}, with an illustrative example shown in Fig.~\ref{fig:spoofing_grid}.
Overall, increasing the cat overlay ratio $r$ monotonically reduces the Hamming distance to the source code (at $r=1.0$), confirming that large pixel-level overlap can pull the semantic code toward the benign source.
However, the effect is limited for moderate overlay sizes: across both COCO and SDP, the semantic code remains far from the source for $r\le 0.8$, and a sharp transition only occurs when the overlay becomes extremely large (around $r=0.9$, close to fully replacing the target image).
This indicates that SemBind largely preserves its anti-forgery effectiveness against pixel-space spoofing unless the adversary is willing to overwrite most of the target content with benign pixels.
In practical forgery scenarios, such a large benign overlay would substantially compromise the attacker’s intended (illicit) semantics and thus significantly reduce the utility of the forged image.
We therefore conclude that pixel-level spoofing is an ineffective strategy to neutralize SemBind in practice, as successfully misleading the semantic masker requires overwhelming the target image with benign content.

\begin{table}[t]
\centering
\begin{minipage}{0.49\columnwidth}
  \centering
  \caption{COCO: mean Hamming distance (1024-bit) between independently trained maskers (1000 prompts).}
  \label{tab:adaptive_coco}
  \footnotesize                
  \setlength{\tabcolsep}{3pt} 
  \renewcommand{\arraystretch}{1.05}
  \begin{tabular}{lccccc}
    \toprule
          & M1 & M2 & M3 & M4 & M5 \\
    \midrule
    M1 & 0.0   & 508.0 & 492.8 & 505.2 & 511.6 \\
    M2 & 508.0 & 0.0   & 514.3 & 506.7 & 520.1 \\
    M3 & 492.8 & 514.3 & 0.0   & 509.5 & 503.8 \\
    M4 & 505.2 & 506.7 & 509.5 & 0.0   & 515.4 \\
    M5 & 511.6 & 520.1 & 503.8 & 515.4 & 0.0   \\
    \bottomrule
  \end{tabular}
\end{minipage}\hfill
\begin{minipage}{0.49\columnwidth}
  \centering
  \caption{SDP: mean Hamming distance (1024-bit) between independently trained maskers (1000 prompts).}
  \label{tab:adaptive_sdp}
  \footnotesize
  \setlength{\tabcolsep}{3pt}
  \renewcommand{\arraystretch}{1.05}
  \begin{tabular}{lccccc}
    \toprule
          & M1 & M2 & M3 & M4 & M5 \\
    \midrule
    M1 & 0.0   & 510.2 & 498.7 & 507.9 & 513.4 \\
    M2 & 510.2 & 0.0   & 516.8 & 509.1 & 518.6 \\
    M3 & 498.7 & 516.8 & 0.0   & 511.0 & 505.6 \\
    M4 & 507.9 & 509.1 & 511.0 & 0.0   & 516.2 \\
    M5 & 513.4 & 518.6 & 505.6 & 516.2 & 0.0   \\
    \bottomrule
  \end{tabular}
\end{minipage}
\end{table}

\begin{figure*}[t]
\centering
\setlength{\tabcolsep}{0pt} 
\renewcommand{\arraystretch}{1.0}
\newcommand{\imgcell}[2]{%
  \begin{minipage}[t]{0.085\textwidth}\centering
    \includegraphics[width=\linewidth]{#1}\\[-0.55em]
    {\scriptsize #2}
  \end{minipage}%
}

\begin{tabular}{@{}*{11}{c}@{}}
\imgcell{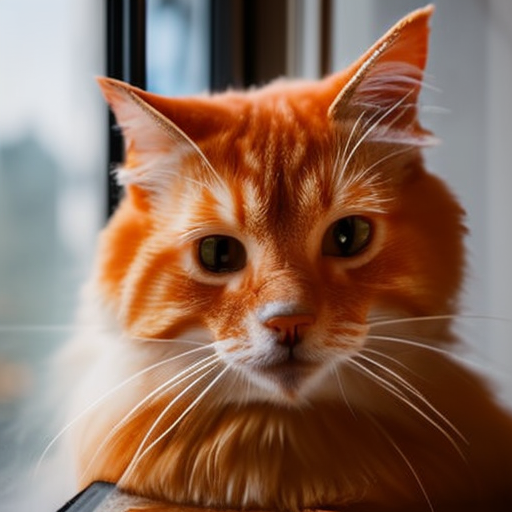}{source} &
\imgcell{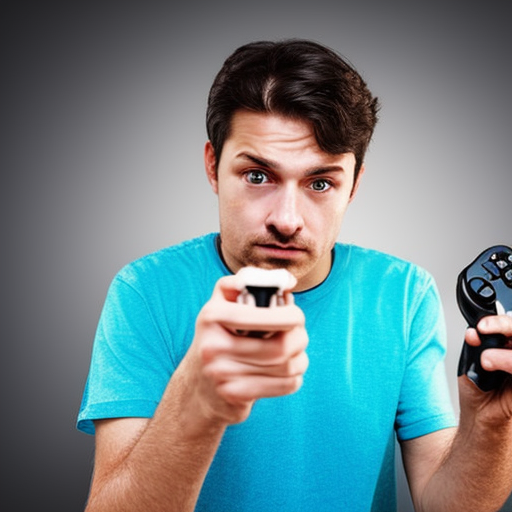}{target} &
\imgcell{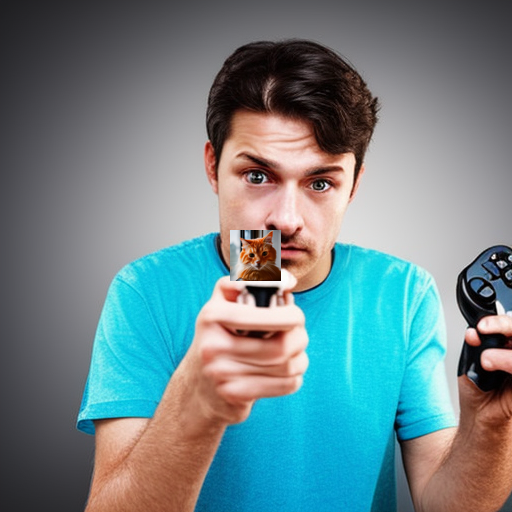}{$d{=}532$} &
\imgcell{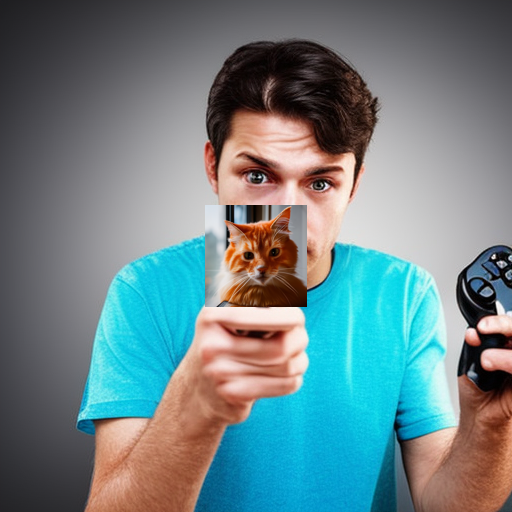}{$d{=}284$} &
\imgcell{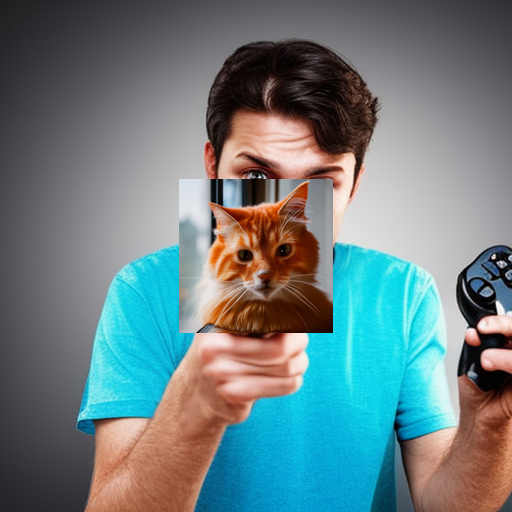}{$d{=}287$} &
\imgcell{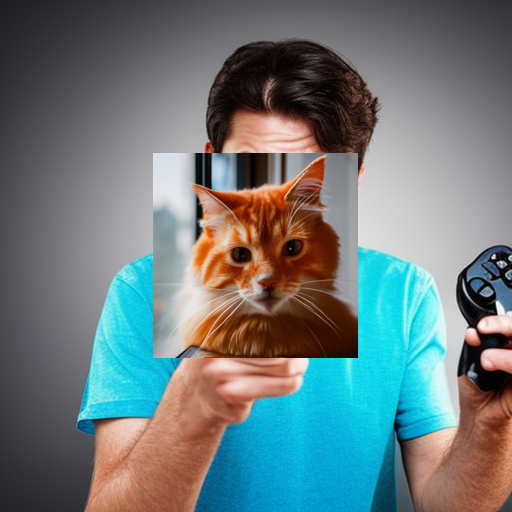}{$d{=}293$} &
\imgcell{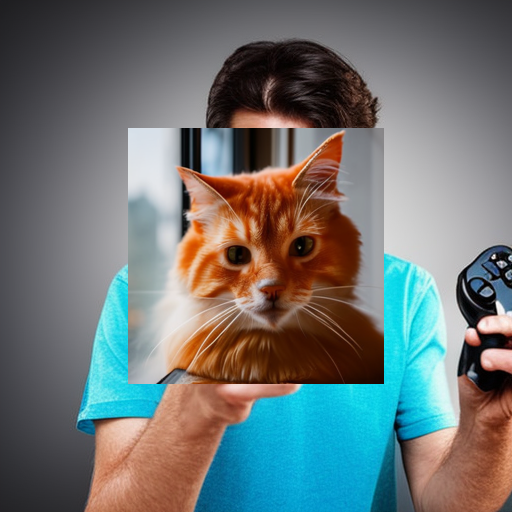}{$d{=}278$} &
\imgcell{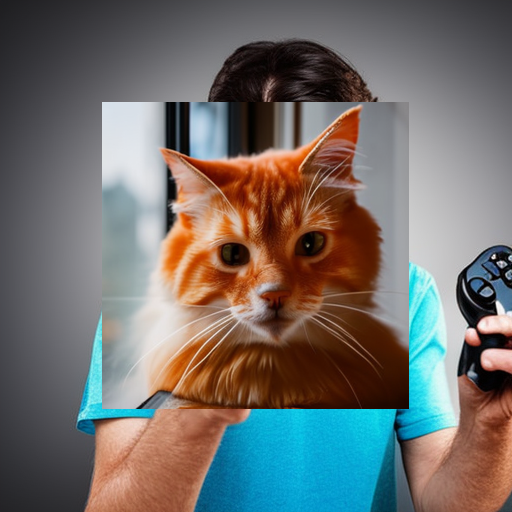}{$d{=}287$} &
\imgcell{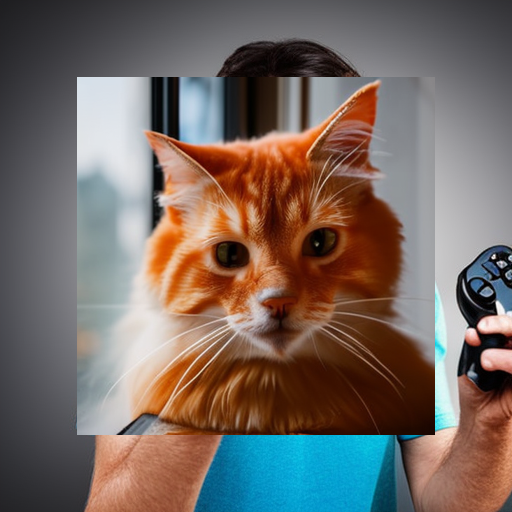}{$d{=}283$} &
\imgcell{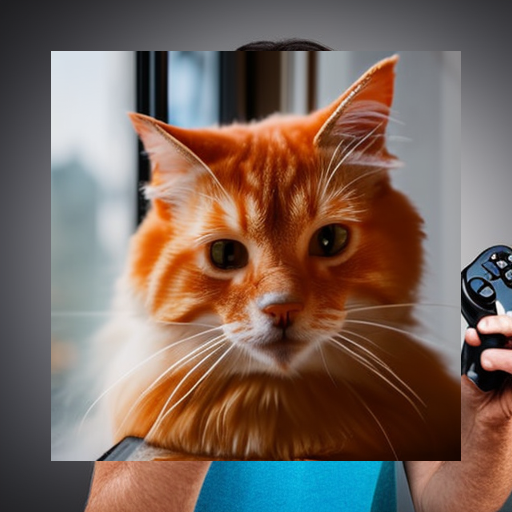}{$d{=}272$} &
\imgcell{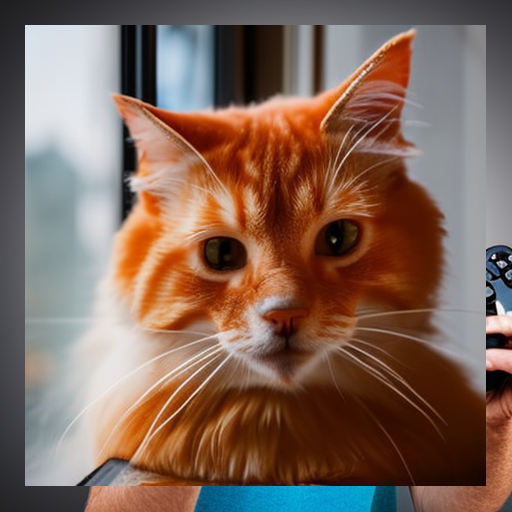}{$d{=}6$} \\
\end{tabular}

\vspace{2pt}

\begin{tabular}{@{}*{11}{c}@{}}
\imgcell{figs/spoofing_attack/cat_source.png}{source} &
\imgcell{figs/spoofing_attack/base_scale0.png}{target} &
\imgcell{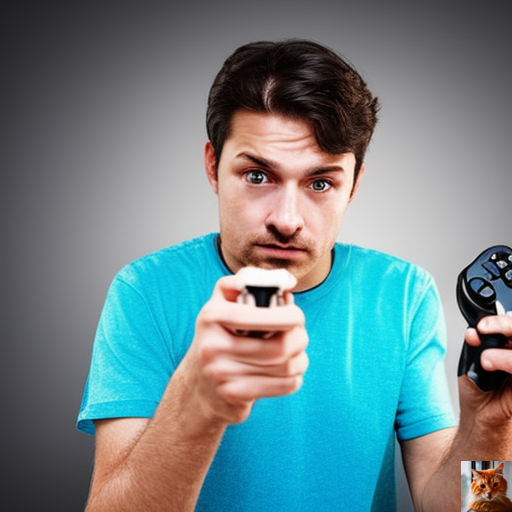}{$d{=}535$} &
\imgcell{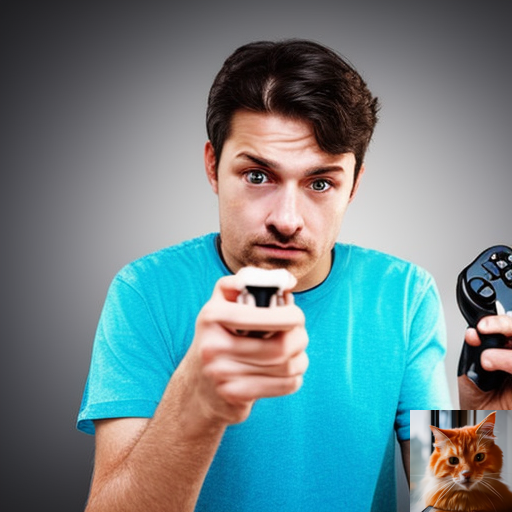}{$d{=}513$} &
\imgcell{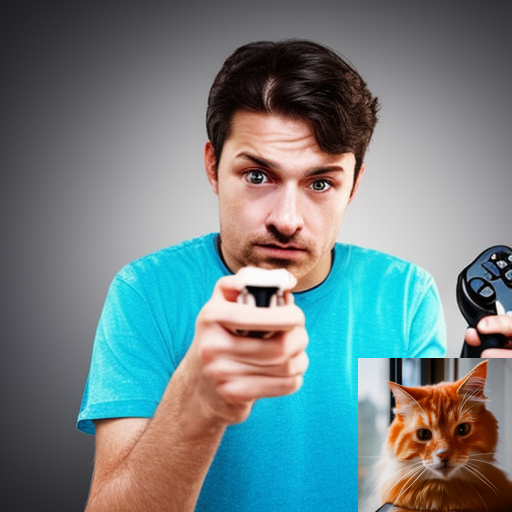}{$d{=}317$} &
\imgcell{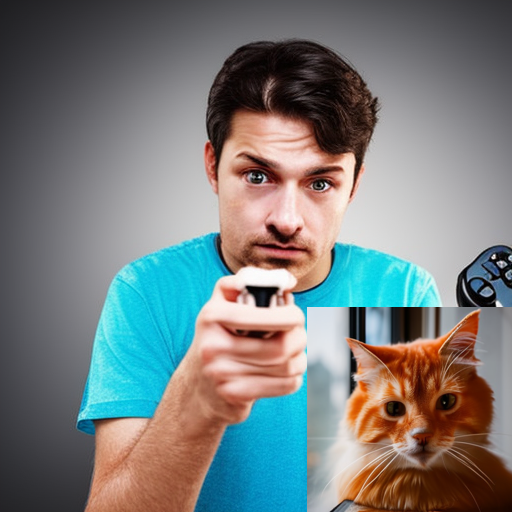}{$d{=}325$} &
\imgcell{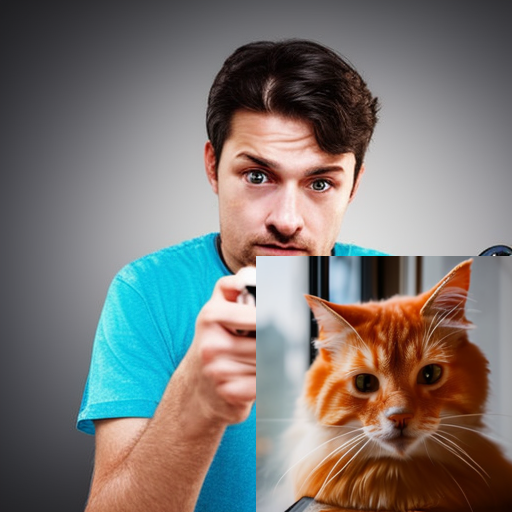}{$d{=}269$} &
\imgcell{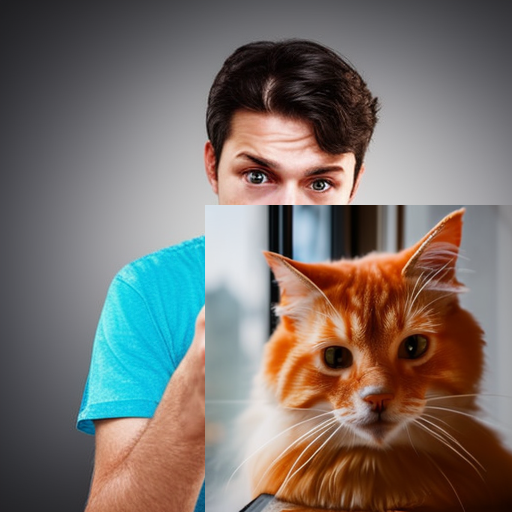}{$d{=}225$} &
\imgcell{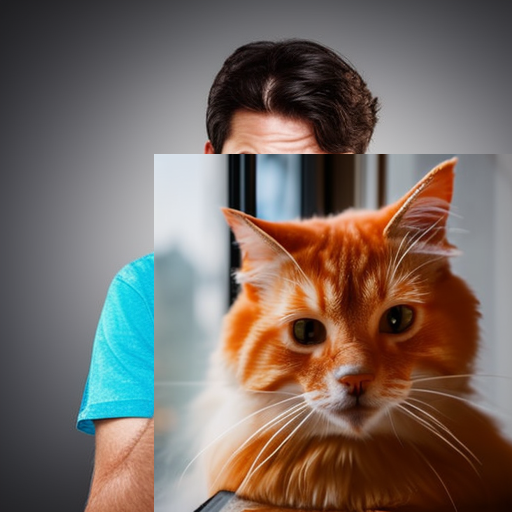}{$d{=}218$} &
\imgcell{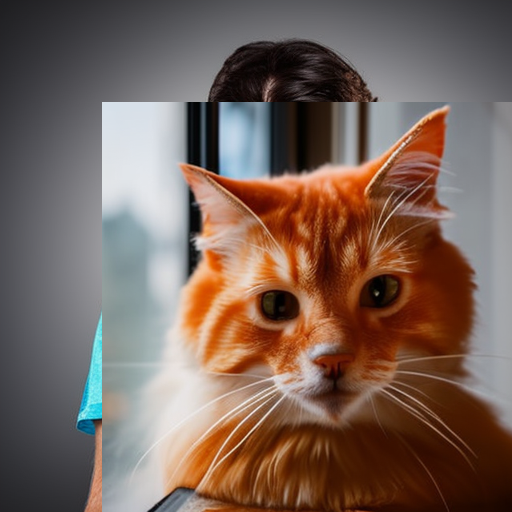}{$d{=}13$} &
\imgcell{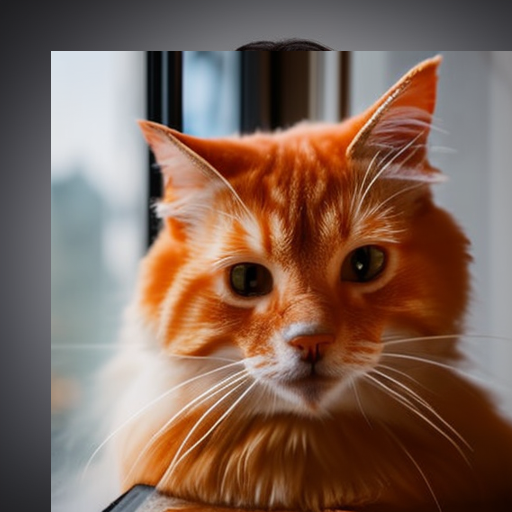}{$d{=}3$} \\
\end{tabular}

\caption{Pixel-level spoofing by overlaying a benign source image (cat) onto a target image at scale ratio $r\in\{0.1,0.2,\ldots,0.9\}$ (with $r=0.0$ the original target and $r=1.0$ the full source image).
Top: centered overlay. Bottom: bottom-right overlay. Labels report the Hamming distance $d$ to the reference semantic code of the source image.}

\label{fig:spoofing_grid}
\end{figure*}

\begin{figure}[t]
  \centering
  \begin{subfigure}[t]{0.49\columnwidth}
    \centering
    \includegraphics[width=\linewidth]{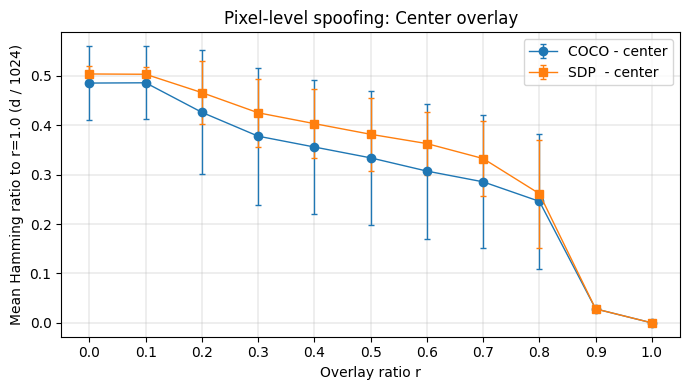}
    \caption{Centered overlay.}
    \label{fig:spoofing_curve_center}
  \end{subfigure}\hfill
  \begin{subfigure}[t]{0.49\columnwidth}
    \centering
    \includegraphics[width=\linewidth]{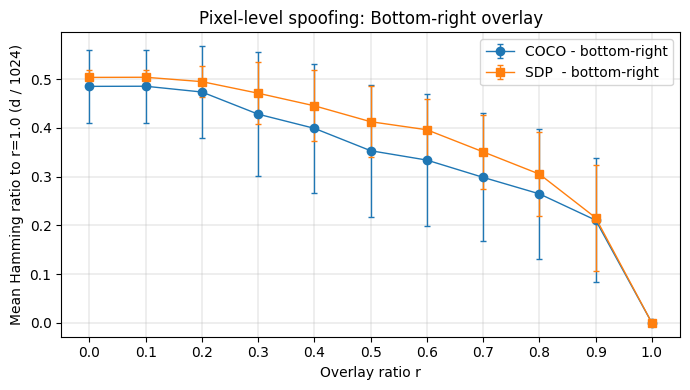}
    \caption{Bottom-right overlay.}
    \label{fig:spoofing_curve_bottomright}
  \end{subfigure}
  \caption{Pixel-level spoofing curves on COCO and SDP (100 prompts each). We report the mean Hamming distance (normalized by $1024$) between the semantic code of the composite image at overlay ratio $r$ and the reference code at $r=1.0$ (full cat image). Error bars indicate $\pm 1$ standard deviation.}
  \label{fig:spoofing_curves}
\end{figure}

\section{Generalization Experiments}
\label{app:generalization}

To further demonstrate the generalization of SemBind, we conduct two additional experiments.
First, we evaluate SemBind on the same image generation model under a more realistic and broader prompt dataset in ~\ref{app:gen_large_prompts}, assessing robustness and defense against forgery attacks beyond our primary evaluation sets.
Second, we study the feasibility of applying SemBind to \textsc{FLUX}~\cite{labs2025flux} in ~\ref{app:FLUX}, testing whether the proposed semantic binding mechanism transfers to a different diffusion-based generator.
We next describe the experimental setups and then report the results with analysis for each experiment.

\subsection{Generalization on a Larger Prompt Dataset}
\label{app:gen_large_prompts}

\paragraph{Experimental setup.}
To evaluate generalization, we reuse the same SemBind-enabled variants and hyperparameter settings as in the main paper (Sec.~\ref{sec:Experimental Setup}), and repeat the forgery-resistance and robustness evaluations on a larger and more realistic prompt corpus,
\texttt{FredZhang7/stable-diffusion-prompts-2.47M}\footnote{\url{https://huggingface.co/datasets/FredZhang7/stable-diffusion-prompts-2.47M}}.
Unless stated otherwise, all watermark backends, detection thresholds (FPR $=10^{-6}$), and mask ratios are identical to the main paper.

\paragraph{Black-box forgery attacks.}
We follow the same black-box forgery experiment setting as Sec.~\ref{sec:Defense Against Forgery Attack}, and consider the two canonical attacks of M\"uller \emph{et al.}~\cite{muller2025black}: \emph{imprinting} and \emph{reprompting}.
As attacker models, we again use Stable Diffusion v2.1 (match) and Stable Diffusion v1.5 (mismatch).

\begin{table*}[t]
\centering
\setlength{\tabcolsep}{4pt}
\caption{Imprint forgery attack on the FredZhang7/stable-diffusion-prompts-2.47M dataset for four latent-based watermarking schemes and their SemBind-enhanced variants.}
\label{tab:imprint_FredZhang7}
\resizebox{\textwidth}{!}{%
\begin{tabular}{l c ccc ccc}
\toprule
 & & \multicolumn{3}{c}{Attacker Model: SD~2.1} & \multicolumn{3}{c}{Attacker Model: SD~1.5} \\
\cmidrule(lr){3-5} \cmidrule(lr){6-8}
Method & Step & Det.$\downarrow$ & Bit Acc.$\downarrow$ & PSNR & Det.$\downarrow$ & Bit Acc.$\downarrow$ & PSNR \\
\midrule
TR      & 50/100/150 & 1.00/1.00/1.00 & ---                  & 23.37/22.21/21.25 & 1.00/1.00/1.00 & ---                  & 22.86/21.83/21.15 \\
\rowcolor{gray!30}
TR-S    & 50/100/150 & 0.02/0.03/0.03 & ---                  & 23.32/22.12/21.35 & 0.01/0.02/0.02 & ---                  & 22.87/21.83/21.15 \\
GS      & 50/100/150 & 1.00/1.00/1.00 & 1.0000/1.0000/1.0000 & 23.35/22.15/21.39 & 1.00/1.00/1.00 & 0.9990/0.9996/0.9998 & 22.88/21.85/21.18 \\
\rowcolor{gray!30}
GS-S    & 50/100/150 & 0.05/0.05/0.07 & 0.4901/0.4926/0.5007 & 23.33/22.13/21.36 & 0.02/0.03/0.04 & 0.4936/0.4951/0.4902 & 22.89/21.84/21.16 \\
PRC     & 50/100/150 & 1.00/1.00/1.00 & 1.0000/1.0000/1.0000 & 23.36/22.16/21.39 & 1.00/1.00/1.00 & 1.0000/1.0000/1.0000 & 22.91/21.88/21.21 \\
\rowcolor{gray!30}
PRC-S   & 50/100/150 & 0.08/0.20/0.24 & 0.5402/0.5631/0.5210 & 23.38/22.18/21.41 & 0.01/0.02/0.02 & 0.5068/0.5112/0.5115 & 22.92/21.89/21.20 \\
GS++    & 50/100/150 & 1.00/1.00/1.00 & 0.9995/0.9999/1.0000 & 23.36/22.17/21.41 & 0.99/0.99/1.00 & 0.9803/0.9881/0.9944 & 22.90/21.87/21.18 \\
\rowcolor{gray!30}
GS++-S  & 50/100/150 & 0.18/0.38/0.51 & 0.5820/0.6845/0.7423 & 23.35/22.20/21.38 & 0.03/0.05/0.09 & 0.5125/0.5298/0.5469 & 22.90/21.87/21.19 \\
\bottomrule
\end{tabular}%
}
\end{table*}

\begin{table}[t]
  \centering
  \caption{Reprompt forgery attack for four latent-based watermarking schemes and their SemBind-enhanced variants (evaluated on FredZhang7/stable-diffusion-prompts-2.47M dataset).}
  \label{tab:reprompt_FredZhang7}
  \scriptsize
  \setlength{\tabcolsep}{6pt}
  \renewcommand{\arraystretch}{1.08}
  \begin{tabular}{lcccc}
    \toprule
    & \multicolumn{2}{c}{SD 2.1 attacker} & \multicolumn{2}{c}{SD 1.5 attacker} \\
    \cmidrule(lr){2-3}\cmidrule(lr){4-5}
    Method & Det.$\downarrow$ & Bit Acc.$\downarrow$ & Det.$\downarrow$ & Bit Acc.$\downarrow$ \\
    \midrule
    TR     & 1.00 & ---    & 0.99 & ---    \\
    \rowcolor{gray!30}
    TR-S   & 0.46 & ---    & 0.44 & ---    \\
    GS     & 0.99 & 0.9876 & 0.99 & 0.9884 \\
    \rowcolor{gray!30}
    GS-S   & 0.56 & 0.6228 & 0.50 & 0.6059 \\
    PRC    & 0.95 & 0.9723 & 0.93 & 0.9692 \\
    \rowcolor{gray!30}
    PRC-S  & 0.53 & 0.7421 & 0.13 & 0.5715 \\
    GS++   & 0.90 & 0.9461 & 0.87 & 0.8824 \\
    \rowcolor{gray!30}
    GS++-S & 0.45 & 0.7132 & 0.16 & 0.5756 \\
    \bottomrule
  \end{tabular}
\end{table}

\begin{table*}[t]
  \centering
  \small
  \setlength{\tabcolsep}{4pt}
  \caption{Robustness under common distortions on \texttt{FredZhang7/stable-diffusion-prompts-2.47M}. ``Average (Distortion)'' is the mean across the six distortion types (excluding ``None'').}
  \label{tab:robust_FredZhang7}
  \resizebox{\textwidth}{!}{%
  \begin{tabular}{lcccccccc}
    \toprule
    Method & None (Det./Acc.) & JPEG (QF=70) & Brightness (×1.0) & GauBlur ($r$=3) & GauNoise ($\sigma$=0.01) & MedFilter ($k$=7) & Resize (×0.5) & Average (Distortion)\\
    \midrule
    TR     & 1.00 & 0.97 & 1.00 & 1.00 & 1.00 & 1.00 & 1.00 & 0.995 \\
    \rowcolor{gray!30}
    TR-S   & 1.00 & 0.90 & 0.97 & 0.96 & 0.93 & 0.93 & 0.96 & 0.942 \\
    GS     & 1.00/1.0000 & 1.00/1.0000 & 1.00/0.9987 & 1.00/0.9961 & 1.00/0.9939 & 1.00/0.9982 & 1.00/0.9997 & 1.000/0.9978 \\
    \rowcolor{gray!30}
    GS-S   & 1.00/0.9873 & 0.98/0.9741 & 0.98/0.9804 & 0.99/0.9547 & 0.99/0.9593 & 0.99/0.9871 & 1.00/0.9803 & 0.988/0.9726 \\
    PRC    & 1.00/1.0000 & 1.00/1.0000 & 1.00/1.0000 & 0.94/0.9746 & 0.99/0.9950 & 0.94/0.9748 & 1.00/1.0000 & 0.978/0.9907 \\
    \rowcolor{gray!30}
    PRC-S  & 1.00/1.0000 & 0.98/0.9854 & 0.95/0.9598 & 0.71/0.8802 & 0.95/0.9748 & 0.75/0.8782 & 0.98/0.9950 & 0.887/0.9456 \\
    GS++   & 1.00/0.9998 & 0.98/0.9913 & 0.96/0.9841 & 0.97/0.9866 & 0.85/0.9259 & 0.94/0.9725 & 1.00/1.0000 & 0.950/0.9767 \\
    \rowcolor{gray!30}
    GS++-S & 1.00/0.9998 & 0.96/0.9822 & 0.94/0.9691 & 0.86/0.9344 & 0.82/0.9075 & 0.80/0.9019 & 0.97/0.9856 & 0.892/0.9468 \\
    \bottomrule
  \end{tabular}%
  }
\end{table*}

\paragraph{Imprinting.}
For each watermarking scheme and attacker model, we sample 100 prompts from the 2.47M prompt corpus and generate 100 corresponding watermarked images.
As target cover images, we randomly sample 100 natural photographs from the MS-COCO validation set.
We then run the imprinting optimization exactly as in~\cite{muller2025black} and Sec.~\ref{sec:Defense Against Forgery Attack} (150 gradient steps with learning rate $0.01$), and probe watermark detection every 50 steps.

Results are reported in Table~\ref{tab:imprint_FredZhang7}. Overall, SemBind continues to provide strong protection against imprinting on this larger and more realistic prompt corpus. In particular, imprinting is largely ineffective against TR-S and GS-S, with detection rates remaining near zero throughout optimization in both the match (SD~2.1) and mismatch (SD~1.5) settings. For PRC and GS++, SemBind also substantially reduces the attack success rate, yielding markedly lower detection/bit-recovery performance compared to the corresponding baselines across all probed steps.

\paragraph{Reprompting.}
For each scheme, we sample 100 prompts from the 2.47M corpus to generate watermarked images, and sample an additional 100 mismatched prompts from the I2P dataset\footnote{\url{https://huggingface.co/datasets/AIML-TUDA/i2p}} for reprompting.
The adversary reuses the estimated watermarked initial latent and runs the proxy model forward under the mismatched prompts, following the same procedure as in Sec.~\ref{sec:Defense Against Forgery Attack}.

Results are summarized in Table~\ref{tab:reprompt_FredZhang7}. Overall, SemBind remains highly effective against reprompting on this larger prompt corpus: for all four backends, the SemBind-enabled variants exhibit substantially reduced forgery success compared to their corresponding baselines, under both the match (SD~2.1) and mismatch (SD~1.5) attacker models.

\paragraph{Robustness.}
We follow the same robustness experiment settings as in Sec.~\ref{robustness main paper}.

The results in Table~\ref{tab:robust_FredZhang7} show that SemBind generalizes well to this larger prompt corpus: although most prompts in \texttt{FredZhang7/stable-diffusion-prompts-2.47M} do not appear in the data used to train the semantic masker, the SemBind-enabled variants retain strong robustness under most distortions. 
As in the main paper, PRC and GS++ exhibit larger robustness drops under the stronger filtering distortions (GauBlur and MedFilter). 
We attribute this primarily to the limited strength/frequency of such filtering-style augmentations during semantic-masker training; in principle, this can be mitigated by adjusting the augmentation recipe or further fine-tuning the masker on augmentations that better match these distortions.

\subsection{Generalization on FLUX}
\label{app:FLUX}
\paragraph{Experimental setup.}
To demonstrate that SemBind can transfer to other image generation models, we further evaluate it on \textsc{FLUX}.
To balance feasibility and computational cost, we focus on Gaussian Shading (GS) and its SemBind-enabled variant (GS-S), and deploy them on \textsc{FLUX}~1.\texttt{dev}.
We reuse the same trained semantic masker (without retraining) and integrate it with GS in the \textsc{FLUX} pipeline.
All experiments are conducted at $512{\times}512$ resolution, where the latent has shape $1{\times}16{\times}64{\times}64$, and we embed a 256-bit payload.
We then evaluate forgery resistance and robustness on both the SDP and COCO prompt sets.

\paragraph{Black-box forgery attacks.}
We follow the same black-box forgery experiment settings as the main paper, and consider the two canonical attacks of M\"uller \emph{et al.}~\cite{muller2025black}: \emph{imprinting} and \emph{reprompting}.

We use Stable Diffusion v2.1 as the attacker model; note that this corresponds to the mismatch case for \textsc{FLUX}.

\begin{table}[t]
  \centering
  \setlength{\tabcolsep}{4pt}
  \caption{\textsc{FLUX} imprinting attack results.}
  \label{tab:flux_imprinting}
  \small
  \begin{tabular}{lcccccccc}
    \toprule
    & \multicolumn{4}{c}{COCO} & \multicolumn{4}{c}{SDP} \\
    \cmidrule(lr){2-5}\cmidrule(lr){6-9}
    Method & Step & Det.$\downarrow$ & Bit Acc.$\downarrow$ & PSNR
           & Step & Det.$\downarrow$ & Bit Acc.$\downarrow$ & PSNR \\
    \midrule
    GS   & 50/100/150 & 0.86/0.91/0.94 & 0.7566/0.8930/0.9254 & 23.4399
         & 50/100/150 & 0.80/0.90/0.92 & 0.7629/0.8730/0.9174 & 23.4796 \\
    \rowcolor{gray!30}
    GS-S & 50/100/150 & 0.00/0.00/0.00 & 0.5020/0.5000/0.5039 & 23.4579
         & 50/100/150 & 0.00/0.00/0.00 & 0.4875/0.4832/0.4934 & 23.4622 \\
    \bottomrule
  \end{tabular}
\end{table}

\begin{table}[t]
  \centering
  \setlength{\tabcolsep}{8pt}
  \caption{\textsc{FLUX} reprompting attack results.}
  \label{tab:flux_reprompting}
  \small
  \begin{tabular}{lccccc}
    \toprule
    & \multicolumn{2}{c}{COCO} & \multicolumn{2}{c}{SDP} \\
    \cmidrule(lr){2-3}\cmidrule(lr){4-5}
    Method & Det.$\downarrow$ & Bit Acc.$\downarrow$ & Det.$\downarrow$ & Bit Acc.$\downarrow$ \\
    \midrule
    GS     & 0.47 & 0.6453 & 0.78 & 0.7191 \\
    \rowcolor{gray!30}
    GS-S   & 0.01 & 0.5039 & 0.00 & 0.5070 \\
    \bottomrule
  \end{tabular}
\end{table}

\begin{table*}[t]
  \centering
  \small
  \setlength{\tabcolsep}{4pt}
  \renewcommand{\arraystretch}{1.08}
  \caption{\textsc{FLUX} robustness under common distortions. Rows 1--2 report COCO results and rows 3--4 report SDP results. ``Average (Distortion)'' is the mean across the six distortion types (excluding ``None'').}
  \label{tab:robust_FLUX}
  \resizebox{\textwidth}{!}{%
  \begin{tabular}{lcccccccc}
    \toprule
    Method & None (Det./Acc.) & JPEG (QF=70) & Brightness ($\times 1.0$) & GauBlur ($r$=3) & GauNoise ($\sigma$=0.01) & MedFilter ($k$=7) & Resize ($\times 0.5$) & Average (Distortion) \\
    \midrule
    GS   & 1.0/1.0000 & 0.99/0.9789 & 0.99/0.9928 & 1.0/0.9911 & 0.99/0.9937 & 1.0/0.9937 & 1.0/0.9988 & 0.995/0.9915 \\
    \rowcolor{gray!30}
    GS-S & 1.0/0.9989 & 1.0/0.9625  & 1.0/0.9939  & 1.0/0.9668 & 1.0/0.9884  & 1.0/0.9816  & 1.0/0.9938  & 1.000/0.9812 \\
    \midrule
    GS   & 1.0/0.9995 & 0.99/0.9933 & 1.0/0.9966 & 1.0/0.9894 & 0.99/0.9954 & 1.0/0.9909 & 1.0/0.9977 & 0.997/0.9939 \\
    \rowcolor{gray!30}
    GS-S & 0.98/0.9872 & 0.98/0.9592 & 0.98/0.9867 & 0.98/0.9272 & 0.97/0.9757 & 0.98/0.9300 & 1.0/0.9864 & 0.982/0.9609 \\
    \bottomrule
  \end{tabular}%
  }
\end{table*}

\paragraph{Imprinting.}
We sample 100 prompts from each of the SDP and COCO prompt sets and generate 100 corresponding watermarked images with \textsc{FLUX}~1.\texttt{dev}.
As target cover images, we randomly sample 100 natural photographs from the MS-COCO validation set.
We then run the imprinting optimization for 150 gradient steps with learning rate $0.01$, and probe watermark detection every 50 steps.

Results are reported in Table~\ref{tab:flux_imprinting}.
Notably, although the attacker uses Stable Diffusion v2.1 as the proxy model (a mismatch setting with a substantial gap to \textsc{FLUX}), the baseline GS still achieves over 80\% forgery success.
In contrast, the SemBind-enabled variant (GS-S) fully suppresses imprinting across all probed steps, demonstrating that SemBind transfers effectively to a different generator backbone and provides strong anti-forgery generalization.

\paragraph{Reprompting.}
We sample 100 prompts from each of the SDP and COCO prompt sets to generate watermarked images, and sample an additional 100 mismatched prompts from the I2P dataset for the reprompting attack.
The adversary reuses the estimated watermarked initial latent and runs the SD~2.1 proxy model forward under the mismatched prompts, following the same procedure as in the main paper.

Results are reported in Table~\ref{tab:flux_reprompting}.
Overall, SemBind remains effective against reprompting on \textsc{FLUX}: compared to the baseline GS, the SemBind-enabled variant (GS-S) substantially reduces forgery success across both COCO and SDP.

\paragraph{Robustness.}
We follow the same robustness protocol as in the Sec.~\ref{robustness main paper}.
We evaluate GS and its SemBind-enabled variant (GS-S) on \textsc{FLUX}~1.\texttt{dev} using 100 prompts from each of the COCO and SDP prompt sets.

Results are reported in Table~\ref{tab:robust_FLUX}.
Overall, SemBind transfers to \textsc{FLUX} without introducing notable robustness issues: GS-S maintains high detection/bit-accuracy under all tested distortions, with only a modest drop relative to the GS baseline.

\section{Proof of Theorem 1}
\label{sec:proof_theorem1}

\paragraph{Notation.}
Let $L=CHW$ be the latent dimension.
Write $\mathcal{N}\equiv\mathcal{N}(0,I_L)$ for the $L$-dimensional standard Gaussian, and $\mathcal{N}^{\otimes t}$ for $t$ i.i.d.\ copies.
For a sign mask $S\in\{\pm1\}^L$ and a permutation $\pi$ of $[L]$, denote by
\[
F_{S,\pi}(u) \;=\; P_\pi\,\mathrm{Diag}(S)\,u\quad\text{for }u\in\mathbb{R}^L,
\]
where $P_\pi$ is the permutation matrix.  We allow $(S,\pi)$ to be jointly sampled (and even shared across $t$ samples), but require $(S,\pi)$ be independent of the challenge latents.

\begin{definition}[Single-/multi-sample undetectability]
\label{def:undetectability}
A watermarking scheme $\mathcal{W}$ (on initial latents) is \textbf{single-sample $(\varepsilon,\mathsf{Adv})$-undetectable} if for every (non-uniform) adversary $D\in\mathsf{Adv}$, satisfying:
\begin{equation}
    \bigl|\Pr[D(Z)=1]-\Pr[D(G)=1]\bigr|\le\varepsilon,
\end{equation}
with $Z\!\leftarrow\!\mathcal{W},\; G\!\leftarrow\!\mathcal{N}$.

It is \textbf{multi-sample $(t,\varepsilon,\mathsf{Adv})$-undetectable} if for every $D\in\mathsf{Adv}$, satisfying:
\begin{equation}
\bigl|\Pr[D(Z_1,\ldots,Z_t)=1]-\Pr[D(G_1,\ldots,G_t)=1]\bigr|\le\varepsilon,
\end{equation}
with $(Z_i)_{i=1}^t\!\xleftarrow{\text{i.i.d.}}\!\mathcal{W}$ and $(G_i)_{i=1}^t\!\xleftarrow{\text{i.i.d.}}\!\mathcal{N}$.
Here $\mathsf{Adv}$ may be any probabilistic polynomial-time (PPT) adversary~\cite{goldwasser1984probabilistic,goldreich2001foundations}.

\end{definition}

\begin{definition}[SemBind post-processing on initial latents]
\label{def:sembind}
Given a base scheme $\mathcal{W}$ that outputs $Z\!\leftarrow\!\mathcal{W}$, its SemBind variant first samples randomness $R$ (independent of $Z$) which determines a sign mask $S\in\{\pm1\}^L$ and a permutation $\pi$ (both possibly correlated across samples and possibly encoding a mask ratio by forcing some entries of $S$ to $+1$), and outputs
\[
Z^{\mathrm{sem}} \;=\; F_{S,\pi}(Z).
\]
\end{definition}

\begin{lemma}[Gaussian invariance under independent sign flips and permutations]
\label{lem:gauss_inv}
Let $G\!\leftarrow\!\mathcal{N}$ and $(S,\pi)\perp G$. Then $F_{S,\pi}(G)\stackrel{d}{=}G$. The same holds for $t$ i.i.d.\ copies jointly transformed by $\{(S_i,\pi_i)\}_{i=1}^t$ independent of $\{G_i\}_{i=1}^t$.
\end{lemma}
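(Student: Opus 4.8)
The plan is to reduce the claim to the classical orthogonal invariance of the standard Gaussian and then discharge the randomness of $(S,\pi)$ by conditioning. First I would observe that for any \emph{fixed} realization $(s,\pi_0)$, the map $F_{s,\pi_0}=P_{\pi_0}\mathrm{Diag}(s)$ is an orthogonal matrix: $P_{\pi_0}$ is a permutation matrix and $\mathrm{Diag}(s)$ with $s\in\{\pm1\}^L$ is a diagonal $\pm1$ matrix, both orthogonal, and the product of orthogonal matrices is orthogonal. This remains true even when the mask ratio forces some coordinates of $s$ to $+1$, since a diagonal matrix with entries in $\{\pm1\}$ is orthogonal for \emph{any} sign pattern.

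Next I would invoke that $\mathcal{N}\equiv\mathcal{N}(0,I_L)$ is invariant under orthogonal maps: for any orthogonal $O$ and $G\leftarrow\mathcal{N}$, the characteristic function satisfies $\mathbb{E}\!\left[e^{i\xi^\top OG}\right]=\mathbb{E}\!\left[e^{i(O^\top\xi)^\top G}\right]=e^{-\frac12\|O^\top\xi\|^2}=e^{-\frac12\|\xi\|^2}$ for every $\xi$, using $\|O^\top\xi\|=\|\xi\|$; this is the characteristic function of $\mathcal{N}$, so $OG\stackrel{d}{=}G$. Taking $O=F_{s,\pi_0}$ gives $F_{s,\pi_0}(G)\stackrel{d}{=}G$ for each fixed $(s,\pi_0)$. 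To pass to random $(S,\pi)$, I would condition on the independent randomness $R$ of Definition~\ref{def:sembind}: since $(S,\pi)\perp G$, conditionally on $(S,\pi)=(s,\pi_0)$ the latent $G$ is still $\mathcal{N}$, so $F_{s,\pi_0}(G)\stackrel{d}{=}G$; as this conditional law equals $\mathcal{N}$ \emph{uniformly} in $(s,\pi_0)$, the mixture is $\mathcal{N}$ as well. Concretely, for any bounded measurable $\phi$, $\mathbb{E}[\phi(F_{S,\pi}(G))]=\mathbb{E}_{(S,\pi)}\big[\mathbb{E}_G[\phi(F_{S,\pi}(G))\mid(S,\pi)]\big]=\mathbb{E}_{(S,\pi)}\big[\mathbb{E}_G[\phi(G)]\big]=\mathbb{E}_G[\phi(G)]$, which settles the single-sample case.

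For the $t$-sample statement I would stack the copies into a single vector in $\mathbb{R}^{tL}$, whose law is $\mathcal{N}(0,I_{tL})$. The joint transform is the block-diagonal matrix $\bigoplus_{i=1}^t F_{S_i,\pi_i}$, which is orthogonal on $\mathbb{R}^{tL}$ because each diagonal block is orthogonal; this holds even though the blocks $(S_i,\pi_i)$ may be correlated, or even shared, across $i$, since a block-diagonal matrix with orthogonal blocks is orthogonal for every realization. Repeating the conditioning argument at dimension $tL$, using $\{(S_i,\pi_i)\}_i\perp\{G_i\}_i$, then yields $\big(F_{S_1,\pi_1}(G_1),\dots,F_{S_t,\pi_t}(G_t)\big)\stackrel{d}{=}(G_1,\dots,G_t)$.

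The step I expect to demand the most care is not any single computation but the bookkeeping around the randomness of $(S,\pi)$: orthogonal invariance is a statement about \emph{deterministic} maps, so one must make explicit that independence allows conditioning, that the conditional law is $\mathcal{N}$ uniformly in the realization, and that neither the mask-ratio restriction nor the cross-sample correlations among the $(S_i,\pi_i)$ ever threaten orthogonality. Everything else is routine.
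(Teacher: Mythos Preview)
Your proposal is correct and follows essentially the same approach as the paper: show that for each fixed $(s,\pi_0)$ the map preserves $\mathcal{N}(0,I_L)$, then use independence to average/condition over the randomness of $(S,\pi)$. The paper's proof is just a two-sentence sketch of exactly this argument; your version spells out the orthogonal-invariance step via characteristic functions and makes the conditioning and $t$-sample block-diagonal structure explicit, but the underlying route is identical.
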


\begin{proof}
For fixed $(S,\pi)$, the map $u\mapsto P_\pi\mathrm{Diag}(S)u$ permutes coordinates and flips signs, hence preserves $\mathcal{N}(0,I_L)$. Independence allows averaging over $(S,\pi)$.
\end{proof}

\begin{lemma}[Closure under independent post-processing]
\label{lem:closure}
Let $X,Y$ be random variables on $\mathbb{R}^L$, and $T$ any (possibly randomized) map independent of $(X,Y)$. Then for any distinguisher class $\mathsf{Adv}$ and $\varepsilon\!\ge\!0$,
\begin{equation}
\Delta_{\mathsf{Adv}}\bigl(T(X),T(Y)\bigr)\;\le\;\Delta_{\mathsf{Adv}}(X,Y),
\end{equation}
where $\Delta_{\mathsf{Adv}}(U,V)=\sup_{D\in\mathsf{Adv}}|\Pr[D(U)=1]-\Pr[D(V)=1]|$.
\end{lemma}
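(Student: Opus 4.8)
The plan is to prove this as a computational data-processing inequality via a black-box reduction: I would convert any distinguisher for the post-processed pair $(T(X),T(Y))$ into a distinguisher for the original pair $(X,Y)$ whose advantage is at least as large. First I would fix an arbitrary $D\in\mathsf{Adv}$ acting on $T(X)$ versus $T(Y)$, and construct a ``wrapped'' distinguisher $D'$ that, on input $u\in\mathbb{R}^L$, internally applies $T$ and then runs $D$ on the output, i.e.\ $D'(u)=D(T(u))$. Since $T$ may be randomized, I would have $D'$ draw a \emph{fresh} copy of $T$'s internal randomness, independent of its input.

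The crux is the independence hypothesis $T\perp(X,Y)$. In the real experiment the randomness of $T$ is sampled independently of $X$, so the joint law of $(X,T(X))$ coincides with the law produced when $D'$ samples $T$'s randomness itself; the two experiments are therefore identically distributed. This yields the per-distinguisher identities
\begin{equation}
\Pr[D'(X)=1]=\Pr[D(T(X))=1],\qquad \Pr[D'(Y)=1]=\Pr[D(T(Y))=1].
\end{equation}
Subtracting and taking absolute values gives $|\Pr[D(T(X))=1]-\Pr[D(T(Y))=1]|=|\Pr[D'(X)=1]-\Pr[D'(Y)=1]|\le\Delta_{\mathsf{Adv}}(X,Y)$, where the last inequality requires $D'\in\mathsf{Adv}$. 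Taking the supremum over $D\in\mathsf{Adv}$ then delivers $\Delta_{\mathsf{Adv}}(T(X),T(Y))\le\Delta_{\mathsf{Adv}}(X,Y)$.

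The one step needing care, and the only place the distinguisher class enters, is establishing the closure property $D'\in\mathsf{Adv}$. For the statistical reading, where $\mathsf{Adv}$ is all (measurable) distinguishers, this is automatic and the bound holds unconditionally. For the computational reading of \cref{def:undetectability}, where $\mathsf{Adv}$ is the PPT class, I would argue that $D'$ stays PPT because it runs $D$ once plus a single evaluation of $T$; it thus suffices that $T$ be polynomial-time computable with efficiently sampleable randomness. This holds for the SemBind maps $F_{S,\pi}$ of \cref{def:sembind}, which merely flip signs and permute coordinates, so the reduction preserves membership in $\mathsf{Adv}$ in both settings. I expect this closure requirement to be the main (though mild) obstacle; combined with $T=F_{S,\pi}$ and the Gaussian invariance of \cref{lem:gauss_inv}, the lemma is exactly what transfers undetectability from the base scheme to its SemBind variant.
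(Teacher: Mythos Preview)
Your proposal is correct and follows essentially the same reduction as the paper: for fixed $D$, compose it with $T$ to obtain a wrapped distinguisher on the original pair, then use independence of $T$ from $(X,Y)$ to equate acceptance probabilities. The paper writes the wrapped object as $D_T(u)=\mathbb{E}[D(T(u))]$ rather than your randomized $D'(u)=D(T(u))$, but these are equivalent formulations; if anything, you are more careful than the paper in explicitly flagging the closure requirement $D'\in\mathsf{Adv}$ and noting that it holds for PPT adversaries because $F_{S,\pi}$ is efficiently computable.
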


\begin{proof}
For fixed $D$, define $D_T(u)=\mathbb{E}[D(T(u))]$ over $T$'s randomness. Then
$|\Pr[D(T(X))=1]-\Pr[D(T(Y))=1]|
=|\Pr[D_T(X)=1]-\Pr[D_T(Y)=1]|
\le\Delta_{\mathsf{Adv}}(X,Y)$.
\end{proof}

\begin{theorem}[Semantic masking preserves undetectability]
\label{thm:sem_preserves}
If $\mathcal{W}$ is single-sample $(\varepsilon,\mathsf{Adv})$-undetectable \emph{(resp.\ multi-sample $(t,\varepsilon,\mathsf{Adv})$-undetectable)}, then its SemBind variant $\mathcal{W}^{\mathrm{sem}}$ per Def.~\ref{def:sembind} is single-sample \emph{(resp.\ multi-sample)} $(\varepsilon,\mathsf{Adv})$-undetectable as well.
\end{theorem}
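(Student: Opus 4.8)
The plan is to reduce the theorem to the two lemmas already in hand—Gaussian invariance (Lemma~\ref{lem:gauss_inv}) and closure under independent post-processing (Lemma~\ref{lem:closure})—through a single coupling step, proving the single-sample case first. I would fix $Z\leftarrow\mathcal{W}$ and $G\leftarrow\mathcal{N}$ as the two challenges and let $(S,\pi)$ be the SemBind randomness, writing $T=F_{S,\pi}$ for the induced signed-permutation map. By Definition~\ref{def:sembind}, $(S,\pi)$ is independent of $Z$; I couple it to be independent of $G$ as well, so that $T$'s randomness is independent of the pair $(Z,G)$. Because $\Delta_{\mathsf{Adv}}$ depends only on the marginal laws of its two arguments, and because Lemma~\ref{lem:gauss_inv} gives $T(G)\stackrel{d}{=}G$, I may substitute $G$ by $T(G)$:
\begin{equation}
\Delta_{\mathsf{Adv}}\bigl(Z^{\mathrm{sem}},\,G\bigr)
=\Delta_{\mathsf{Adv}}\bigl(T(Z),\,G\bigr)
=\Delta_{\mathsf{Adv}}\bigl(T(Z),\,T(G)\bigr),
\end{equation}
using $Z^{\mathrm{sem}}=F_{S,\pi}(Z)=T(Z)$ in the first equality.

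The second step invokes closure: since $T$ is a randomized map independent of $(Z,G)$, Lemma~\ref{lem:closure} yields $\Delta_{\mathsf{Adv}}(T(Z),T(G))\le\Delta_{\mathsf{Adv}}(Z,G)$, and the single-sample undetectability of $\mathcal{W}$ bounds the latter by $\varepsilon$. Chaining the displayed equalities with this inequality gives $\Delta_{\mathsf{Adv}}(Z^{\mathrm{sem}},G)\le\varepsilon$, which is exactly the single-sample undetectability of $\mathcal{W}^{\mathrm{sem}}$.

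For the multi-sample case I would run the identical argument on the tuples $X=(Z_1,\dots,Z_t)$ and $Y=(G_1,\dots,G_t)$, with $T$ now applying $F_{S_i,\pi_i}$ coordinatewise. Definition~\ref{def:sembind} permits $\{(S_i,\pi_i)\}$ to be correlated across samples, but this is harmless: Lemma~\ref{lem:closure} accommodates an arbitrary randomized $T$, so the internal correlation of $T$ never enters provided the whole collection is independent of the challenge tuple, and the multi-sample form of Lemma~\ref{lem:gauss_inv} still gives $T(Y)\stackrel{d}{=}Y$. The same chain then bounds the joint distinguishing advantage by $\varepsilon$.

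The argument is essentially mechanical once the lemmas are available, so I expect the only real subtlety to lie in the coupling—specifically, insisting that $(S,\pi)$ be drawn independently of \emph{both} the watermarked latent and the Gaussian reference, which is what lets Lemma~\ref{lem:gauss_inv} absorb $T$ into the Gaussian arm while Lemma~\ref{lem:closure} simultaneously contracts the advantage on the watermarked arm. A secondary check is that forcing a fraction of the entries of $S$ to $+1$ (the mask-ratio mechanism) keeps $T$ a signed permutation, hence an orthogonal map preserving $\mathcal{N}(0,I_L)$, so Lemma~\ref{lem:gauss_inv} applies without modification.
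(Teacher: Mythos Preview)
Your proposal is correct and follows essentially the same approach as the paper's proof: couple $(S,\pi)$ independently of both $Z$ and $G$, then combine Lemma~\ref{lem:closure} (closure under independent post-processing) with Lemma~\ref{lem:gauss_inv} (Gaussian invariance) to bound $\Delta_{\mathsf{Adv}}(T(Z),G)$ by $\varepsilon$, and repeat coordinatewise for the multi-sample case. The only cosmetic difference is the order in which you invoke the two lemmas (you first substitute $G\mapsto T(G)$ and then contract, whereas the paper first contracts and then substitutes $T(G)\mapsto G$); the logic is identical.
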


\begin{proof}
\textbf{Single-sample.} Let $Z\!\leftarrow\!\mathcal{W}$, $G\!\leftarrow\!\mathcal{N}$, and $T(\cdot)=F_{S,\pi}(\cdot)$ with $(S,\pi)\perp(Z,G)$. By Lemma~\ref{lem:closure},
\begin{equation}
\Delta_{\mathsf{Adv}}\!\bigl(F_{S,\pi}(Z),\,F_{S,\pi}(G)\bigr)\;\le\;\Delta_{\mathsf{Adv}}(Z,G)\;\le\;\varepsilon.
\end{equation}
By Lemma~\ref{lem:gauss_inv}, $F_{S,\pi}(G)\stackrel{d}{=}G$, hence $\Delta_{\mathsf{Adv}}\bigl(F_{S,\pi}(Z),G\bigr)\le\varepsilon$.

\noindent\textbf{Multi-sample.} Let $(Z_i)_{i=1}^t\!\xleftarrow{\text{i.i.d.}}\!\mathcal{W}$, $(G_i)_{i=1}^t\!\xleftarrow{\text{i.i.d.}}\!\mathcal{N}$, and $T((u_i)_{i=1}^t)=(F_{S_i,\pi_i}(u_i))_{i=1}^t$ with $\{(S_i,\pi_i)\}_{i=1}^t\perp\{(Z_i,G_i)\}_{i=1}^t$. Then we have:
\begin{equation}
\begin{aligned}
\Delta_{\mathsf{Adv}}\!\bigl((F_{S_i,\pi_i}&(Z_i))_{i=1}^t,\,(F_{S_i,\pi_i}(G_i))_{i=1}^t\bigr) \\
&\le\;
\Delta_{\mathsf{Adv}}\!\bigl((Z_i)_{i=1}^t,(G_i)_{i=1}^t\bigr)\le\;\varepsilon.
\end{aligned}
\end{equation}

and Lemma~\ref{lem:gauss_inv} yields $(F_{S_i,\pi_i}(G_i))_{i=1}^t\stackrel{d}{=}(G_i)_{i=1}^t$, concluding the claim.
\end{proof}

\begin{corollary}[Instantiations]
If Gaussian Shading is single-sample undetectable, then its SemBind variant (GS-S) remains single-sample undetectable with the same bound. If PRC or Gaussian Shading++ is multi-sample undetectable, then PRC-S and GS++-S inherit the same multi-sample bound.
\end{corollary}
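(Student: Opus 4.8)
The plan is to obtain the Corollary as a direct instantiation of Theorem~\ref{thm:sem_preserves}: the general theorem already shows that the abstract post-processing $F_{S,\pi}$ of Definition~\ref{def:sembind} preserves both single- and multi-sample undetectability, so the only work remaining is to verify that, for each of Gaussian Shading, PRC, and Gaussian Shading++, the concrete SemBind construction of Section~\ref{sec:Watermark Embedding} genuinely falls under Definition~\ref{def:sembind}.

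First I would identify the SemBind operation on the initial latent with the map $F_{S,\pi}$. The embedding step produces $z_T^{(w,\mathrm{sem})} = S(x;\sigma,K_{\text{perm}}) \odot z_T^{(w)}$, i.e.\ entrywise multiplication of the base-scheme latent $z_T^{(w)}\leftarrow\mathcal{W}$ by a bipolar mask whose support pattern is fixed by the secret permutation key $K_{\text{perm}}$ and the semantic code $m=f_\theta(x)$. Up to the reshaping between $\mathbb{R}^{C\times H\times W}$ and $\mathbb{R}^{L}$, this is exactly an instance of $F_{S,\pi}$: the bipolar pattern supplies $S\in\{\pm1\}^{L}$, the key supplies the permutation component, and the mask ratio $\sigma$ (which sets some binary-mask entries to $0$, hence the corresponding sign entries to $+1$) is precisely the ``forcing some entries of $S$ to $+1$'' already permitted by Definition~\ref{def:sembind}. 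Thus GS-S, PRC-S, and GS++-S are literally the SemBind variants $\mathcal{W}^{\mathrm{sem}}$ of their respective base schemes.

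The crux is to check the independence hypothesis $(S,\pi)\perp z_T^{(w)}$ demanded by Definition~\ref{def:sembind} and Theorem~\ref{thm:sem_preserves}. The permutation is determined by the fixed secret key $K_{\text{perm}}$ and carries no information about the latent, while the sign mask depends only on the auxiliary clean image $x$, which the provider generates in a \emph{separate} run of the diffusion model using fresh randomness that is disjoint from the randomness $(m_w,K_w)$ defining $z_T^{(w)}=\mathcal{W}_{\text{embed}}(m_w,K_w)$. Hence the randomness $R$ that determines $(S,\pi)$ is independent of $z_T^{(w)}$. In the multi-sample setting each of the $t$ generations draws its own auxiliary image, so the masks $\{S_i\}$ generally differ across samples while the permutation key is shared; this is exactly the ``possibly correlated across samples'' regime that Theorem~\ref{thm:sem_preserves} already covers.

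With these checks in place I would simply invoke Theorem~\ref{thm:sem_preserves} termwise: Gaussian Shading is single-sample $(\varepsilon,\mathsf{Adv})$-undetectable by hypothesis, so GS-S is single-sample $(\varepsilon,\mathsf{Adv})$-undetectable with the same $\varepsilon$; PRC and Gaussian Shading++ are multi-sample $(t,\varepsilon,\mathsf{Adv})$-undetectable, so PRC-S and GS++-S inherit the identical $(t,\varepsilon)$ bounds. I expect the genuinely delicate point to be the independence argument of the third paragraph: the whole reduction to Lemma~\ref{lem:closure} collapses if any information about $z_T^{(w)}$ were to leak into the mask (for example, if the auxiliary image were derived from the watermarked latent rather than sampled independently), so the proof hinges on the auxiliary image being generated with randomness fully disjoint from the watermark embedding. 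Everything else is a mechanical specialization of the general theorem.
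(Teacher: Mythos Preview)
Your proposal is correct and matches the paper's (implicit) approach: the Corollary is stated without proof, so the intended argument is exactly to observe that GS-S, PRC-S, and GS++-S are instances of Definition~\ref{def:sembind} and then invoke Theorem~\ref{thm:sem_preserves}. One minor imprecision: in the concrete construction of Section~\ref{sec:Watermark Embedding} the secret permutation $K_{\text{perm}}$ is applied to the \emph{mask bits} before forming the sign mask, not to the latent itself, so the resulting map is $F_{S,\mathrm{id}}$ rather than $F_{S,\pi}$ with a nontrivial $\pi$; this is still covered by Definition~\ref{def:sembind} and does not affect your argument.
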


\begin{remark}[Scope]
(1) Only independence of $(S,\pi)$ from the challenge samples is required; mask ratio (some coordinates forced to $+1$), shared/correlated masks across samples, and any secret permutation are all permitted. 
(2) No assumptions on bit balance or inter-bit independence are needed for undetectability (they matter for robustness/forgery resistance, not for indistinguishability).
\end{remark}

\section{Additional Visual Results}
\label{sec:additional_visuals}

\newcommand{\imgpath}{figs/sup_figs_sdp} 

\newcommand{\RowLabel}[1]{%
  \noindent\makebox[\linewidth]{\rule{\linewidth}{1pt}}%
  \par\medskip
  {\footnotesize\bfseries #1\par}%
  \noindent\makebox[\linewidth]{\rule{\linewidth}{1pt}}%
}

\begin{figure*}[t]
  \centering
  \begin{minipage}[b]{.1245\textwidth}\centering
    \begin{overpic}[width=\linewidth]{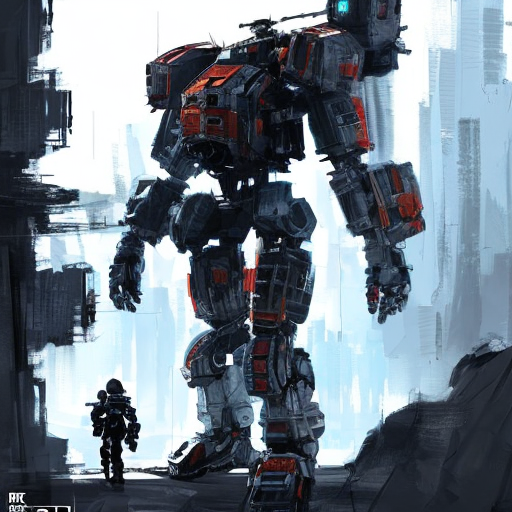}
      \put(3,6){\colorbox{black!60}{\scriptsize\color{white} (a) True}}
    \end{overpic}
  \end{minipage}%
  \begin{minipage}[b]{.1245\textwidth}\centering
    \begin{overpic}[width=\linewidth]{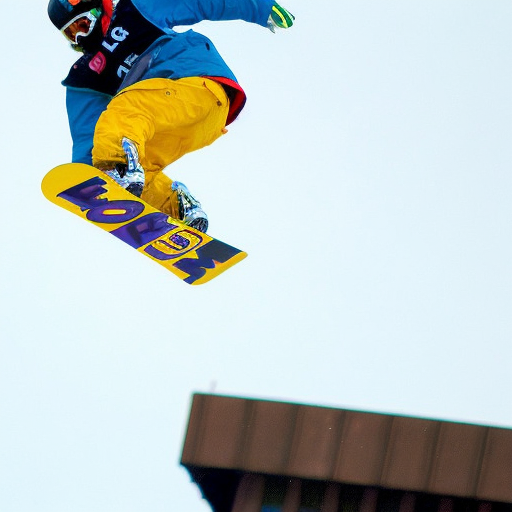}
      \put(3,6){\colorbox{black!60}{\scriptsize\color{white} (b) False}}
    \end{overpic}
  \end{minipage}%
  \begin{minipage}[b]{.1245\textwidth}\centering
    \begin{overpic}[width=\linewidth]{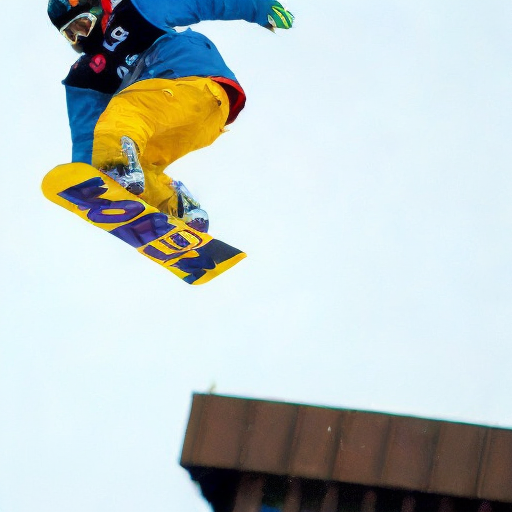}
      \put(3,6){\colorbox{black!60}{\scriptsize\color{white} (c) True}}
    \end{overpic}
  \end{minipage}%
  \begin{minipage}[b]{.1245\textwidth}\centering
    \begin{overpic}[width=\linewidth]{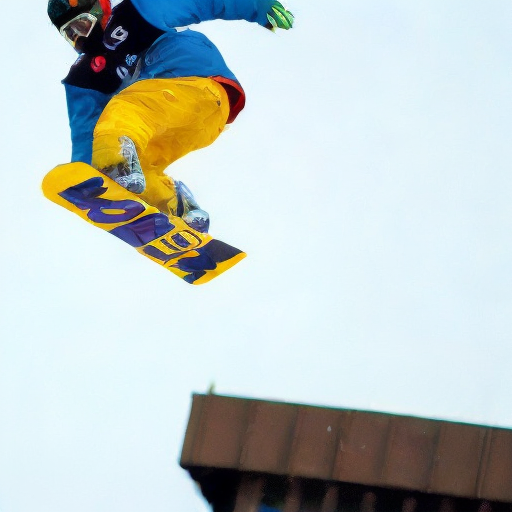}
      \put(3,6){\colorbox{black!60}{\scriptsize\color{white} (d) True}}
    \end{overpic}
  \end{minipage}%
  \begin{minipage}[b]{.1245\textwidth}\centering
    \begin{overpic}[width=\linewidth]{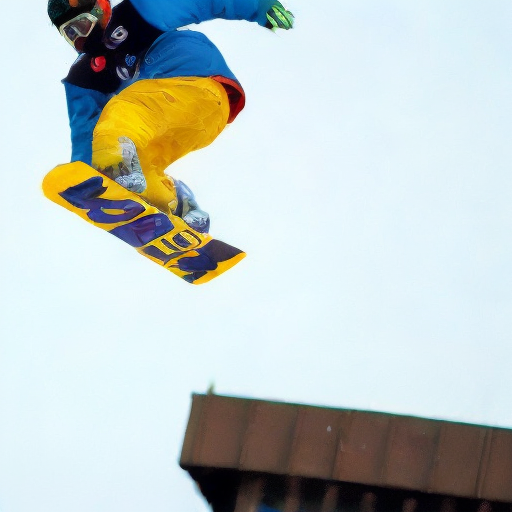}
      \put(3,6){\colorbox{black!60}{\scriptsize\color{white} (e) True}}
    \end{overpic}
  \end{minipage}%
  \begin{minipage}[b]{.1245\textwidth}\centering
    \begin{overpic}[width=\linewidth]{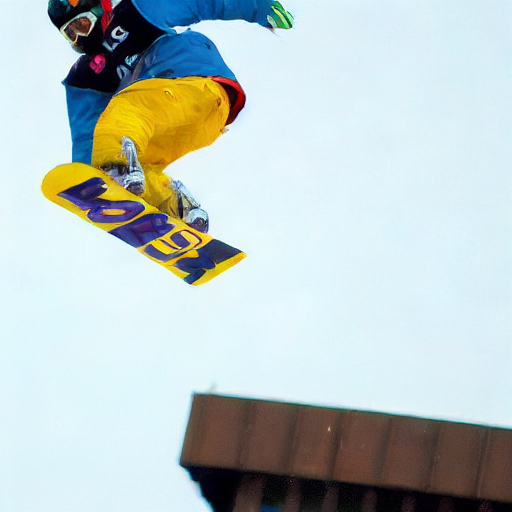}
      \put(3,6){\colorbox{black!60}{\scriptsize\color{white} (f) True}}
    \end{overpic}
  \end{minipage}%
  \begin{minipage}[b]{.1245\textwidth}\centering
    \begin{overpic}[width=\linewidth]{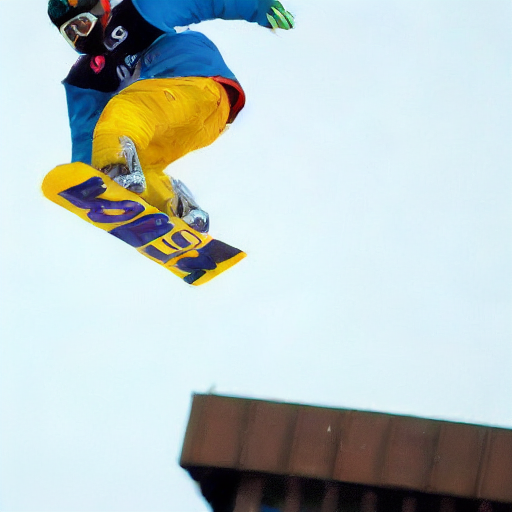}
      \put(3,6){\colorbox{black!60}{\scriptsize\color{white} (g) True}}
    \end{overpic}
  \end{minipage}%
  \begin{minipage}[b]{.1245\textwidth}\centering
    \begin{overpic}[width=\linewidth]{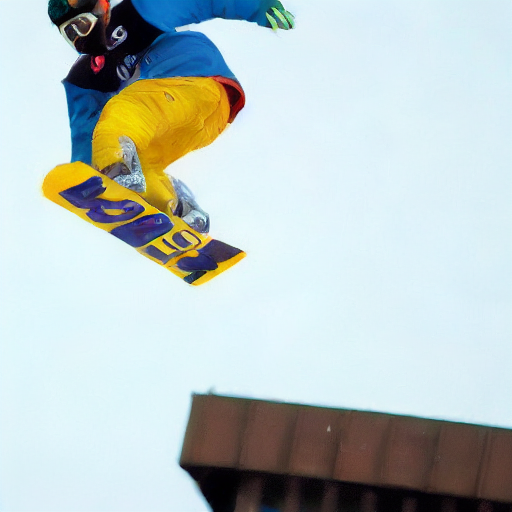}
      \put(3,6){\colorbox{black!60}{\scriptsize\color{white} (h) True}}
    \end{overpic}
  \end{minipage}%

  \par

  \begin{minipage}[b]{.1245\textwidth}\centering
    \begin{overpic}[width=\linewidth]{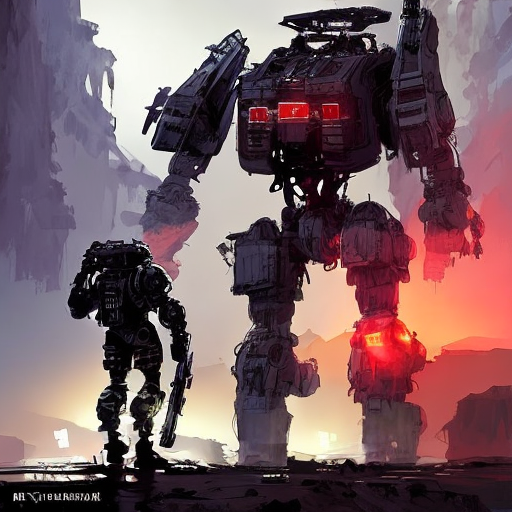}
      \put(3,6){\colorbox{black!60}{\scriptsize\color{white} (a) True}}
    \end{overpic}
  \end{minipage}%
  \begin{minipage}[b]{.1245\textwidth}\centering
    \begin{overpic}[width=\linewidth]{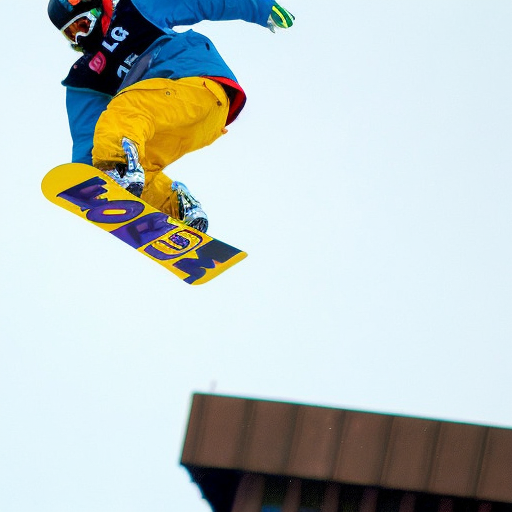}
      \put(3,6){\colorbox{black!60}{\scriptsize\color{white} (b) False}}
    \end{overpic}
  \end{minipage}%
  \begin{minipage}[b]{.1245\textwidth}\centering
    \begin{overpic}[width=\linewidth]{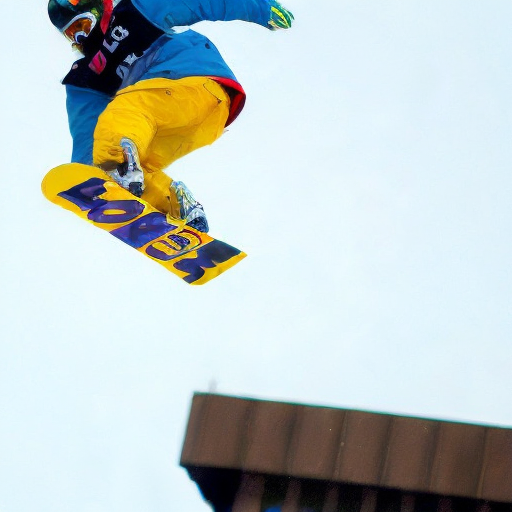}
      \put(3,6){\colorbox{black!60}{\scriptsize\color{white} (c) False}}
    \end{overpic}
  \end{minipage}%
  \begin{minipage}[b]{.1245\textwidth}\centering
    \begin{overpic}[width=\linewidth]{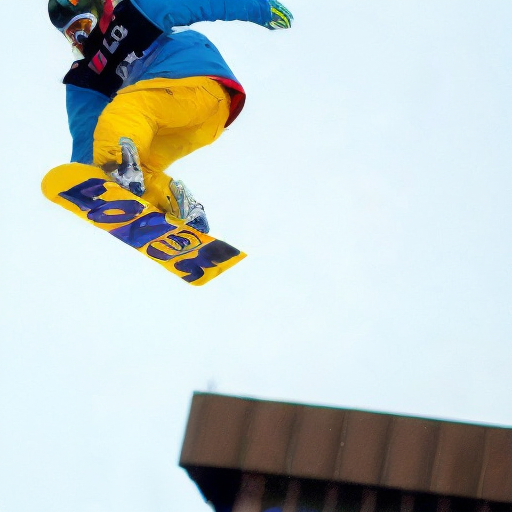}
      \put(3,6){\colorbox{black!60}{\scriptsize\color{white} (d) False}}
    \end{overpic}
  \end{minipage}%
  \begin{minipage}[b]{.1245\textwidth}\centering
    \begin{overpic}[width=\linewidth]{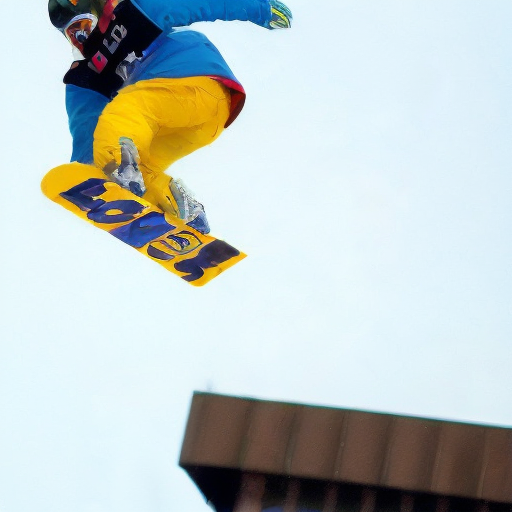}
      \put(3,6){\colorbox{black!60}{\scriptsize\color{white} (e) False}}
    \end{overpic}
  \end{minipage}%
  \begin{minipage}[b]{.1245\textwidth}\centering
    \begin{overpic}[width=\linewidth]{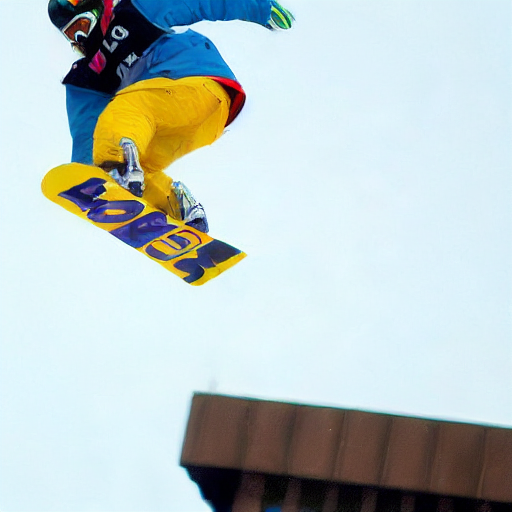}
      \put(3,6){\colorbox{black!60}{\scriptsize\color{white} (f) False}}
    \end{overpic}
  \end{minipage}%
  \begin{minipage}[b]{.1245\textwidth}\centering
    \begin{overpic}[width=\linewidth]{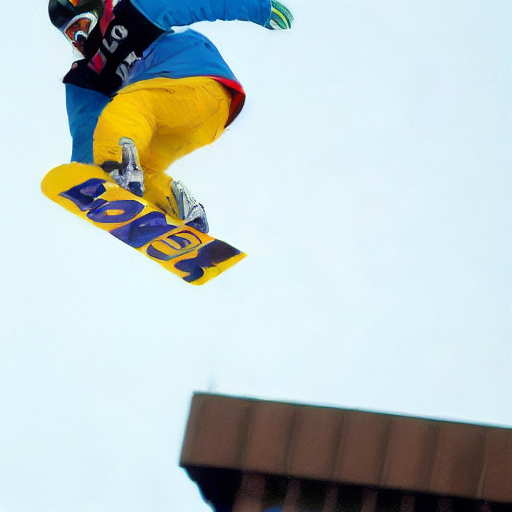}
      \put(3,6){\colorbox{black!60}{\scriptsize\color{white} (g) False}}
    \end{overpic}
  \end{minipage}%
  \begin{minipage}[b]{.1245\textwidth}\centering
    \begin{overpic}[width=\linewidth]{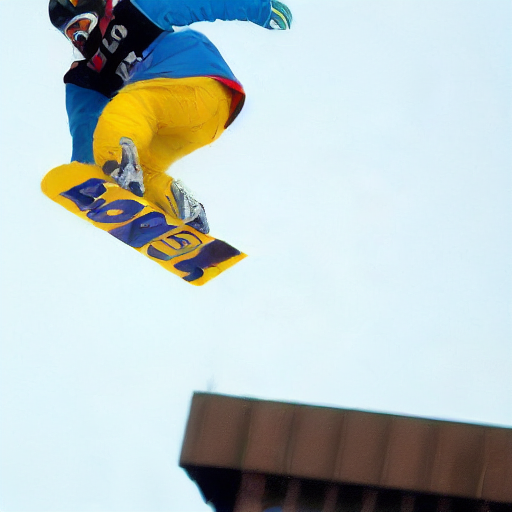}
      \put(3,6){\colorbox{black!60}{\scriptsize\color{white} (h) False}}
    \end{overpic}
  \end{minipage}%

  \par

  \begin{minipage}[b]{.1245\textwidth}\centering
    \begin{overpic}[width=\linewidth]{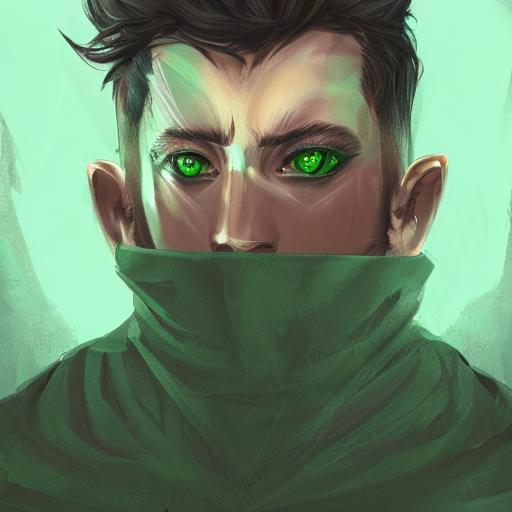}
      \put(3,6){\colorbox{black!60}{\scriptsize\color{white} (a) True/1.0000}}
    \end{overpic}
  \end{minipage}%
  \begin{minipage}[b]{.1245\textwidth}\centering
    \begin{overpic}[width=\linewidth]{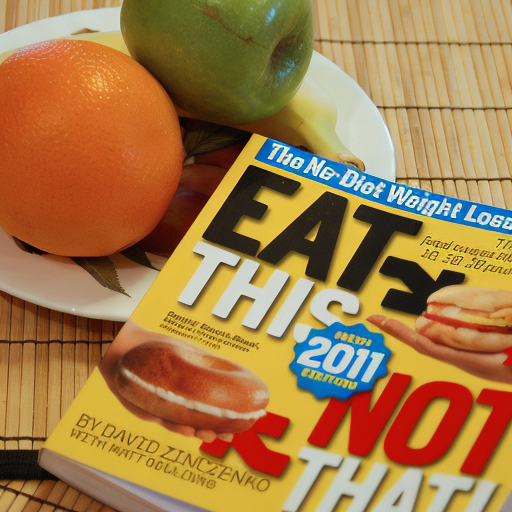}
      \put(3,6){\colorbox{black!60}{\scriptsize\color{white} (b) False/0.5044}}
    \end{overpic}
  \end{minipage}%
  \begin{minipage}[b]{.1245\textwidth}\centering
    \begin{overpic}[width=\linewidth]{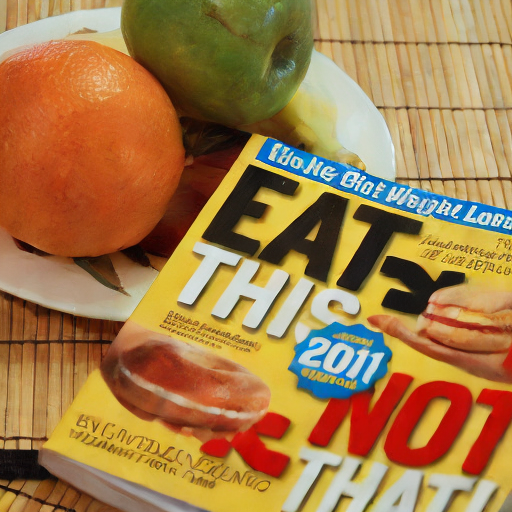}
      \put(3,6){\colorbox{black!60}{\scriptsize\color{white} (c) True/1.0000}}
    \end{overpic}
  \end{minipage}%
  \begin{minipage}[b]{.1245\textwidth}\centering
    \begin{overpic}[width=\linewidth]{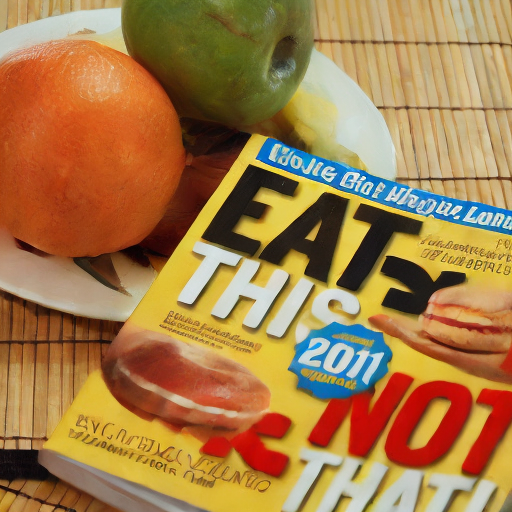}
      \put(3,6){\colorbox{black!60}{\scriptsize\color{white} (d) True/1.0000}}
    \end{overpic}
  \end{minipage}%
  \begin{minipage}[b]{.1245\textwidth}\centering
    \begin{overpic}[width=\linewidth]{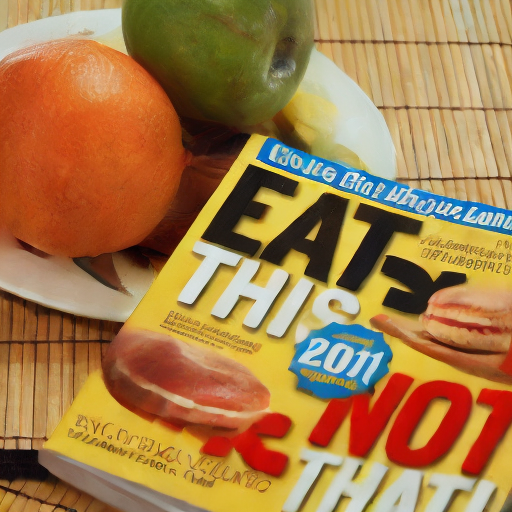}
      \put(3,6){\colorbox{black!60}{\scriptsize\color{white} (e) True/1.0000}}
    \end{overpic}
  \end{minipage}%
  \begin{minipage}[b]{.1245\textwidth}\centering
    \begin{overpic}[width=\linewidth]{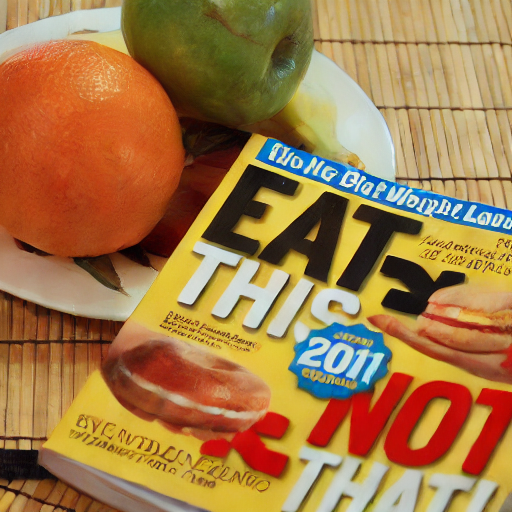}
      \put(3,6){\colorbox{black!60}{\scriptsize\color{white} (f) True/1.0000}}
    \end{overpic}
  \end{minipage}%
  \begin{minipage}[b]{.1245\textwidth}\centering
    \begin{overpic}[width=\linewidth]{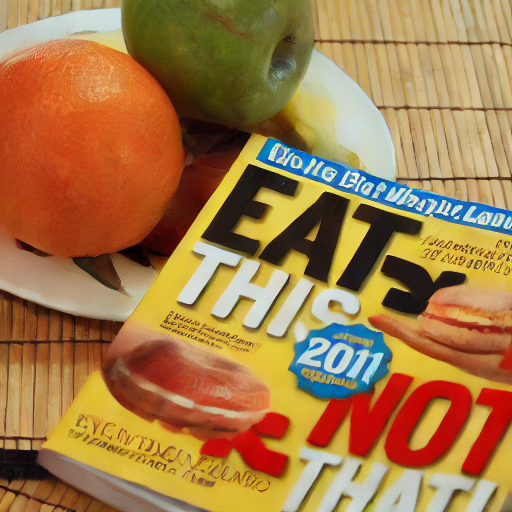}
      \put(3,6){\colorbox{black!60}{\scriptsize\color{white} (g) True/1.0000}}
    \end{overpic}
  \end{minipage}%
  \begin{minipage}[b]{.1245\textwidth}\centering
    \begin{overpic}[width=\linewidth]{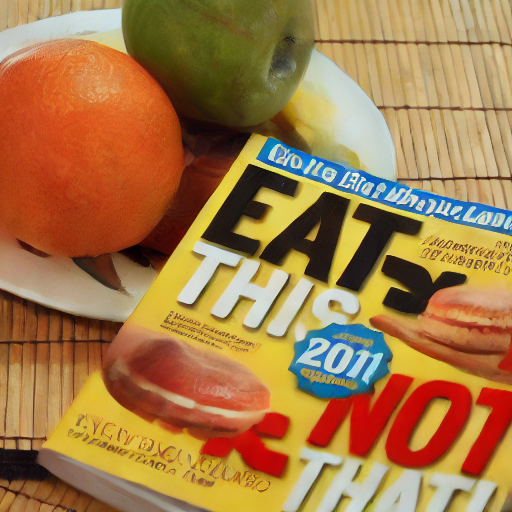}
      \put(3,6){\colorbox{black!60}{\scriptsize\color{white} (h) True/1.0000}}
    \end{overpic}
  \end{minipage}%

  \par

  \begin{minipage}[b]{.1245\textwidth}\centering
    \begin{overpic}[width=\linewidth]{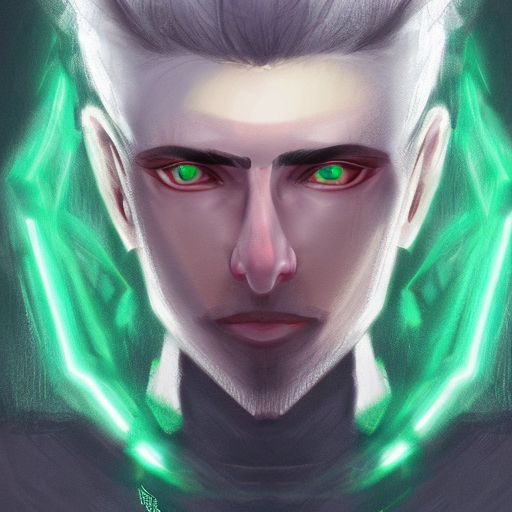}
      \put(3,6){\colorbox{black!60}{\scriptsize\color{white} (a) True/1.0000}}
    \end{overpic}
  \end{minipage}%
  \begin{minipage}[b]{.1245\textwidth}\centering
    \begin{overpic}[width=\linewidth]{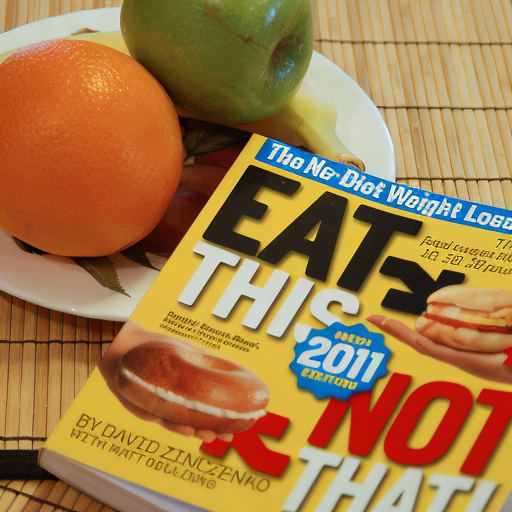}
      \put(3,6){\colorbox{black!60}{\scriptsize\color{white} (b) False/0.5022}}
    \end{overpic}
  \end{minipage}%
  \begin{minipage}[b]{.1245\textwidth}\centering
    \begin{overpic}[width=\linewidth]{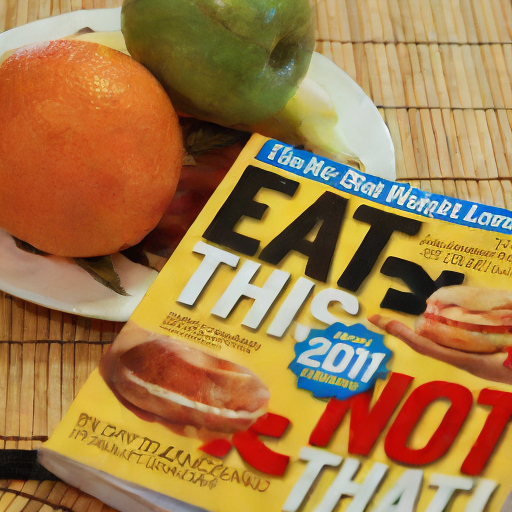}
      \put(3,6){\colorbox{black!60}{\scriptsize\color{white} (c) False/0.5273}}
    \end{overpic}
  \end{minipage}%
  \begin{minipage}[b]{.1245\textwidth}\centering
    \begin{overpic}[width=\linewidth]{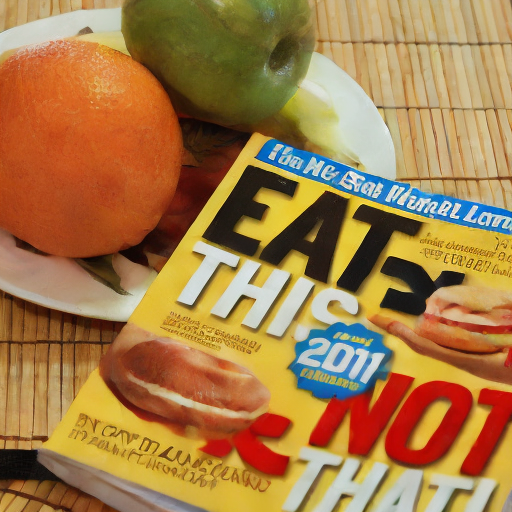}
      \put(3,6){\colorbox{black!60}{\scriptsize\color{white} (d) False/0.5078}}
    \end{overpic}
  \end{minipage}%
  \begin{minipage}[b]{.1245\textwidth}\centering
    \begin{overpic}[width=\linewidth]{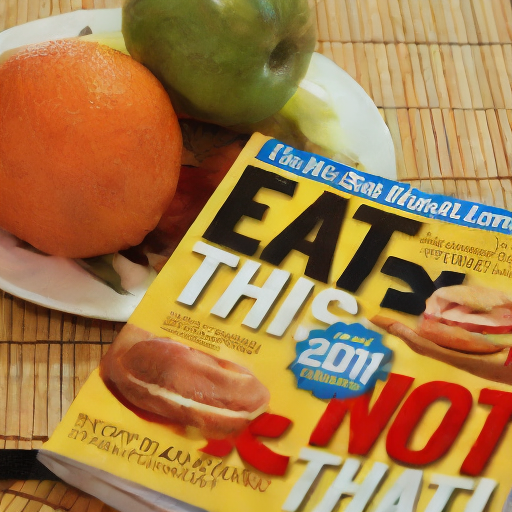}
      \put(3,6){\colorbox{black!60}{\scriptsize\color{white} (e) False/0.5820}}
    \end{overpic}
  \end{minipage}%
  \begin{minipage}[b]{.1245\textwidth}\centering
    \begin{overpic}[width=\linewidth]{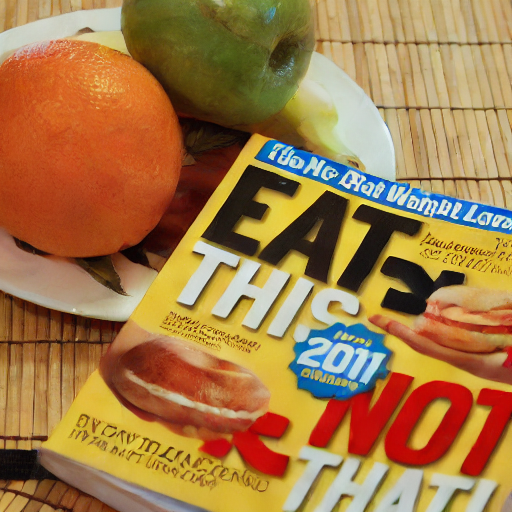}
      \put(3,6){\colorbox{black!60}{\scriptsize\color{white} (f) False/0.5000}}
    \end{overpic}
  \end{minipage}%
  \begin{minipage}[b]{.1245\textwidth}\centering
    \begin{overpic}[width=\linewidth]{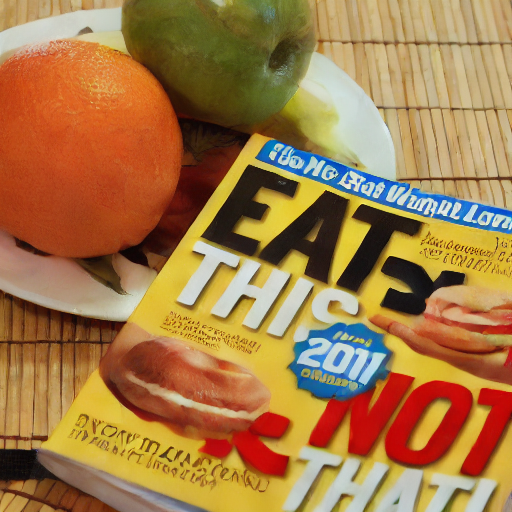}
      \put(3,6){\colorbox{black!60}{\scriptsize\color{white} (g) False/0.5078}}
    \end{overpic}
  \end{minipage}%
  \begin{minipage}[b]{.1245\textwidth}\centering
    \begin{overpic}[width=\linewidth]{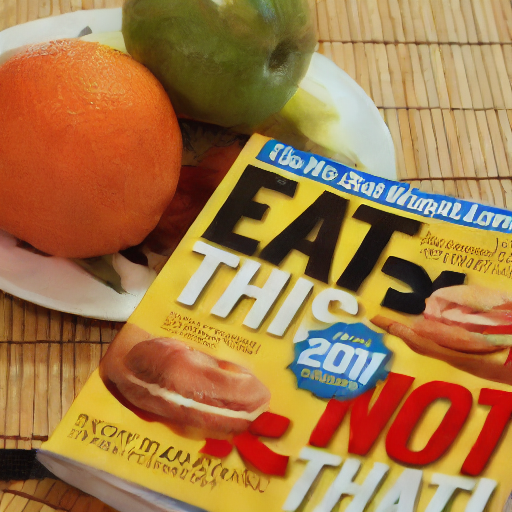}
      \put(3,6){\colorbox{black!60}{\scriptsize\color{white} (h) False/0.5312}}
    \end{overpic}
  \end{minipage}%

  \par

  \begin{minipage}[b]{.1245\textwidth}\centering
    \begin{overpic}[width=\linewidth]{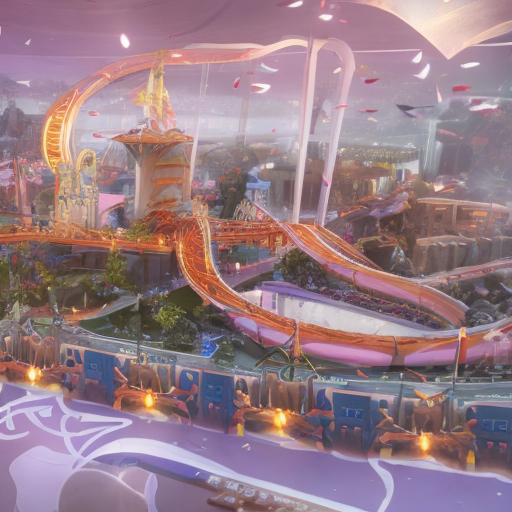}
      \put(3,6){\colorbox{black!60}{\scriptsize\color{white} (a) True/1.0000}}
    \end{overpic}
  \end{minipage}%
  \begin{minipage}[b]{.1245\textwidth}\centering
    \begin{overpic}[width=\linewidth]{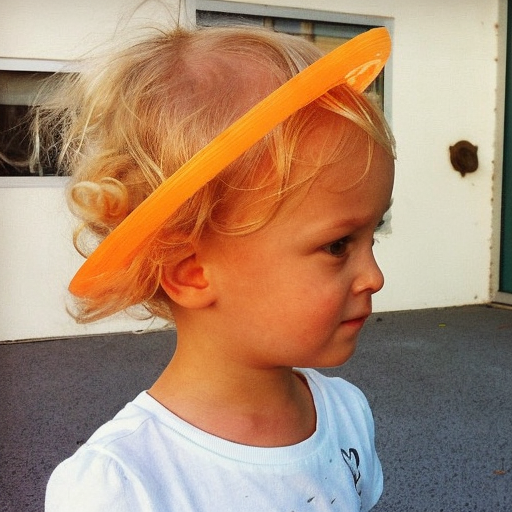}
      \put(3,6){\colorbox{black!60}{\scriptsize\color{white} (b) False/0.5274}}
    \end{overpic}
  \end{minipage}%
  \begin{minipage}[b]{.1245\textwidth}\centering
    \begin{overpic}[width=\linewidth]{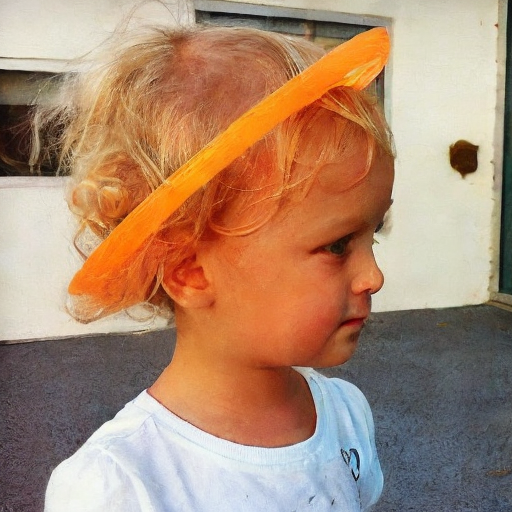}
      \put(3,6){\colorbox{black!60}{\scriptsize\color{white} (c) True/1.0000}}
    \end{overpic}
  \end{minipage}%
  \begin{minipage}[b]{.1245\textwidth}\centering
    \begin{overpic}[width=\linewidth]{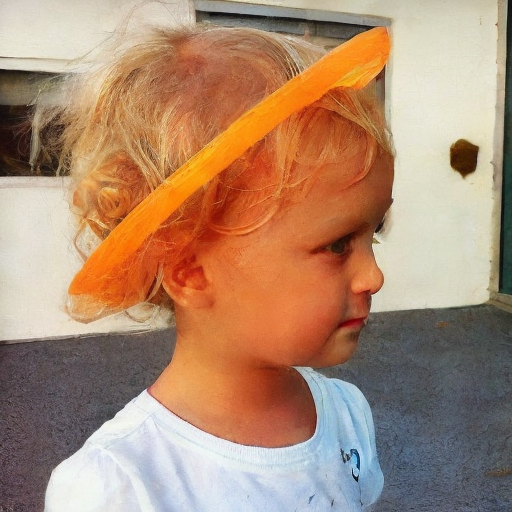}
      \put(3,6){\colorbox{black!60}{\scriptsize\color{white} (d) True/1.0000}}
    \end{overpic}
  \end{minipage}%
  \begin{minipage}[b]{.1245\textwidth}\centering
    \begin{overpic}[width=\linewidth]{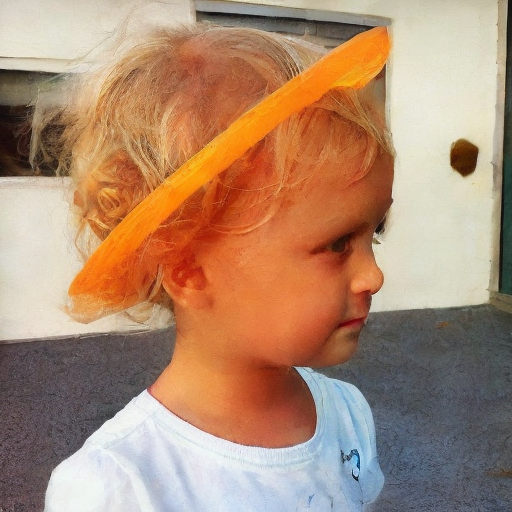}
      \put(3,6){\colorbox{black!60}{\scriptsize\color{white} (e) True/1.0000}}
    \end{overpic}
  \end{minipage}%
  \begin{minipage}[b]{.1245\textwidth}\centering
    \begin{overpic}[width=\linewidth]{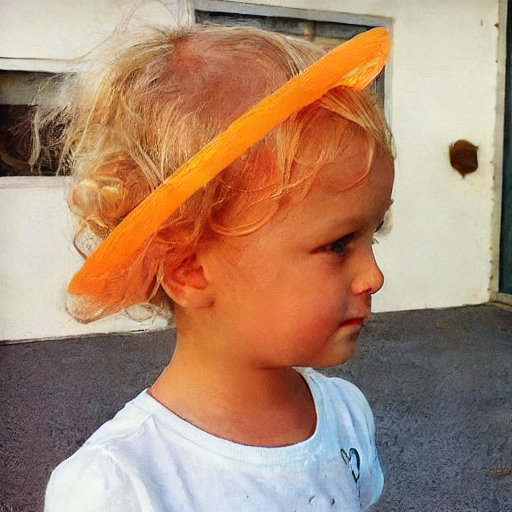}
      \put(3,6){\colorbox{black!60}{\scriptsize\color{white} (f) True/1.0000}}
    \end{overpic}
  \end{minipage}%
  \begin{minipage}[b]{.1245\textwidth}\centering
    \begin{overpic}[width=\linewidth]{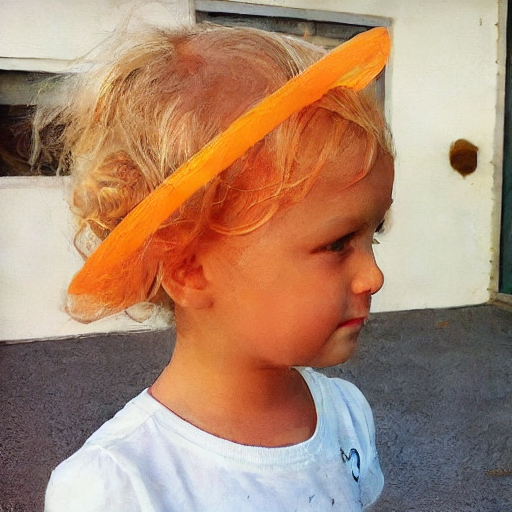}
      \put(3,6){\colorbox{black!60}{\scriptsize\color{white} (g) True/1.0000}}
    \end{overpic}
  \end{minipage}%
  \begin{minipage}[b]{.1245\textwidth}\centering
    \begin{overpic}[width=\linewidth]{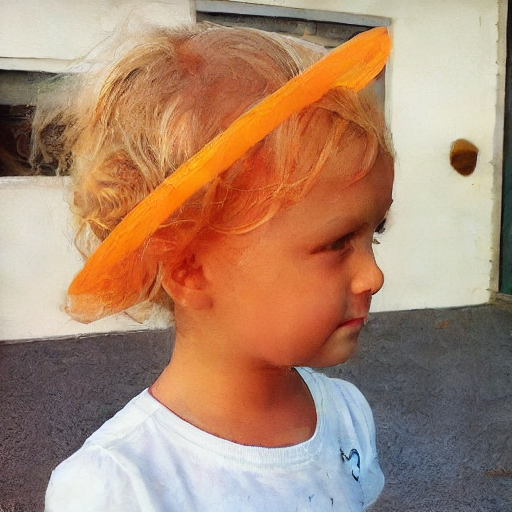}
      \put(3,6){\colorbox{black!60}{\scriptsize\color{white} (h) True/1.0000}}
    \end{overpic}
  \end{minipage}%

  \par

  \begin{minipage}[b]{.1245\textwidth}\centering
    \begin{overpic}[width=\linewidth]{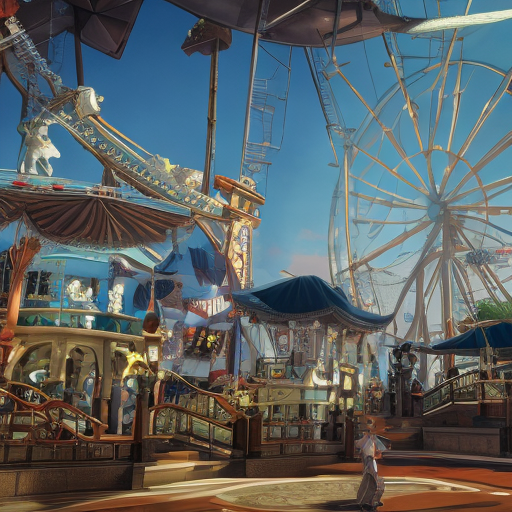}
      \put(3,6){\colorbox{black!60}{\scriptsize\color{white} (a) True/1.0000}}
    \end{overpic}
  \end{minipage}%
  \begin{minipage}[b]{.1245\textwidth}\centering
    \begin{overpic}[width=\linewidth]{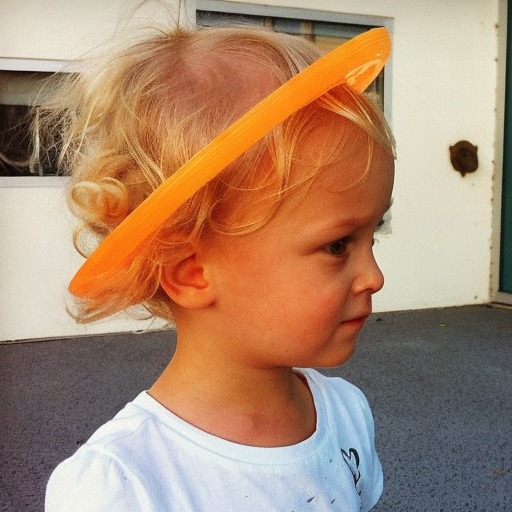}
      \put(3,6){\colorbox{black!60}{\scriptsize\color{white} (b) False/0.5252}}
    \end{overpic}
  \end{minipage}%
  \begin{minipage}[b]{.1245\textwidth}\centering
    \begin{overpic}[width=\linewidth]{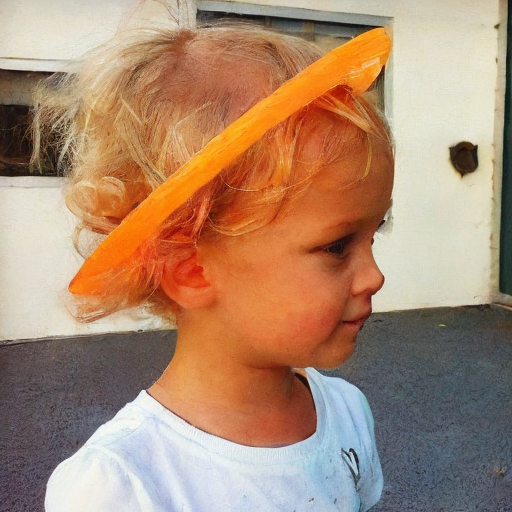}
      \put(3,6){\colorbox{black!60}{\scriptsize\color{white} (c) False/0.5078}}
    \end{overpic}
  \end{minipage}%
  \begin{minipage}[b]{.1245\textwidth}\centering
    \begin{overpic}[width=\linewidth]{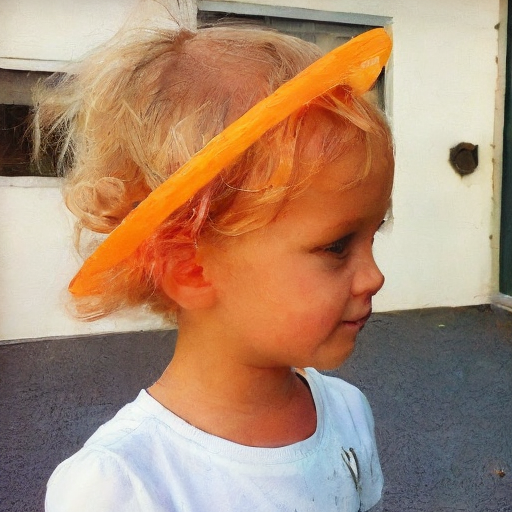}
      \put(3,6){\colorbox{black!60}{\scriptsize\color{white} (d) False/0.5078}}
    \end{overpic}
  \end{minipage}%
  \begin{minipage}[b]{.1245\textwidth}\centering
    \begin{overpic}[width=\linewidth]{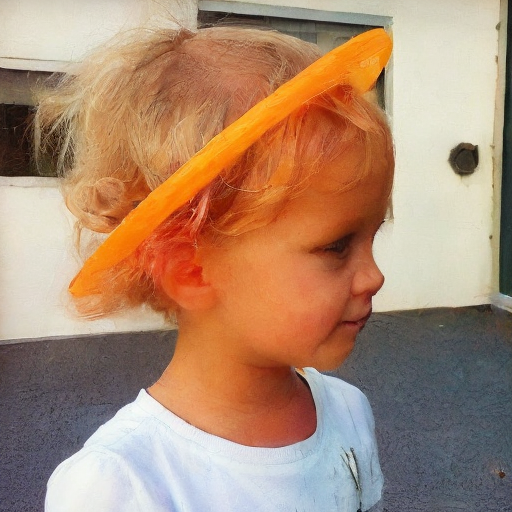}
      \put(3,6){\colorbox{black!60}{\scriptsize\color{white} (e) False/0.5078}}
    \end{overpic}
  \end{minipage}%
  \begin{minipage}[b]{.1245\textwidth}\centering
    \begin{overpic}[width=\linewidth]{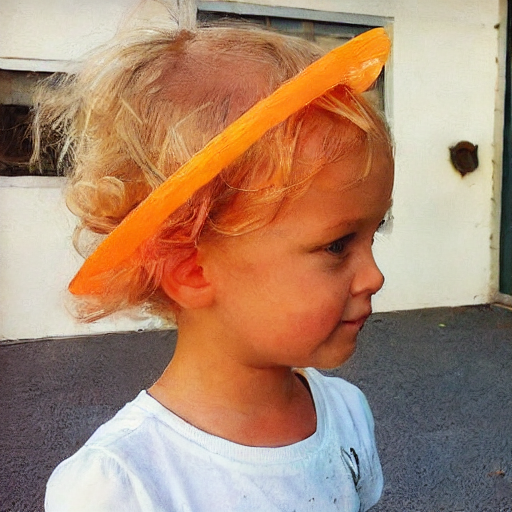}
      \put(3,6){\colorbox{black!60}{\scriptsize\color{white} (f) False/0.5117}}
    \end{overpic}
  \end{minipage}%
  \begin{minipage}[b]{.1245\textwidth}\centering
    \begin{overpic}[width=\linewidth]{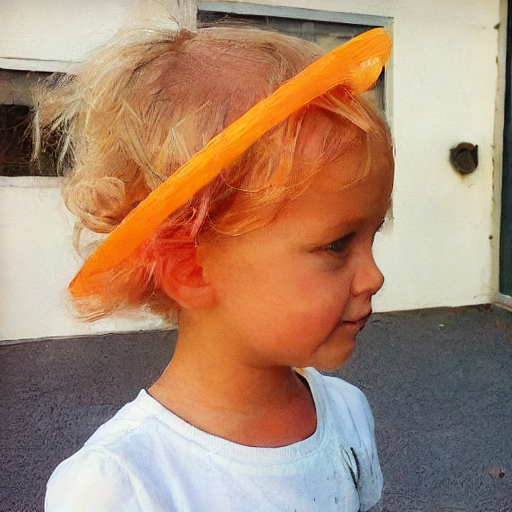}
      \put(3,6){\colorbox{black!60}{\scriptsize\color{white} (g) False/0.5117}}
    \end{overpic}
  \end{minipage}%
  \begin{minipage}[b]{.1245\textwidth}\centering
    \begin{overpic}[width=\linewidth]{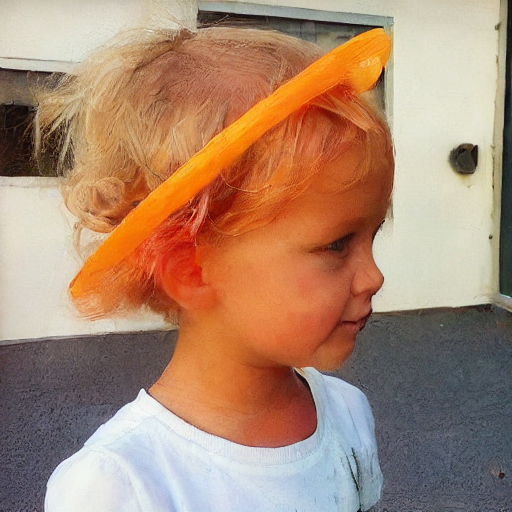}
      \put(3,6){\colorbox{black!60}{\scriptsize\color{white} (h) False/0.5117}}
    \end{overpic}
  \end{minipage}%

\par
  \begin{minipage}[b]{.1245\textwidth}\centering
    \begin{overpic}[width=\linewidth]{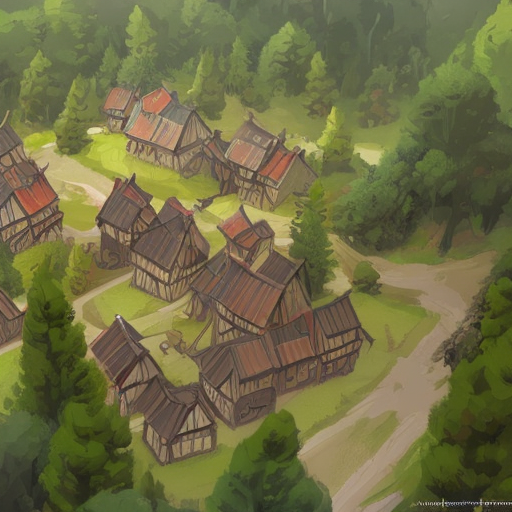}
      \put(3,6){\colorbox{black!60}{\scriptsize\color{white} (a) True/1.0000}}
    \end{overpic}
  \end{minipage}%
  \begin{minipage}[b]{.1245\textwidth}\centering
    \begin{overpic}[width=\linewidth]{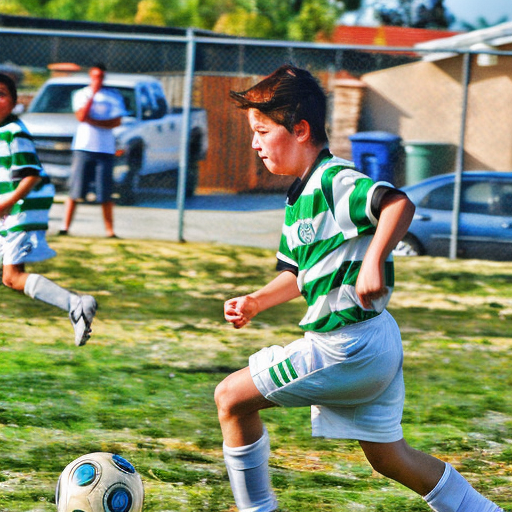}
      \put(3,6){\colorbox{black!60}{\scriptsize\color{white} (b) False/0.4980}}
    \end{overpic}
  \end{minipage}%
  \begin{minipage}[b]{.1245\textwidth}\centering
    \begin{overpic}[width=\linewidth]{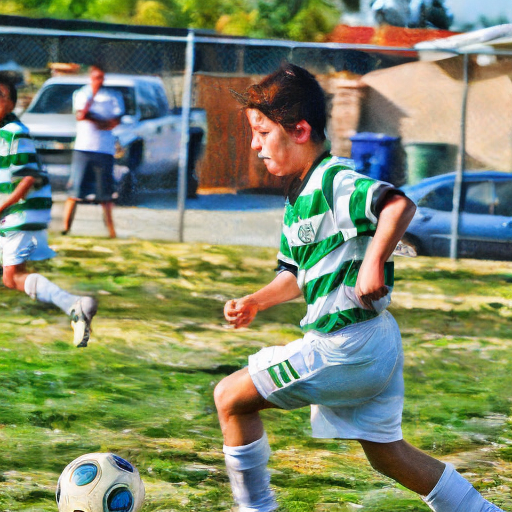}
      \put(3,6){\colorbox{black!60}{\scriptsize\color{white} (c) True/1.0000}}
    \end{overpic}
  \end{minipage}%
  \begin{minipage}[b]{.1245\textwidth}\centering
    \begin{overpic}[width=\linewidth]{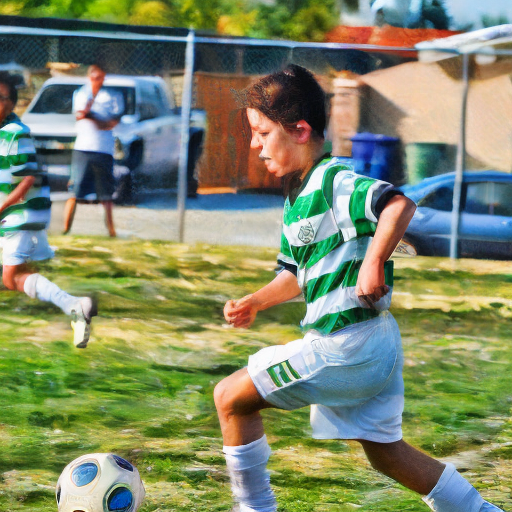}
      \put(3,6){\colorbox{black!60}{\scriptsize\color{white} (d) True/1.0000}}
    \end{overpic}
  \end{minipage}%
  \begin{minipage}[b]{.1245\textwidth}\centering
    \begin{overpic}[width=\linewidth]{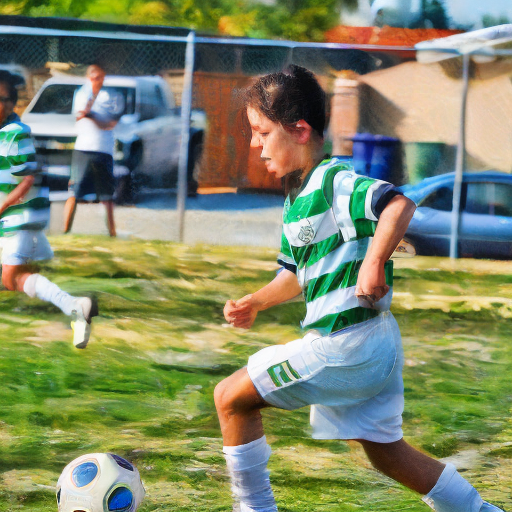}
      \put(3,6){\colorbox{black!60}{\scriptsize\color{white} (e) True/1.0000}}
    \end{overpic}
  \end{minipage}%
  \begin{minipage}[b]{.1245\textwidth}\centering
    \begin{overpic}[width=\linewidth]{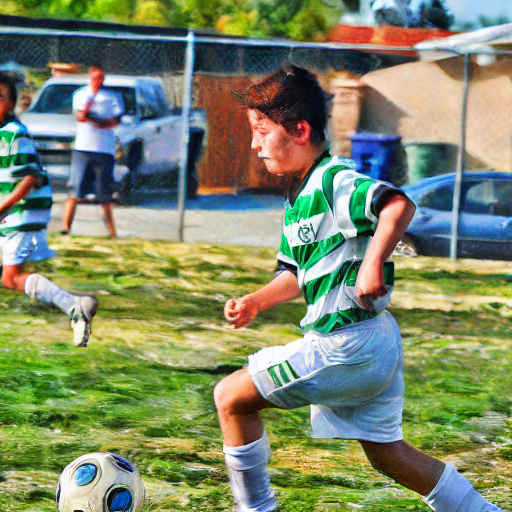}
      \put(3,6){\colorbox{black!60}{\scriptsize\color{white} (f) True/1.0000}}
    \end{overpic}
  \end{minipage}%
  \begin{minipage}[b]{.1245\textwidth}\centering
    \begin{overpic}[width=\linewidth]{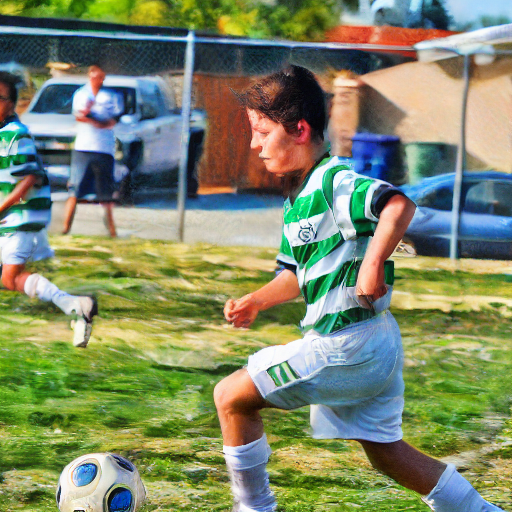}
      \put(3,6){\colorbox{black!60}{\scriptsize\color{white} (g) True/1.0000}}
    \end{overpic}
  \end{minipage}%
  \begin{minipage}[b]{.1245\textwidth}\centering
    \begin{overpic}[width=\linewidth]{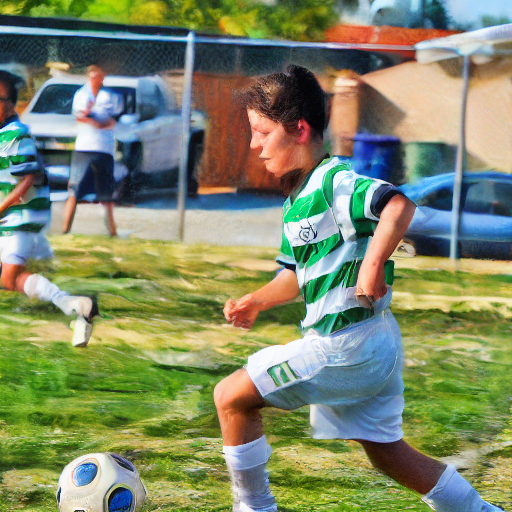}
      \put(3,6){\colorbox{black!60}{\scriptsize\color{white} (h) True/1.0000}}
    \end{overpic}
  \end{minipage}%

  \par

  \begin{minipage}[b]{.1245\textwidth}\centering
    \begin{overpic}[width=\linewidth]{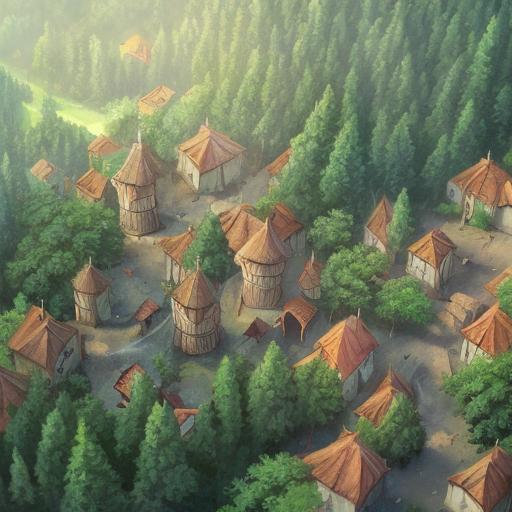}
      \put(3,6){\colorbox{black!60}{\scriptsize\color{white} (a) True/1.0000}}
    \end{overpic}
  \end{minipage}%
  \begin{minipage}[b]{.1245\textwidth}\centering
    \begin{overpic}[width=\linewidth]{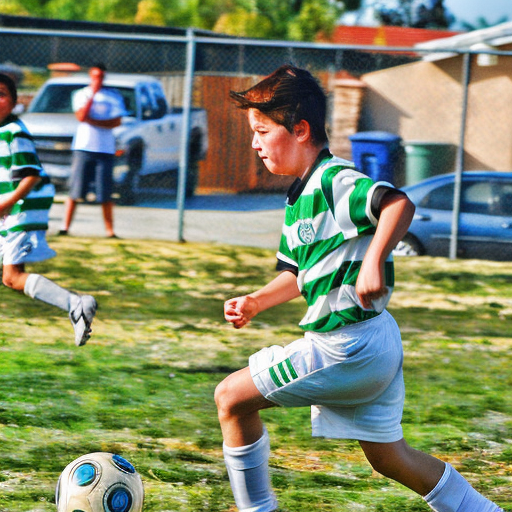}
      \put(3,6){\colorbox{black!60}{\scriptsize\color{white} (b) False/0.4980}}
    \end{overpic}
  \end{minipage}%
  \begin{minipage}[b]{.1245\textwidth}\centering
    \begin{overpic}[width=\linewidth]{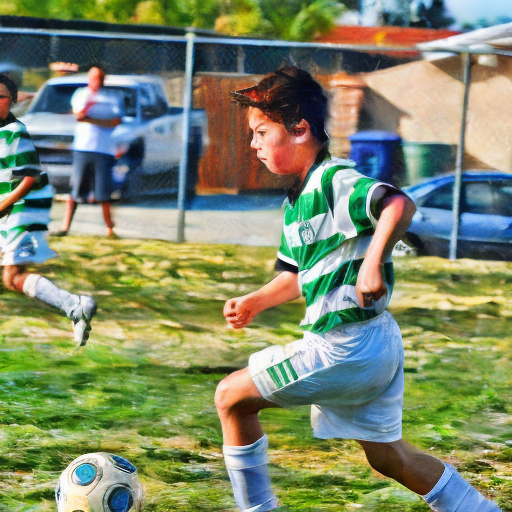}
      \put(3,6){\colorbox{black!60}{\scriptsize\color{white} (c) False/0.4570}}
    \end{overpic}
  \end{minipage}%
  \begin{minipage}[b]{.1245\textwidth}\centering
    \begin{overpic}[width=\linewidth]{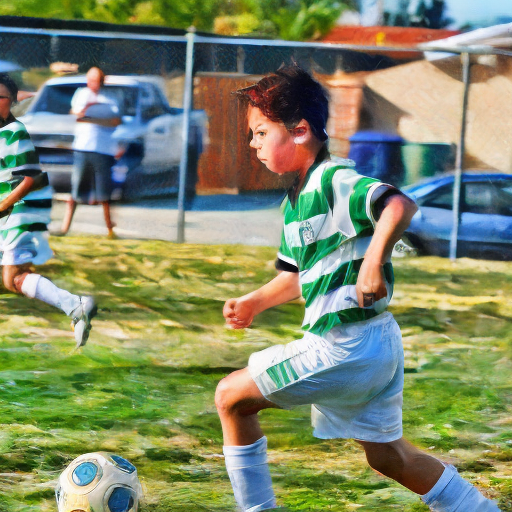}
      \put(3,6){\colorbox{black!60}{\scriptsize\color{white} (d) False/0.4922}}
    \end{overpic}
  \end{minipage}%
  \begin{minipage}[b]{.1245\textwidth}\centering
    \begin{overpic}[width=\linewidth]{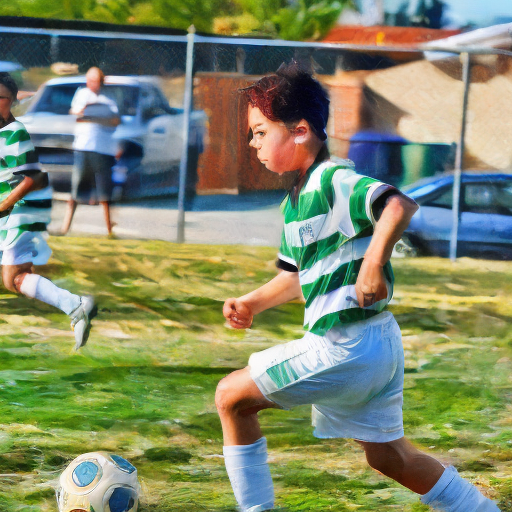}
      \put(3,6){\colorbox{black!60}{\scriptsize\color{white} (e) False/0.4453}}
    \end{overpic}
  \end{minipage}%
  \begin{minipage}[b]{.1245\textwidth}\centering
    \begin{overpic}[width=\linewidth]{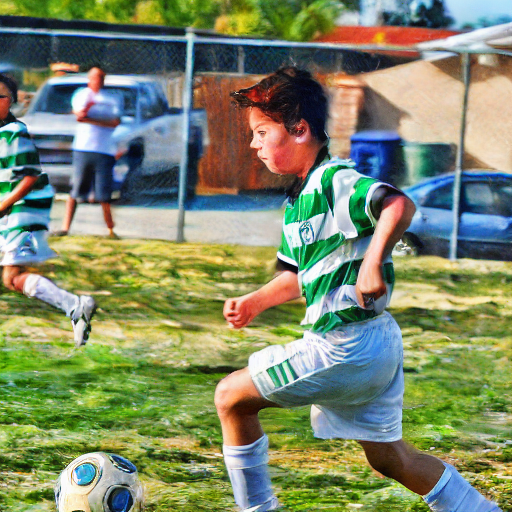}
      \put(3,6){\colorbox{black!60}{\scriptsize\color{white} (f) False/0.4922}}
    \end{overpic}
  \end{minipage}%
  \begin{minipage}[b]{.1245\textwidth}\centering
    \begin{overpic}[width=\linewidth]{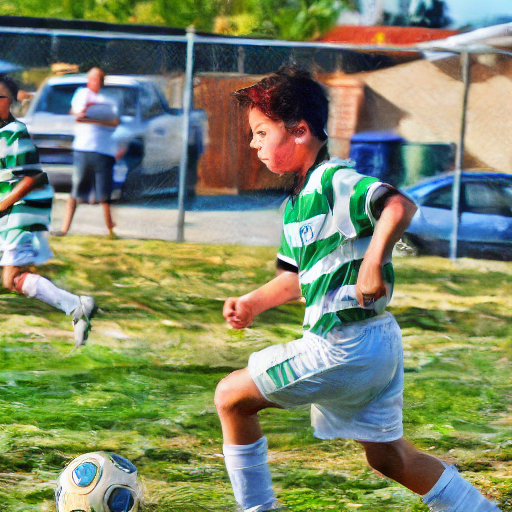}
      \put(3,6){\colorbox{black!60}{\scriptsize\color{white} (g) False/0.5039}}
    \end{overpic}
  \end{minipage}%
  \begin{minipage}[b]{.1245\textwidth}\centering
    \begin{overpic}[width=\linewidth]{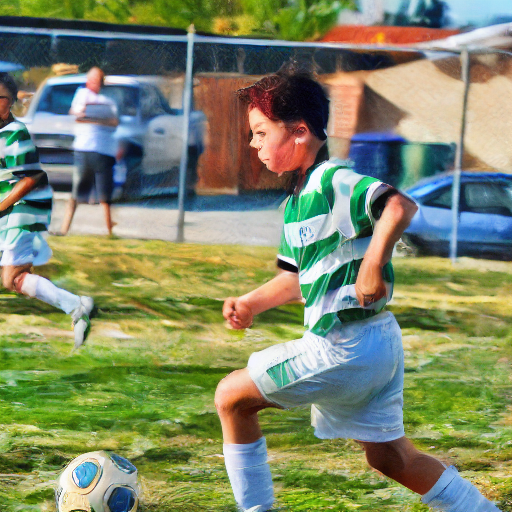}
      \put(3,6){\colorbox{black!60}{\scriptsize\color{white} (h) False/0.4883}}
    \end{overpic}
  \end{minipage}%
  \caption{Examples of imprinting attack results. Rows 1–8 correspond, respectively, to \textbf{Tree-Ring}, \textbf{Tree-Ring+SemBind}, \textbf{Gaussian Shading}, \textbf{Gaussian Shading+SemBind},  \textbf{PRC},  \textbf{PRC+SemBind}, \textbf{Gaussian-Shading++}, and \textbf{Gaussian-Shading++ + SemBind}.
  Panel labels show Detect/Decode outcomes. 
  (a) watermarked image; (b) target image; (c)/(d)/(e): attacker model SD~2.1 with 50/100/150 steps; 
  (f)/(g)/(h): attacker model SD~1.5 with 50/100/150 steps.}
  \label{fig:sup_gs++_grid_sdp}
\end{figure*}

\begin{figure*}[t]
  \centering
  \begin{minipage}[b]{.158\textwidth}\centering
    \begin{overpic}[width=\linewidth]{figs/sup_figs_results_sdp/tr/tr_sdp.png}
      \put(3,6){\colorbox{black!60}{\scriptsize\color{white} (a) True}}
    \end{overpic}
  \end{minipage}\hfill
  \begin{minipage}[b]{.158\textwidth}\centering
    \begin{overpic}[width=\linewidth]{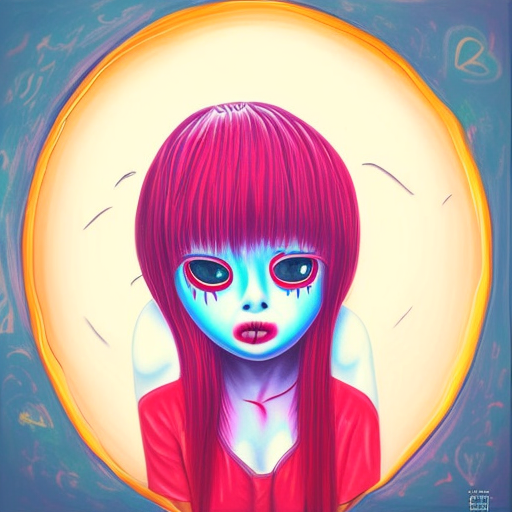}
      \put(3,6){\colorbox{black!60}{\scriptsize\color{white} (b) True}}
    \end{overpic}
  \end{minipage}\hfill
  \begin{minipage}[b]{.158\textwidth}\centering
    \begin{overpic}[width=\linewidth]{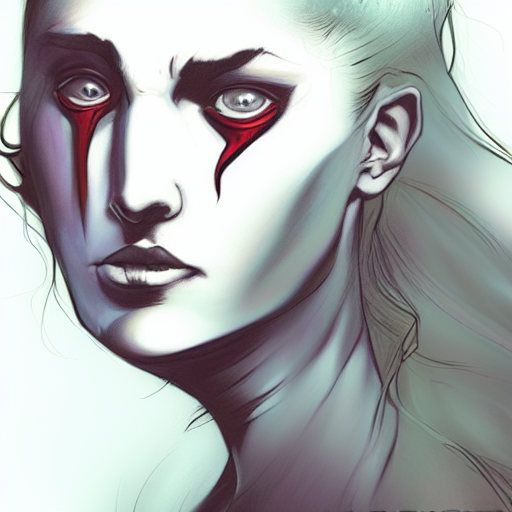}
      \put(3,6){\colorbox{black!60}{\scriptsize\color{white} (c) True}}
    \end{overpic}
  \end{minipage}\hfill
  \begin{minipage}[b]{.158\textwidth}\centering
    \begin{overpic}[width=\linewidth]{figs/sup_figs_results_sdp/tr_s/tr_s_sdp.png}
      \put(3,6){\colorbox{black!60}{\scriptsize\color{white} (d) True}}
    \end{overpic}
  \end{minipage}\hfill
  \begin{minipage}[b]{.158\textwidth}\centering
    \begin{overpic}[width=\linewidth]{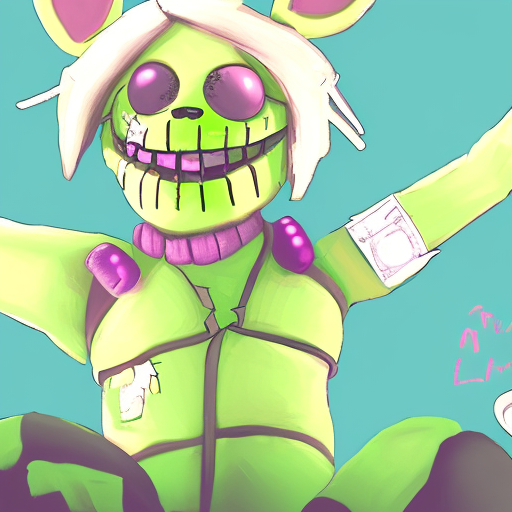}
      \put(3,6){\colorbox{black!60}{\scriptsize\color{white} (e) False}}
    \end{overpic}
  \end{minipage}\hfill
  \begin{minipage}[b]{.158\textwidth}\centering
    \begin{overpic}[width=\linewidth]{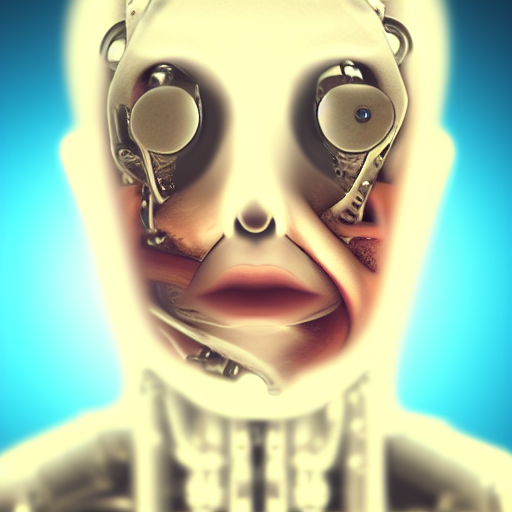}
      \put(3,6){\colorbox{black!60}{\scriptsize\color{white} (f) False}}
    \end{overpic}
  \end{minipage}

  \begin{minipage}[b]{.158\textwidth}\centering
    \begin{overpic}[width=\linewidth]{figs/sup_figs_results_sdp/gs/gs_sdp.png}
      \put(3,6){\colorbox{black!60}{\scriptsize\color{white} (a) True/1.0000}}
    \end{overpic}
  \end{minipage}\hfill
  \begin{minipage}[b]{.158\textwidth}\centering
    \begin{overpic}[width=\linewidth]{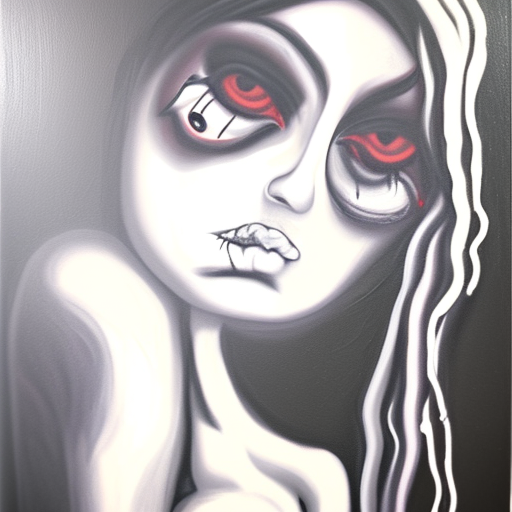}
      \put(3,6){\colorbox{black!60}{\scriptsize\color{white} (b) True/1.0000}}
    \end{overpic}
  \end{minipage}\hfill
  \begin{minipage}[b]{.158\textwidth}\centering
    \begin{overpic}[width=\linewidth]{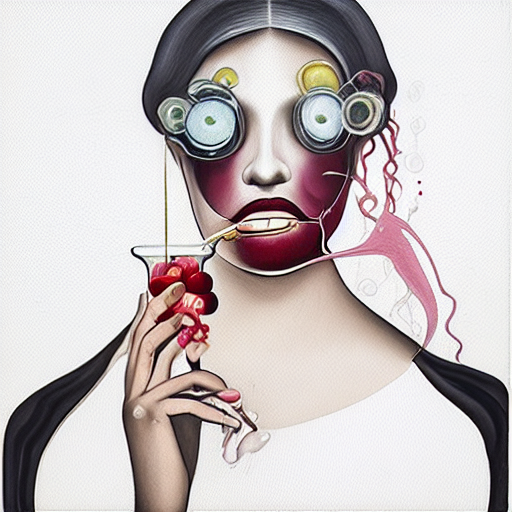}
      \put(3,6){\colorbox{black!60}{\scriptsize\color{white} (c) True/1.0000}}
    \end{overpic}
  \end{minipage}\hfill
  \begin{minipage}[b]{.158\textwidth}\centering
    \begin{overpic}[width=\linewidth]{figs/sup_figs_results_sdp/gs_s/gs_s_sdp.png}
      \put(3,6){\colorbox{black!60}{\scriptsize\color{white} (d) True/1.0000}}
    \end{overpic}
  \end{minipage}\hfill
  \begin{minipage}[b]{.158\textwidth}\centering
    \begin{overpic}[width=\linewidth]{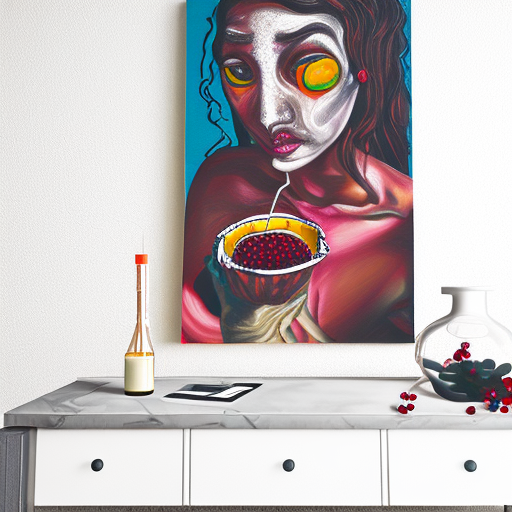}
      \put(3,6){\colorbox{black!60}{\scriptsize\color{white} (e) False/0.4922}}
    \end{overpic}
  \end{minipage}\hfill
  \begin{minipage}[b]{.158\textwidth}\centering
    \begin{overpic}[width=\linewidth]{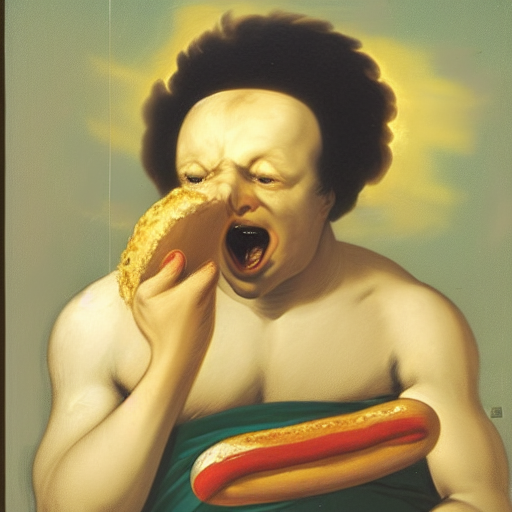}
      \put(3,6){\colorbox{black!60}{\scriptsize\color{white} (f) False/0.4258}}
    \end{overpic}
  \end{minipage}

  \begin{minipage}[b]{.158\textwidth}\centering
    \begin{overpic}[width=\linewidth]{figs/sup_figs_results_sdp/prc/prc_sdp.png}
      \put(3,6){\colorbox{black!60}{\scriptsize\color{white} (a) True/1.0000}}
    \end{overpic}
  \end{minipage}\hfill
  \begin{minipage}[b]{.158\textwidth}\centering
    \begin{overpic}[width=\linewidth]{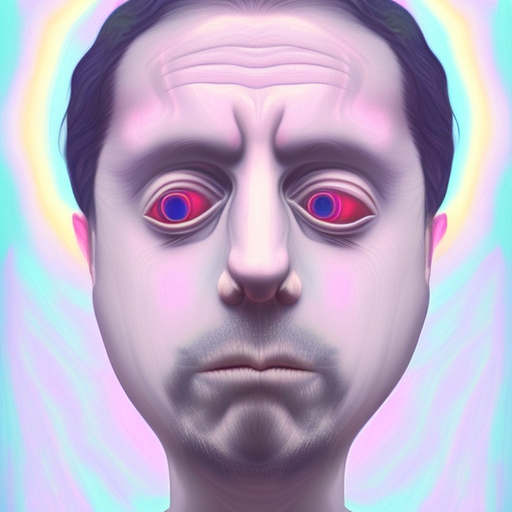}
      \put(3,6){\colorbox{black!60}{\scriptsize\color{white} (b) True/1.0000}}
    \end{overpic}
  \end{minipage}\hfill
  \begin{minipage}[b]{.158\textwidth}\centering
    \begin{overpic}[width=\linewidth]{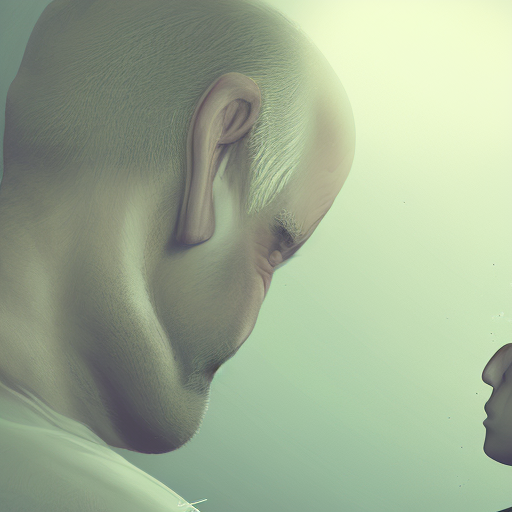}
      \put(3,6){\colorbox{black!60}{\scriptsize\color{white} (c) True/1.0000}}
    \end{overpic}
  \end{minipage}\hfill
  \begin{minipage}[b]{.158\textwidth}\centering
    \begin{overpic}[width=\linewidth]{figs/sup_figs_results_sdp/prc_s/prc_s_sdp.png}
      \put(3,6){\colorbox{black!60}{\scriptsize\color{white} (d) True/1.0000}}
    \end{overpic}
  \end{minipage}\hfill
  \begin{minipage}[b]{.158\textwidth}\centering
    \begin{overpic}[width=\linewidth]{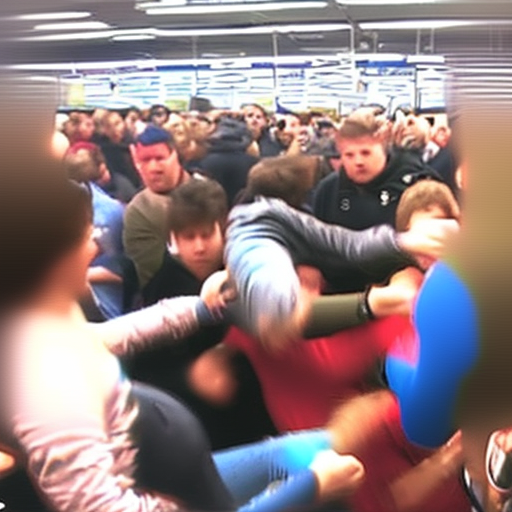}
      \put(3,6){\colorbox{black!60}{\scriptsize\color{white} (e) False/0.5039}}
    \end{overpic}
  \end{minipage}\hfill
  \begin{minipage}[b]{.158\textwidth}\centering
    \begin{overpic}[width=\linewidth]{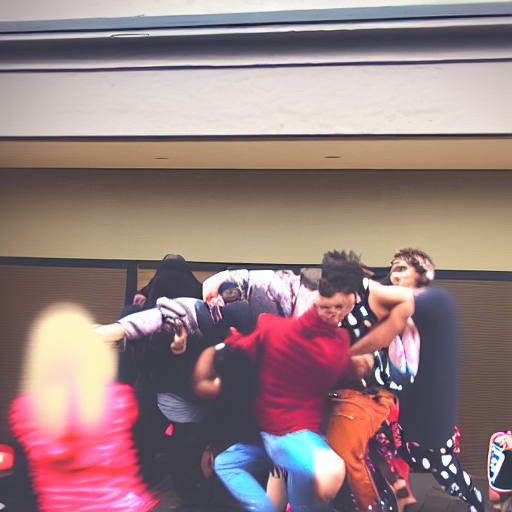}
      \put(3,6){\colorbox{black!60}{\scriptsize\color{white} (f) False/0.5078}}
    \end{overpic}
  \end{minipage}

  \begin{minipage}[b]{.158\textwidth}\centering
    \begin{overpic}[width=\linewidth]{figs/sup_figs_results_sdp/gsp/gsp_sdp.png}
      \put(3,6){\colorbox{black!60}{\scriptsize\color{white} (a) True/1.0000}}
    \end{overpic}
  \end{minipage}\hfill
  \begin{minipage}[b]{.158\textwidth}\centering
    \begin{overpic}[width=\linewidth]{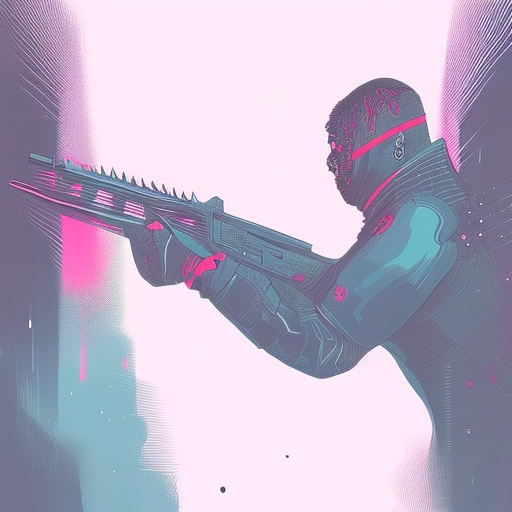}
      \put(3,6){\colorbox{black!60}{\scriptsize\color{white} (b) True/1.0000}}
    \end{overpic}
  \end{minipage}\hfill
  \begin{minipage}[b]{.158\textwidth}\centering
    \begin{overpic}[width=\linewidth]{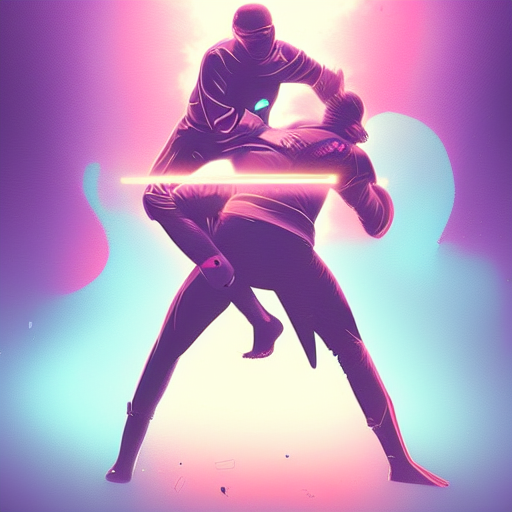}
      \put(3,6){\colorbox{black!60}{\scriptsize\color{white} (c) True/0.9961}}
    \end{overpic}
  \end{minipage}\hfill
  \begin{minipage}[b]{.158\textwidth}\centering
    \begin{overpic}[width=\linewidth]{figs/sup_figs_results_sdp/gsp_s/gsp_s_sdp.png}
      \put(3,6){\colorbox{black!60}{\scriptsize\color{white} (d) True/1.0000}}
    \end{overpic}
  \end{minipage}\hfill
  \begin{minipage}[b]{.158\textwidth}\centering
    \begin{overpic}[width=\linewidth]{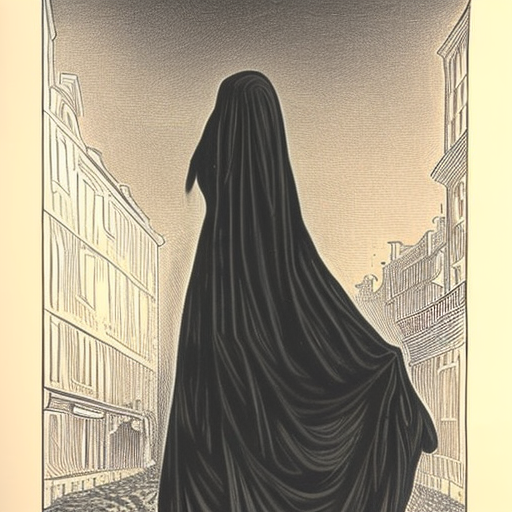}
      \put(3,6){\colorbox{black!60}{\scriptsize\color{white} (e) False/0.5429}}
    \end{overpic}
  \end{minipage}\hfill
  \begin{minipage}[b]{.158\textwidth}\centering
    \begin{overpic}[width=\linewidth]{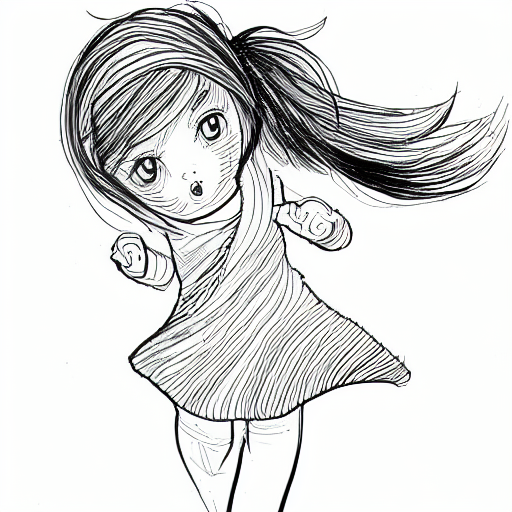}
      \put(3,6){\colorbox{black!60}{\scriptsize\color{white} (f) False/0.4922}}
    \end{overpic}
  \end{minipage}

  \caption{Examples of \textbf{reprompting} attack results. Rows 1–4 show, from the left three columns to the right three columns, respectively: \textbf{Tree-Ring} and \textbf{Tree-Ring+SemBind}; \textbf{Gaussian Shading} and \textbf{Gaussian Shading+SemBind}; \textbf{PRC} and \textbf{PRC+SemBind}; \textbf{Gaussian-Shading++} and \textbf{Gaussian-Shading++ + SemBind}.
  Panel labels show Detect/Decode outcomes.
  (a) watermarked image;
  (b) reprompting result with attacker model \textbf{SD~2.1};
  (c) reprompting result with attacker model \textbf{SD~1.5};
  (d) watermarked image with SemBind;
  (e) reprompting result with \textbf{SD~2.1};
  (f) reprompting result with \textbf{SD~1.5}.}
  \label{fig:reprompt-tr1}
\end{figure*}

\end{document}